\tikzstyle{comm}=[ellipse,draw=black,fill=gray!5]
\tikzstyle{leaf}=[]
\newcommand\blfootnote[1]{%
    \begingroup
    \renewcommand\thefootnote{}\footnote{#1}%
    \addtocounter{footnote}{-1}%
    \endgroup%
}
\newcommand{\ket}[1]{|#1\rangle}
\newcommand{\bra}[1]{\langle#1|}
\newcommand{\braket}[2]{\langle#1|#2\rangle}
\newcommand{\norm}[1]{\left\lVert#1\right\rVert}
\newcommand{\spanProgramP}{P}
\newcommand{\spann}{\mathrm{span}}
\newcommand{\posWitX}{w_x}
\newcommand{\negWitX}{\overline{w}_x}
\newcommand{\wsizePos}{\textnormal{wsize}^+}
\newcommand{\wsizeNeg}{\textnormal{wsize}^-}
\newcommand{\wsize}{\textnormal{wsize}}
\newcommand{\dtsize}{\textnormal{DTSize}}
\newcommand{\rdtsize}{\textnormal{RDTSize}}
\newcommand{\Adv}{\textnormal{Adv}}
\newcommand{\score}{\mathsf{score}}
\newcommand{\parent}{\mathsf{parent}}
\newcommand{\pparent}{\mathsf{pparent}}
\newcommand{\contract}{\mathsf{contract}}
\newcommand{\update}{\mathsf{update}}
\newcommand{\llabel}{\mathsf{label}}
\newcommand{\child}{\mathsf{child}}
\newtheorem{theorem}{Theorem}[section]
\newtheorem{corollary}[theorem]{Corollary}
\newtheorem{lemma}[theorem]{Lemma}
\newtheorem{claim}[theorem]{Claim}
\newtheorem{observation}[theorem]{Observation}
\newtheorem{definition}[theorem]{Definition}
\newcommand{\eps}{\varepsilon}
\renewcommand{\epsilon}{\varepsilon}
\newcommand{\cbra}[1]{\left\{#1\right\}}
\newcommand{\rbra}[1]{\left(#1\right)}
\newcommand{\mathify}[1]{\ifmmode{#1}\else\mbox{$#1$}\fi}
\newcommand{\zone}{\{0, 1\}}
\newcommand{\rank}{\mathsf{rank}}
\newcommand{\leaf}{\mathsf{leaf}}
\renewcommand{\root}{\mathsf{root}}
\renewcommand{\index}{\mathsf{index}}
\newcommand{\rrank}{\mathsf{rrank}}
\newcommand{\val}{\mathsf{val}}
\DeclarePairedDelimiter\abs{\lvert}{\rvert}%
\let\oldabs\abs
\def\abs{\@ifstar{\oldabs}{\oldabs*}}
\newcommand{\AND}{\mathsf{AND}}
\newcommand{\OR}{\mathsf{OR}}
\newcommand{\leaff}{\tilde{f}}
\newcommand{\ZO}{\{0,1\}}
\newcommand{\A}{\mathcal{A}}
\renewcommand{\P}{\mathcal{P}}
\newcommand{\T}{\mathcal{T}}
\newcommand{\calR}{\mathcal{R}}
\newcommand{\G}{\mathcal{G}}
\newcommand{\C}{\mathbb{C}}
\newcommand{\N}{\mathbb{N}}
\newcommand{\Q}{\mathsf{Q}}
\newcommand{\DT}{\mathsf{D}}
\newcommand{\RDT}{\mathsf{R}}
\newcommand{\OPT}{\textnormal{OPT}}
\title{Improved Quantum Query Upper Bounds Based on Classical Decision Trees}
\author{Arjan Cornelissen}
\email{ajcornelissen@outlook.com}
\affil{Simons Institute for the Theory of Computing, University of California, Berkeley, United States of America}
\author{Nikhil S.~Mande}
\email{mande@liverpool.ac.uk}
\affil{University of Liverpool, United Kingdom}
\author{Subhasree Patro}
\email{patrofied@gmail.com}
\affil{Eindhoven University of Technology, Netherlands}
\begin{document}

    \maketitle\blfootnote{An extended abstract of this work appeared in the proceedings of FSTTCS 2022~\cite{CMP22}.}\vspace{-1em}

    \begin{abstract}
        We consider the following question in query complexity: Given a classical query algorithm in the form of a decision tree, when does there exist a quantum query algorithm with a speed-up (i.e., that makes fewer queries) over the classical one?
        We provide a general construction based on the structure of the underlying decision tree, and prove that this can give us an up-to-quadratic quantum speed-up in the number of queries.
        In particular, our results give a bounded-error quantum query algorithm of cost $O(\sqrt{s})$ to compute a Boolean function (more generally, a relation) that can be computed by a classical (even randomized) decision tree of size $s$. This recovers an $O(\sqrt{n})$ algorithm for the Search problem, for example.

        Lin and Lin [Theory of Computing'16] and Beigi and Taghavi [Quantum'20] showed results of a similar flavor. Their upper bounds are in terms of a quantity which we call the ``guessing complexity'' of a decision tree. We identify that the guessing complexity of a decision tree equals its \emph{rank}, a notion introduced by Ehrenfeucht and Haussler [Information and Computation'89] in the context of learning theory. This answers a question posed by Lin and Lin, who asked whether the guessing complexity of a decision tree is related to any measure studied in classical complexity theory. We also show a polynomial separation between rank and its natural randomized analog for the complete binary AND-OR tree.

        Beigi and Taghavi constructed span programs and dual adversary solutions for Boolean functions given classical decision trees computing them and an assignment of non-negative weights to edges of the tree. We explore the effect of changing these weights on the resulting span program complexity and objective value of the dual adversary bound, and capture the best possible weighting scheme by an optimization program. We exhibit a solution to this program and argue its optimality from first principles. We also exhibit decision trees for which our bounds are asymptotically stronger than those of Lin and Lin, and Beigi and Taghavi. This answers a question of Beigi and Taghavi, who asked whether different weighting schemes in their construction could yield better upper bounds.
    \end{abstract}

    \newpage

    \tableofcontents

    \section{Introduction}
    In this paper, we address the following question: given a classical algorithm performing a task, along with a description of the algorithm, when can one turn it into a quantum algorithm and obtain a speed-up in the process?
    Of specific interest to us in this paper is the case when the classical algorithm can be efficiently represented by a decision tree. For instance, consider the problem of identifying the first marked item in a list, say $x$, of $n$ items, each of which may or may not be marked.  The input is initially unknown, and one has access to it via an \emph{oracle}. On being queried $i \in [n]$, the oracle returns whether or not $x_i$ is marked. Moreover, queries can be made in superposition.
    The goal is to minimize the number of queries in the worst case and output the correct answer with high probability for every possible input list. This problem is closely related to the Search problem, and is known to admit a quadratic quantum speed-up (its classical query complexity is $\Theta(n)$ and quantum query complexity is $\Theta(\sqrt{n})$)~\cite{Gro96, DH96, Kot14, LL16}. It is easy to construct a classical decision tree (in fact, a decision list) of depth $n$ and size (which is the number of nodes in the tree) $2n + 1$ that solves the above-mentioned problem. In view of the quadratic speed-up that quantum query algorithms can achieve for this problem, this raises the following natural question:
    Given a classical decision tree of size $s$ that computes a function, is there a bounded-error quantum query algorithm of cost $O(\sqrt{s})$ that solves the same function? Among other results, we answer this in the affirmative (see Corollary~\ref{cor:ubs}).
    We obtain our quantum query upper bounds by constructing explicit \emph{span programs} and \emph{dual adversary solutions} by exploiting the structure of the initial classical decision tree. In the discussions below, let $\Q_\eps(f)$ be the $\eps$-error quantum query complexity of $f$. When $\eps = 1/3$, we drop the subscript and call $\Q(f)$ the bounded-error quantum query complexity of $f$.

    \subsection{Span Programs}

    Span programs are a computational model introduced by Karchmer and Wigderson~\cite{KW93}. Roughly speaking, a span program defines a function depending on whether or not the ``target vector'' of an input is in the span of its associated ``input vectors''. Span programs were first used in the context of quantum query complexity by Reichardt and \v{S}palek~\cite{RS12}, and it is known that span programs characterize bounded-error quantum query complexity of Boolean functions up to a constant factor~\cite{Rei11, LMRSS11}. Span programs have been used to design quantum algorithms for various graph problems such as \textit{st}-connectivity~\cite{BR12}, cycle detection and bipartiteness testing~\cite{Ari15, CMB18}, graph connectivity~\cite{JJKP18}, and has been also used for problems such as
    formula evaluation~\cite{RS12, Rei09, JK17}. Recently, Beigi and Taghavi~\cite{BT19} defined a variant of span programs (non-binary span programs with orthogonal inputs, abbreviated NBSPwOI) for showing upper bounds on the quantum query complexity of non-Boolean input/output functions $f:[\ell]^n\rightarrow [m]$. Moreover in a follow-up work~\cite{BT20}, they also use non-binary span programs for showing upper bounds for a variety of graph problems, for example, the maximum bipartite matching problem.

    \subsection{Dual Adversary Bound}
    The general adversary bound for Boolean functions $f$ (Equation~\eqref{eq:GeneralisedAdversarySolution}) was developed by H\o yer, Lee and \v{S}palek~\cite{HLS07}, and they showed that this quantity gives a lower bound on the bounded-error quantum query complexity of $f$. It was eventually shown that this bound actually characterizes the bounded-error quantum query complexity of $f$ up to a constant factor~\cite{Rei11, LMRSS11}. Quantum query algorithms have been developed by explicitly constructing dual adversary solutions, for example for the well-studied $k$-distinctness problem~\cite{Bel12}, \textit{S}-isomorphism and hidden subgroup problems~\cite{Bel19}, gapped group testing~\cite{ABRW16}, etc. The general adversary bound can be expressed as a semidefinite program that admits a dual formulation. Thus, feasible solutions to the dual adversary program yield quantum query upper bounds.

    \subsection{Related Works}
    Lin and Lin~\cite{LL16} showed how a classical algorithm computing a function $f:D_f \rightarrow \mathcal{R}$ with $D_f \subseteq \ZO^n$, equipped with an efficient `guessing scheme', could be used to construct a faster quantum query algorithm computing $f$. Moreover, their results apply to the setting where $f \subseteq \zone^n \times \mathcal{R}$ is a relation. A deterministic algorithm computing a relation $f$ takes an input $x \in \zone^n$ and outputs a value $b$ such that $(x,b) \in f$. More precisely, they showed the following:\footnote{For ease of readability, we state the theorem for deterministic decision trees. Lin and Lin actually showed a stronger statement that holds even if one considers randomized decision trees and randomized guessing algorithms.}

    \begin{theorem}[Lin and Lin~{\cite[Theorem 5.4]{LL16}}]\label{thm: linlin}
        Let $f \subseteq \zone^n \times \calR$ be a relation. Let $\A$ be a deterministic algorithm computing $f$ that makes at most $T$ queries. Let $x_{p_1}, \dots, x_{p_{\tilde{T}(x)}}$ be the query results of $\A$ on an a priori unknown input string $x$. For $x \in \zone^n$, let $\tilde{T}(x) \leq T$ denote the number of queries that $\A$ makes on input $x$. Suppose there is another deterministic algorithm $\G$ which takes as input $x_{p_1}, \dots, x_{p_{t - 1}} \in \zone^{t-1}$ for any $t \in [T]$, and outputs a guess for the next query result of $\A$. Assume that $\G$ makes at most $G$ mistakes for all $x$. That is,
        \[
        \forall x \in \zone^n, \quad \sum_{t = 1}^{\tilde{T}(x)} \abs{\G(x_{p_1}, \dots, x_{p_{t - 1}}) - x_{p_t}} \leq G.
        \]
        Then,
        \[
        \Q(f) = O(\sqrt{TG}).
        \]
    \end{theorem}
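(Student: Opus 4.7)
The plan is to convert the pair $(\A, \G)$ into a quantum algorithm that walks along a ``predicted execution path'' through $\A$'s decision tree while using Grover-style search to locate and correct the positions at which $\G$ guessed wrong. The key observation is that if we classically simulate $\A$ but feed it $\G$'s guesses in place of the actual query answers, we obtain a fully deterministic execution that, on input $x$, coincides with $\A$'s true execution up to (at most $G$) ``mistake positions.'' Thus, reproducing $\A(x)$ amounts to locating and patching at most $G$ deviations along a path of length at most $T$.

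Concretely, I maintain a counter $t$ tracking progress along the true execution path together with the transcript of true query answers seen so far. At each iteration I classically extend the predicted path from its current endpoint by running $\A$ and $\G$ (no oracle calls), yielding predicted query indices $q_{t+1}, q_{t+2}, \dots$ with guesses $g_{t+1}, g_{t+2}, \dots$. Then I quantumly search for the smallest $i$ with $x_{q_{t+i}} \neq g_{t+i}$ via Grover-based minimum finding, using exponential doubling of the search window $L = 1, 2, 4, \dots$ (since the location of the next mistake is unknown in advance). When such an $i$ is found, I query $x_{q_{t+i}}$ classically, update $t$ and the transcript, and loop; if the predicted path reaches a leaf without a mistake being found, I output $\A$'s verdict. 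Letting $0 = s_0 < s_1 < \dots < s_k$ with $k \leq G$ be the indices of successive mistakes along the true path and $s_{k+1} = \tilde T(x)$, the $i$-th iteration costs $O(\sqrt{L_i})$ for $L_i = s_i - s_{i-1}$. Since $\sum_{i=1}^{k+1} L_i \leq T$ and $k + 1 \leq G + 1$, Cauchy--Schwarz gives
\[
\sum_{i=1}^{k+1} \sqrt{L_i} \;\leq\; \sqrt{(k+1)\sum_{i=1}^{k+1} L_i} \;\leq\; \sqrt{(G+1)\,T} \;=\; O(\sqrt{GT}),
\]
which is the desired upper bound on queries.

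The main obstacle I anticipate is controlling the cumulative failure probability so as to avoid an extraneous $\log G$ factor. Each Grover/minimum-finding call is bounded-error, and naively boosting each of the $G+1$ calls to error $O(1/G)$ costs $O(\sqrt{\log G})$ per search, which Cauchy--Schwarz then amplifies to a $\sqrt{\log G}$ factor overall. The cleanest remedy is to use a form of variable-time or fixed-point amplitude amplification in the exponential-doubling subroutine so that each sub-search achieves constant success probability with only constant overhead, thereby preserving the $O(\sqrt{TG})$ bound; a secondary issue, easily handled, is correctly terminating the exponential doubling once $L$ exceeds the remaining predicted-path length by letting the search explicitly return ``no mistake'' when the candidate set is exhausted.
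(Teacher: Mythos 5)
Your high-level algorithm is the same one the paper sketches for Lin and Lin's Theorem 5.4: iteratively run a modified Grover search (with exponential doubling) to find the next position where $\G$ guesses wrong, repeat this at most $G+1$ times, and apply Cauchy--Schwarz to the sub-search costs $\sqrt{L_i}$. Your Cauchy--Schwarz calculation is correct, as is the observation that the extra $O(G)$ classical queries after each found mistake are absorbed since $G \le T$ implies $G = O(\sqrt{GT})$. You also correctly identify the one genuinely delicate point, which is keeping the cumulative failure probability bounded without incurring a $\sqrt{\log G}$ overhead.

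However, the remedy you propose for that delicate point does not work. You suggest that ``variable-time or fixed-point amplitude amplification'' lets ``each sub-search achieve constant success probability with only constant overhead,'' preserving the bound. But constant success probability per sub-search is precisely the thing that is \emph{not} sufficient: there are up to $G+1$ sequential calls, each of which must succeed for the final answer to be correct, so with constant per-call error the overall success probability degrades like $c^{G+1}$, which is exponentially small in $G$. To make a union bound work one needs per-call error $O(1/G)$, which is the source of the $\log G$ factor. Fixed-point amplitude amplification removes the souffl\'e overshooting problem but still pays a multiplicative $\Theta(\log(1/\epsilon))$ to reach error $\epsilon$, and variable-time amplitude amplification does not straightforwardly apply to a sequential composition of $G+1$ dependent subroutines, so neither gives constant overhead at error $O(1/G)$.

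The actual fix, which the paper flags in the footnote and attributes to Kothari (and which Lin and Lin use in their span-program proof), is to avoid error accumulation entirely by working at the level of span programs (or, equivalently, dual adversary solutions): implement the ``find the next mistake'' subroutine as a span program, compose the $G+1$ span programs so that their witness complexities simply add to $O(\sqrt{TG})$, and only convert the final composed span program into a bounded-error quantum algorithm once, at the end. This is exactly the route the rest of the present paper develops via the Beigi--Taghavi construction and Program~\ref{program: weights}, culminating in Theorem~\ref{thm: weights in program give query upper bound} and Corollary~\ref{cor:ubs}, which subsume Theorem~\ref{thm: linlin}. To repair your proof, you should either invoke Kothari's ``find the first marked index'' span-program subroutine (whose composition has additive, not multiplicative, witness cost) or replace the iterative Grover argument altogether by the dual-adversary/NBSPwOI construction of Section~\ref{sec: bt span program}.
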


    Lin and Lin provided two proofs of the above theorem, one of which involved constructing an explicit quantum query algorithm. This algorithm is iterative, and works as follows:
    In the $i$'th iteration, use a modified version of Grover's search algorithm to find the next mistake that $\G$ makes. Since the algorithm uses at most $T$ queries and makes at most $G$ mistakes on any input, the quantum query complexity of the final algorithm can be shown to be $O(\sqrt{TG})$ using the Cauchy-Schwarz inequality.\footnote{We skip some crucial details here, such as the cost of the modified Grover's search algorithm, and an essential step that uses span programs to avoid a logarithmic overhead that error reduction would have incurred. Techniques that address both of the above-mentioned steps are due to Kothari~\cite{Kot14}.}

    Recently, Beigi and Taghavi~\cite{BT20} gave an alternate proof of Theorem~\ref{thm: linlin} using the framework of \emph{non-binary span programs} introduced by them in~\cite{BT19}. Using this they showed that a similar statement also holds for functions $f : [\ell]^n \to \calR$ with non-binary inputs and outputs.
    A sketch of their proof is as follows: Given an assignment of real weights to the edges of the given decision tree computing $f$, they construct a dual adversary solution and span program. The complexity of the resultant query algorithm is a function of the weights assigned to the edges. They propose a particular weighting scheme based on an edge-coloring (guessing algorithm) of the given decision tree. For Boolean functions, the quantum query complexity upper bound they obtain matches the bound of Lin and Lin (Theorem~\ref{thm: linlin}), which is $O(\sqrt{TG})$, where $T$ denotes the depth and $G$ the \emph{guessing complexity} (see Definition~\ref{def:guessing-complexity}) of the underlying decision tree, respectively.

    In a follow-up work~\cite{BTT21}, conditional on some constraints, Beigi, Taghavi and Tajdini implement the span-program-based algorithm of~\cite{BT20} in a time-efficient manner. Recently, Taghavi~\cite{Tag22} used non-binary span programs to give a tight quantum query algorithm for the oracle identification problem, simplifying an earlier algorithm due to Kothari~\cite{Kot14}.

    \subsection{Our Contributions}

    We observe here that the guessing complexity of a deterministic decision tree equals its \emph{rank} (see Definition~\ref{defn: rank} and Claim~\ref{claim: gcoloring equals rank}), which is a combinatorial measure introduced by Ehrenfeucht and Haussler~\cite{EH89} in the context of learning theory. %Analogously, the guessing complexity of a randomized decision tree is the natural randomized analog of rank, which we denote by \emph{randomized rank}.
    This answers a question of Lin and Lin~\cite[page 4]{LL16}, where the authors asked if $G$ is related to any combinatorial measure studied in classical decision-tree complexity. The rank of a function (which is the minimum rank of a decision tree computing it) can be unboundedly smaller than the function's certificate complexity, sensitivity, block sensitivity, and even (exact or approximate) polynomial degree, as can be easily witnessed by the $\OR$ function. See, for example,~\cite{DM21} for more relationships between rank and other combinatorial measures of Boolean functions. In view of the above-mentioned equivalence of $G$ and decision tree rank, Theorem~\ref{thm: linlin} has a clean equivalent formulation as follows.
    \begin{theorem}\label{thm: qq upper bound in terms of rank}
        Let $\T$ be a decision tree computing a relation $f \subseteq \zone^n \times \calR$ with depth $T$ and rank $G$. Then,
        \[
        \Q(f) = O(\sqrt{TG}).
        \]
    \end{theorem}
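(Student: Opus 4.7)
The plan is to derive Theorem~\ref{thm: qq upper bound in terms of rank} as an immediate consequence of Theorem~\ref{thm: linlin} together with the identification (recorded as Claim~\ref{claim: gcoloring equals rank}) that the guessing complexity of a deterministic decision tree coincides with its rank. Given a decision tree $\T$ of depth $T$ and rank $G$, one simply takes the guessing algorithm $\G$ witnessing this equivalence, plugs it into Theorem~\ref{thm: linlin}, and reads off $\Q(f) = O(\sqrt{TG})$. All of the quantum algorithmic work is already bundled inside Theorem~\ref{thm: linlin}, so the only thing to establish from first principles is the combinatorial equivalence between guessing complexity and rank.

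The substantive step is therefore the equivalence \emph{guessing complexity equals rank} for deterministic decision trees, which I would prove by induction on the structure of $\T$ in both directions. For the upper bound, I construct a guessing algorithm $\G$ top-down: at each internal node $v$ with children $v_0, v_1$ whose subtrees have ranks $r_0, r_1$, let $\G$ predict the branch $b \in \zone$ with the larger $r_b$ (ties broken arbitrarily). If the actual query answer matches the prediction, $\G$ incurs no mistake and the computation descends into a subtree of rank at most $\rank(\T)$; if it does not match, one mistake is incurred, and by the recursive definition of rank the computation descends into a subtree of rank strictly smaller than $\rank(\T)$. Either way, the total number of mistakes along any root-to-leaf path is bounded by $\rank(\T)$ by induction.

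For the matching lower bound, I would run an adversarial argument against any candidate guessing algorithm $\G$, constructing the input bit by bit as $\G$ descends the tree. At a node $v$ whose two subtrees have equal rank $r$ (so $\rank(v) = r+1$), the adversary routes the computation into the subtree opposite to $\G$'s prediction, charging one mistake now and inductively forcing $r$ more; at a node whose subtrees have distinct ranks $r_0 \neq r_1$, the adversary always descends into the larger-rank subtree, incurring $0$ or $1$ mistake here and $\max(r_0, r_1) = \rank(v)$ more mistakes by induction. Either way the adversary forces at least $\rank(v)$ mistakes in the sub-computation rooted at $v$, and thus at least $\rank(\T)$ mistakes overall.

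The main obstacle I anticipate is the bookkeeping at equal-rank nodes: both directions of the equivalence hinge on the observation that at such nodes the guesser gains exactly one mistake no matter what it predicts, which is precisely the combinatorial content of the $+1$ in the rank recursion. One also has to verify that the guessing algorithm produced by the upper bound argument depends only on previously observed query answers (not on the yet-unknown input), which is clear from the top-down construction. Once these points are pinned down the theorem follows without any further quantum reasoning.
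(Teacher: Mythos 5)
Your proposal is correct and takes essentially the same route as the paper. The paper's proof of this theorem consists precisely of (i) invoking Lin and Lin's Theorem~\ref{thm: linlin} and (ii) establishing Claim~\ref{claim: gcoloring equals rank}, which asserts the equality of guessing complexity (there phrased via G-colorings — a two-coloring of edges with at most one black outgoing edge per internal node) and rank. Your top-down guesser that always predicts the larger-rank branch is exactly the optimal G-coloring the paper constructs for the upper direction, and your adversary argument is the standard contrapositive of the paper's observation that any G-coloring, restricted to the two subtrees, forces a red edge to be prepended to an already-bad path whenever the subtree ranks tie. The paper packages both directions into a clean recursive identity for $G(\T)$ that visibly matches the recursion defining rank, whereas you unwind the two inequalities as separate inductions on tree structure; this is a presentational difference, not a mathematical one. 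One small bookkeeping point worth making explicit in your write-up: in the upper-bound induction, a correct guess at a node with unequal child ranks descends into a subtree of the \emph{same} rank, so the induction must run on tree size (or depth), not on rank alone — your argument implicitly does this, but it is worth stating.
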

    We introduce a new measure, \emph{randomized rank} of randomized decision trees (see Definition~\ref{defn: rank}), which we denote by $\rrank$ and which is a natural probabilistic analog of rank. This exactly captures the notion of the randomized analog of the value of $G$ in Theorem~\ref{thm: linlin}.  It is easy to show with this definition that our proof of Theorem~\ref{thm: qq upper bound in terms of rank} also holds with respect to randomized decision trees and randomized rank. Thus we obtain the following easy-to-state reformulation of~\cite[Theorem~5.4]{LL16}.
    \begin{theorem}\label{thm: qq upper bound in terms of randomized rank}
        Let $\T$ be a randomized decision tree computing a relation $f \subseteq \zone^n \times \calR$ with depth $T$ and randomized rank $G$. Then,
        \[
        \Q_{2/5}(f) = O(\sqrt{TG}).
        \]
    \end{theorem}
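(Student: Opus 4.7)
The plan is to observe that Theorem~\ref{thm: qq upper bound in terms of randomized rank} follows from the same two-step strategy that yields Theorem~\ref{thm: qq upper bound in terms of rank}, adapted in the obvious way to the randomized setting. The two ingredients are (i) the randomized strengthening of Theorem~\ref{thm: linlin} that Lin and Lin explicitly flag in their footnote, and (ii) the randomized analog of the identification of guessing complexity with rank (Claim~\ref{claim: gcoloring equals rank}).

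First, I would state the randomized strengthening of Theorem~\ref{thm: linlin}: if $\A$ is a randomized decision tree of depth at most $T$ computing $f$ with constant error, and $\G$ is a (possibly randomized, possibly correlated with $\A$) guessing algorithm such that for every input $x$ the expected number of mistakes $\G$ makes along the execution path of $\A$ on $x$ is at most $G$, then there is a quantum algorithm making $O(\sqrt{TG})$ queries that computes $f$ with error at most $2/5$. This is the assertion Lin and Lin sketch in their footnote; the proof is the Kothari-style iterated amplitude-amplification argument combined with a Cauchy--Schwarz step that now bounds $\E[\sum_i \sqrt{\tilde{T}_i G_i}]$ by $\sqrt{T\cdot G}$ via Jensen. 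The $2/5$ error arises by taking the decision tree to have error at most $1/3$, running the quantum routine with a sufficiently small constant error budget, and composing via a union bound; a Markov argument on the random mistake count additionally converts the expected-mistake bound into a high-probability mistake bound suitable for Grover-type search.

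Second, I would lift Claim~\ref{claim: gcoloring equals rank} to the randomized setting: the optimal expected mistake count of a randomized guessing scheme for a randomized decision tree $\T$ equals the randomized rank $\rrank(\T)$ of Definition~\ref{defn: rank}. This follows by averaging: a randomized decision tree is a distribution $\mu$ over deterministic trees, and for each deterministic tree in the support one takes the edge-coloring that realizes its rank (via Claim~\ref{claim: gcoloring equals rank}); the resulting randomized guessing scheme has, on every input $x$, expected mistake count equal to $\E_{T\sim\mu}[\text{rank-matching mistakes on }x] \leq \E_{T\sim\mu}[\text{rank}(T)]$, and by the definition of $\rrank$ the best such $\mu$ achieves exactly the randomized rank. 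The reverse inequality is analogous.

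Combining these two steps with $T = \text{depth}(\T)$ and $G = \rrank(\T)$ gives the claimed bound $\Q_{2/5}(f) = O(\sqrt{TG})$. The substantive mathematical content is already present in the deterministic version, so I do not anticipate a genuine obstacle; the only delicate points are the careful accounting of the error budget (justifying precisely the $2/5$ threshold) and verifying that the randomized guessing scheme can be taken to be correlated with the randomness of $\T$ in the way the randomized Lin--Lin statement requires. Both are routine once the deterministic proof is in hand.
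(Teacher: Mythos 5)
Your proposal and the paper's proof of Theorem~\ref{thm: qq upper bound in terms of randomized rank} take genuinely different routes. The paper gives a short reduction to the \emph{deterministic} statement (Theorem~\ref{thm: qq upper bound in terms of rank}): sample a deterministic tree $\T_\mu$ from the support of $\T$, reduce the quantum error to $1/10$ by standard repetition (which is valid because $\T_\mu$ computes a \emph{function}), run the resulting $O(\sqrt{TG})$-query algorithm, and note that the success probability is at least $(2/3)\cdot(9/10) = 3/5$. You instead invoke the full \emph{randomized} version of Lin and Lin's Theorem~5.4, construct a randomized guessing scheme by G-coloring each tree in the support, and feed the expected mistake count into that theorem. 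Both yield the bound, but the paper's route stays entirely within the machinery that has already been set up in the paper (the deterministic Theorem~\ref{thm: qq upper bound in terms of rank}) and has a one-line error accounting, whereas your route leans on a stronger Lin--Lin statement that the paper deliberately omitted ``for ease of readability'' and needs additional Jensen/Markov bookkeeping.

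One inaccuracy to flag, though it does not sink the argument: you assert that the optimal expected mistake count of a randomized guessing scheme \emph{equals} $\rrank(\T)$. Under Definition~\ref{defn: rank}, randomized rank is the \emph{maximum} rank of any deterministic tree in the support, not an expectation, so the correct statement is only an inequality (the expected mistake count is at most $G$; it can be strictly smaller). Because randomized rank is a worst-case quantity, the guessing scheme you describe in fact makes at most $G$ mistakes \emph{deterministically} on every input and for every sampled tree, so the appeal to Markov (to upgrade an expected bound to a high-probability one) is unnecessary and introduces a constant-factor loss you would then have to track against the advertised $2/5$. The paper sidesteps all of this by not going through expectations at all. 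Your error-budget discussion is also vaguer than the paper's explicit $(2/3)\cdot(9/10)=3/5$ computation; as written it is not clear that you land exactly on $2/5$ rather than some other constant.
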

    Thus, up to a small loss in the success probability, upper bounds obtained from Theorem~\ref{thm: qq upper bound in terms of randomized rank} are strictly stronger than those obtained from Theorem~\ref{thm: qq upper bound in terms of rank} for relations whose randomized rank is much smaller than their rank. Hence a natural question is if this can be the case.

    It is easy to exhibit maximal separations between rank and randomized rank for partial functions. One such separation is witnessed by the Approximate-Majority function, which is the function that outputs 0 if the Hamming weight of an $n$-bit input is less than $n/3$ and outputs 1 if the Hamming weight of the input is more than $2n/3$. It is easy to show an $\Omega(n)$ lower bound on its rank and an $O(1)$ upper bound on its randomized rank (see Claim~\ref{claim: apxmaj rank rrank}). Whether or not such a separation holds for total Boolean functions is not so clear. We show a polynomial separation for the complete binary AND-OR tree. This is the first of our main theorems.

    \begin{theorem}\label{thm: rank rrank separation nand tree}
        For the complete binary AND-OR tree $F : \zone^n \to \zone$,
        \[\rrank(F) = O\rbra{\rank(F)^{\log\frac{1 + \sqrt{33}}{4}}} \approx O\rbra{\rank(F)^{.753\dots}}.\]
    \end{theorem}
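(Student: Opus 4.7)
The plan is to bound $\rrank(F)$ from above and $\rank(F)$ from below via separate recursive arguments whose exponents are related by the factor $\alpha = \log\frac{1+\sqrt{33}}{4}$. Write $F = T_d$ for the complete binary AND-OR tree of depth $d$ (so $n = 2^d$).

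For the upper bound on $\rrank(T_d)$, I would adapt the Saks--Wigderson randomized algorithm for evaluating the AND-OR tree: at each internal gate, pick one of the two subtrees uniformly at random and recursively evaluate it first, evaluating the second only when the first fails to determine the gate's output. I would view this as a distribution over deterministic decision trees, and equip each tree in the support with the natural guessing scheme that at each variable query predicts the value consistent with the ancestor gates short-circuiting (at an AND-branch hoping for a $0$, predict every descendant OR to be $0$ and each descendant variable to be $0$; dually for OR). Letting $R(d)$ denote the worst-case-over-inputs expected number of mistakes, conditioning on which subtree is evaluated first and stratifying by the values of the two subtrees yields coupled Saks--Wigderson-type recurrences whose characteristic equation is $2\beta^2 = \beta + 4$, with positive root $\beta = (1+\sqrt{33})/4$, giving $\rrank(T_d) = O(\beta^d)$.

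For a matching lower bound on $\rank(T_d)$, I would run an adversary argument whose recursive structure mirrors the upper bound. Given any deterministic decision tree $D$ for $T_d$ together with any guessing scheme $\G$, I would inductively construct a worst-case input $x$: at each AND or OR gate encountered along $D$'s traversal, the adversary fixes variables so as to force $D$ to evaluate both subtrees fully, then inherits a worst-case sub-input for each subtree from the induction hypothesis. Aggregating mistakes across both subtrees yields a lower-bound recurrence on $\rank(T_d)$ with the same characteristic form, and hence a matching exponent.

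Combining the two bounds gives $\rrank(F) = O(\rank(F)^{\alpha})$, as claimed. The main obstacle will be the lower bound on $\rank(T_d)$: the adversary has to delicately trade off matching the guesses (to keep $D$ from short-circuiting, which would otherwise cap the mistake count) against opposing the guesses (to accumulate mistakes), consistently with the recursive AND-OR structure. Getting the resulting lower-bound exponent to match the upper-bound exponent from the Saks--Wigderson analysis is where the delicate combinatorics will lie.
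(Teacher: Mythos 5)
There is a genuine gap, and it is in the overall structure of the argument rather than in any single technical step. Unwinding the exponents: with $n = 2^d$ and $\alpha = \log\frac{1+\sqrt{33}}{4}$, your upper bound gives $\rrank(F) = O(\beta^d) = O(n^{\alpha})$. To conclude $\rrank(F) = O(\rank(F)^{\alpha})$ you therefore need $\rank(F)^{\alpha} = \Omega(n^{\alpha})$, i.e.\ a \emph{linear} lower bound $\rank(F) = \Omega(n)$. Instead, you aim for a rank lower bound whose recurrence has ``the same characteristic form'' and ``a matching exponent'' as the Saks--Wigderson recurrence, i.e.\ $\rank(F) = \Omega(n^{\alpha})$. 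Combining $\rrank(F) = O(n^{\alpha})$ with $\rank(F) = \Omega(n^{\alpha})$ only yields $\rrank(F) = O(\rank(F))$, which is trivially true for every relation (a deterministic tree is a point-mass randomized tree, so $\rrank \le \rank$ always) and does not imply the theorem. The ``delicate combinatorics'' you flag --- forcing the lower-bound exponent to match $\alpha$ --- is exactly the wrong target: matching exponents would mean there is no separation at all. The truth is that $\rank(F) = (n+2)/3$ exactly, which the paper establishes via the equivalence of rank with the value of the Pudl\'ak--Impagliazzo Prover--Delayer game (Claim~\ref{claim: rank equals Prover Delayer game value}) and a progress-measure analysis of explicit Delayer and Prover strategies on the reduced AND-OR tree; your adversary-against-a-guessing-scheme framework is morally the same as the Delayer's side of that game, but it must be driven toward an $\Omega(n)$ bound (characteristic root $2$, not $\beta$).

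A secondary, more minor issue: your upper bound controls the worst-case-over-inputs \emph{expected} number of guessing mistakes, but randomized rank as defined here is the \emph{maximum} rank of a deterministic tree in the support of the randomized tree, and an expectation over the algorithm's coins does not bound that maximum. The clean route, which is what the paper uses, is to note that the rank of any deterministic tree is at most its depth, so $\rrank(F) \le \RDT(F) = O(n^{\alpha})$ directly from Saks--Wigderson (after truncating to worst-case depth); no guessing-scheme analysis of the randomized algorithm is needed on the upper-bound side.
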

    To prove this theorem, we show a rank lower bound of $\Omega(n)$ using known connections between rank and Prover-Delayer games~\cite{PI00}, and the randomized-rank upper bound immediately follows from an upper bound of $O\rbra{n^{\log\frac{1 + \sqrt{33}}{4}}}$ on its randomized decision-tree complexity~\cite{SW86}.

    Beigi and Taghavi~\cite[Section 6]{BT20} asked if one could improve their results by using different choices of weights in their constructions of span programs and dual adversary vectors.
    For decision trees that make queries to a bit string, we answer this question in the affirmative by providing a weighting scheme that improves upon their bounds. We exhibit a family of decision trees for which our bounds are strictly stronger than those of Beigi and Taghavi, and Lin and Lin (see Corollary~\ref{cor:ubs} and the discussion below it). We argue that optimizing the dual adversary bound and the witness complexity of Beigi and Taghavi's span program with variable weights is a minimization program with constraints linear in the variables and inverses of the variables. We give a weighting scheme and moreover we show in Theorem~\ref{thm:weight-optimality} that our weighting scheme is optimal, thus subsuming Theorem~\ref{thm: linlin}.

    Beigi and Taghavi's construction also works for decision trees that query a non-Boolean string. They adapt the weighting scheme from the Boolean case by associating one weight to a single query outcome, and the other weight to all the other query outcomes. One can adapt the weighting scheme we introduce in this work to the non-Boolean setting in a similar manner, however we do not prove optimality of the weighting scheme in this case. We leave a possibly tighter generalization to the non-Boolean setting to future work.

    %By a careful analysis,  We give a weighting scheme and show its optimality from first principles (see Theorem~\ref{thm:weight-optimality}), thus subsuming Theorem~\ref{thm: linlin}.

    For a relation $f \subseteq \zone^n \times \calR$ and a deterministic decision tree $\T$ computing it, let $\leaff$ be the function that takes an $n$-bit string $x$ as input, and outputs the leaf of $\T$ reached on input $x$. Let $P(\T)$ denote the set of root-to-leaf paths in $\T$. For a path $P \in P(\T)$, let $\overline{P}$ denote the set of edges that have exactly one vertex in common with $P$.
    \begin{definition}[Weight optimization program]
        \label{def:weight-program}
        For a decision tree $\T$, define its \emph{weight optimization program} by the minimization problem with constraints outlined in Program~\ref{program: weights}. Let $\OPT_{\T}$ denote the optimum of this program.
    \end{definition}

    \begin{table}[H]
        \centering~
        \begin{tabular}{|llllll|}
            \hline
            Variables & $\cbra{W_e: e~\textnormal{is an edge in}~\T}, \alpha, \beta$ & & & & \\
            Minimize  & $\sqrt{\alpha\beta}$ & & & & \\
            s.t. & $\sum_{e \in \overline{P}}W_e$ & $\leq \alpha,$ & & for all paths $P \in P(\T)$ & \\
            & $\sum_{e \in P}\frac{1}{W_e}$ & $\leq \beta,$ & & for all paths $P \in P(\T)$ & \\
            & $W_e$ & $> 0,$ & & for all edges $e$ in $\T$ & \\
            & $\alpha,\beta$ & $\geq 0.$ & & & \\
            \hline
        \end{tabular}
        \caption{\label{program: weights} The \textit{weight optimization program} capturing the weight assignment to edges of $\T$ that optimizes the witness complexity of the NBSPwOI and dual adversary vector constructions of Beigi and Taghavi (see~Section~\ref{sec: bt span program}).}
    \end{table}

    The following is our second main theorem.
    \begin{theorem}\label{thm: weights in program give query upper bound}
        Let $f \subseteq \zone^n \times \calR$ be a relation and let $\T$ be a decision tree computing $f$. Then,
        \[
        \Q(\leaff) = O(\OPT_{\T}).
        \]
    \end{theorem}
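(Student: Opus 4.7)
The plan is to plug any feasible solution $(\{W_e\}, \alpha, \beta)$ of Program~\ref{program: weights} into the NBSPwOI/dual-adversary construction of Beigi and Taghavi that is reviewed in Section~\ref{sec: bt span program}. That construction takes as input a decision tree $\T$ computing $\leaff$ together with an assignment of strictly positive real weights to the edges of $\T$, and outputs an explicit span program (equivalently, an explicit feasible dual adversary solution) that computes $\leaff$. The quantum query complexity one reads off is $O(\sqrt{\wsizePos \cdot \wsizeNeg})$, since the general adversary bound characterizes $\Q(\leaff)$ up to a constant factor~\cite{Rei11, LMRSS11}.

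The crux is to identify $\wsizePos$ and $\wsizeNeg$ with the two path-indexed quantities appearing in Program~\ref{program: weights}. For any input $x \in \zone^n$, let $P_x \in P(\T)$ be the unique root-to-leaf path traced by $x$. Unpacking the Beigi-Taghavi construction, the positive witness for $x$ is supported on the edges of $P_x$ with coefficients $1/W_e$, so that its squared norm is exactly $\sum_{e \in P_x}1/W_e$. The negative witness for $x$ is instead supported on the sibling edges of $P_x$, i.e., the edges in $\overline{P_x}$, with coefficients $W_e$, giving squared norm $\sum_{e \in \overline{P_x}}W_e$. Taking worst cases over $x$, and hence over $P \in P(\T)$, the two families of constraints of Program~\ref{program: weights} precisely assert
\[
\wsizePos = \max_{P \in P(\T)}\sum_{e \in P}\frac{1}{W_e} \leq \beta, \qquad \wsizeNeg = \max_{P \in P(\T)}\sum_{e \in \overline{P}}W_e \leq \alpha.
\]
Substituting into $\Q(\leaff) = O(\sqrt{\wsizePos \cdot \wsizeNeg})$ gives $\Q(\leaff) = O(\sqrt{\alpha\beta})$, and minimizing over all feasible $(\{W_e\}, \alpha, \beta)$ yields $\Q(\leaff) = O(\OPT_{\T})$.

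The routine parts of this plan are the final optimization step and the verification that Program~\ref{program: weights} has a feasible solution achieving its infimum (which follows because any finite decision tree admits, e.g., the all-ones assignment, and because the constraints define a closed feasible region once one bounds $W_e$ away from $0$ and $\infty$). The main obstacle, and the step that will consume most of the proof, is the rigorous verification of the witness-size identification above: one must write down the target vector and the input vectors of the Beigi-Taghavi span program in terms of the chosen weights $W_e$, exhibit positive and negative witness vectors for every $x$, and compute their squared norms edge by edge along $P_x$ and across $\overline{P_x}$. The accounting is delicate because each internal node of $\T$ contributes both to the positive witness for inputs descending into one of its children and to the negative witness for inputs descending into its siblings, and one must check that the contributions balance to yield exactly the sums $\sum_{e \in P}1/W_e$ and $\sum_{e \in \overline{P}}W_e$ rather than off-by-one variants. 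Once this calculation is carried out, no further ingredients are needed.
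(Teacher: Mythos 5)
Your proposal is correct and follows essentially the same route as the paper: plug an arbitrary feasible weight assignment into the Beigi--Taghavi NBSPwOI (or, equivalently, dual adversary) construction, verify that the positive and negative witness sizes equal $\max_{P}\sum_{e\in P}1/W_e$ and $\max_{P}\sum_{e\in\overline{P}}W_e$ respectively, and invoke the fact that the witness size upper-bounds $\Q(\leaff)$. The only cosmetic imprecision is that the witness coefficients are $1/\sqrt{W_e}$ and $\sqrt{W_e}$ (not $1/W_e$ and $W_e$), but your stated squared norms are the correct ones, so the argument goes through exactly as in the paper.
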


    Our results give us a new way to bound $\OPT_\T$ and thus $\Q(\leaff)$ from above. Combined with the earlier results from Beigi and Taghavi, this gives us the following corollary. For formal definitions of the measures below, see Section~\ref{sec: prelims}.

    \begin{corollary}
        \label{cor:ubs}
        Let $f \subseteq \zone^n \times \calR$ be a relation and let $\T$ be a deterministic decision tree computing it, weighted with the canonical weight assignment as defined in Definition~\ref{def:weight-assignment}. Then, the quantum query complexity of $\leaff$ (in fact $\OPT_\T$) satisfies the following two bounds:
        \begin{enumerate}
            \item The rank-depth bound: $\Q(\leaff) = O\rbra{\sqrt{\rank(\T)\textnormal{depth}(\T)}}$.
            \item The size bound: $\Q(\leaff) = O\rbra{\sqrt{\dtsize(\T)}}$.
        \end{enumerate}
    \end{corollary}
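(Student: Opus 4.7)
By Theorem~\ref{thm: weights in program give query upper bound} we have $\Q(\leaff)=O(\OPT_\T)$, so it suffices to exhibit, for each of the two targets, feasible values $(W_e,\alpha,\beta)$ of Program~\ref{program: weights} whose objective $\sqrt{\alpha\beta}$ meets the target; the canonical weight assignment of Definition~\ref{def:weight-assignment} is the concrete instantiation. My plan is to establish the two bounds by two largely independent analyses of such weightings.

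For the rank--depth bound I would import the equivalence between the rank of $\T$ and its guessing complexity (Claim~\ref{claim: gcoloring equals rank}), which furnishes a 2-colouring of the edges of $\T$ in which every root-to-leaf path $P$ contains at most $r:=\rank(\T)$ ``wrong-guess'' edges and at most $d-r$ ``correct-guess'' edges, where $d:=\mathrm{depth}(\T)$. On the complementary side $\overline{P}$, at each internal node of $P$ the sibling edge carries the opposite colour, so $\overline{P}$ contains at most $r$ correct-guess and at most $d-r$ wrong-guess edges. Assigning the canonical weight $W_R$ to correct-guess edges and $W_W$ to wrong-guess edges then yields
\[
\beta=\tfrac{d-r}{W_R}+\tfrac{r}{W_W},\qquad \alpha=rW_R+(d-r)W_W,
\]
and optimising the ratio $W_R/W_W\approx d/r$ gives $\alpha\beta=O(rd)$, hence $\sqrt{\alpha\beta}=O(\sqrt{rd})$.

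For the size bound my plan is to choose (the corresponding instance of) the canonical weighting so that $W_e$ depends on the size $s_e$ of the subtree hanging off the child side of $e$, for example $W_e\propto\sqrt{s_e}$. The structural key is that the sibling subtrees off of any path $P$ are pairwise disjoint and so jointly contain at most $s:=\dtsize(\T)$ nodes. A Cauchy--Schwarz application then bounds $\sum_{e\in\overline{P}}W_e$ by an $O(\sqrt{s})$-type quantity, while a dual telescoping accounting along $P$ controls $\sum_{e\in P}1/W_e$, so that the product $\alpha\beta$ comes out to $O(s)$, whence $\sqrt{\alpha\beta}=O(\sqrt{s})$.

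The main obstacle I foresee is the size bound: producing a weighting that, uniformly over all paths of $\T$, balances both the weight sum on $\overline{P}$ and the inverse-weight sum on $P$ against the single global budget $s$, probably via an inductive argument over subtree decompositions so as not to pick up an extraneous $\log$ or $\mathrm{depth}$ factor. The rank--depth bound is, by contrast, largely a clean reorganisation of the guessing-based arguments of Lin--Lin and Beigi--Taghavi inside the Program~\ref{program: weights} framework, and both parts conclude via a single application of Theorem~\ref{thm: weights in program give query upper bound}.
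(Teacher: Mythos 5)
Your reduction to exhibiting feasible solutions of Program~\ref{program: weights} is sound, since $\OPT_\T$ is a minimum over feasible weightings and Theorem~\ref{thm: weights in program give query upper bound} converts any upper bound on $\OPT_\T$ into a query bound. (One small conflation: the two weightings you describe are \emph{not} the canonical weight assignment of Definition~\ref{def:weight-assignment}; they are just other feasible points, which is fine for bounding $\OPT_\T$ but not what your first paragraph claims.) Your rank--depth argument is essentially correct and is exactly the Beigi--Taghavi two-valued weighting along a G-coloring, which the paper simply cites (\cite[Theorem~2]{BT20}) together with Claim~\ref{claim: gcoloring equals rank}; your ``at most $d-r$'' counts should just be ``at most $d$'', but taking $W_R=\sqrt{d/r}$, $W_W=\sqrt{r/d}$ still yields $\alpha,\beta\le 2\sqrt{rd}$, so this part goes through.

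The size bound, however, has a genuine gap, and it is precisely where you predicted trouble. The weighting $W_e\propto\sqrt{s_e}$ does not give $O(\sqrt{s})$: for a decision list on $n$ variables ($s=2n+1$), every sibling subtree off the long path is a single leaf, so $\alpha=\sum_{e\in\overline{P}}W_e=\Theta(n)$, while $\beta=\sum_{e\in P}1/\sqrt{s_e}=\Theta(\sqrt{n})$, giving $\sqrt{\alpha\beta}=\Theta(s^{3/4})$. More generally your Cauchy--Schwarz step only yields $\sum_{e\in\overline{P}}\sqrt{s_e}\le\sqrt{|\overline{P}|\cdot s}\le\sqrt{d\cdot s}$, which reintroduces a depth factor, and the ``dual telescoping'' you invoke for $\sum_{e\in P}1/W_e$ is the mechanism behind the Appendix~\ref{app: another weight sqrt llogl} weighting, which only achieves $O(\sqrt{s\log s})$. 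The missing ingredient is the recursive characterization $\OPT_\T=\tfrac{1}{2}\bigl(\OPT_{\T_L}+\OPT_{\T_R}+\sqrt{(\OPT_{\T_L}-\OPT_{\T_R})^2+4}\bigr)$ from Theorem~\ref{thm:weight-optimality}: squaring and using $2ab\le a^2+b^2$ gives $\OPT_\T^2\le\OPT_{\T_L}^2+\OPT_{\T_R}^2+2$, and induction on the tree then yields $\OPT_\T^2\le 2\,\dtsize(\T)$. No closed-form per-edge weighting with a one-shot Cauchy--Schwarz is known to replace this induction, so as written your size bound does not close.
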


    Using standard arguments to deal with the case when $\T$ is a randomized decision tree, this gives an upper bound on the bounded-error quantum query complexity of a function in terms of its randomized decision-tree size complexity, as well as in terms of its randomized rank and depth (see Corollary~\ref{cor: quantum upper bound in terms of leaves}).

    It was shown by Reichardt~\cite{Rei11} that the quantum query complexity of evaluating Boolean formulas of size $s$ is $O(\sqrt{s})$. In particular, this implies the size bound in Corollary~\ref{cor:ubs} for Boolean functions $f$ since the formula size of a Boolean function is bounded above by a constant times its decision-tree size (see Claim~\ref{claim: formula size dtsize}). Not only does our bound also hold for \emph{relations}, but the query algorithm we obtain is actually an algorithm for $\leaff$ when the underlying tree is deterministic.
    While this yields trivial bounds for most relations (since almost all Boolean functions have super-polynomial decision-tree size complexity, for example, while $\Q(f)$ is at most $n$), it recovers the $O(\sqrt{n})$ bound for the search problem~\cite{Gro96}, for example. We also exhibit a family of decision trees for which the size bound is strictly stronger than the rank-depth bound (see Figure~\ref{fig:bounds-separation}).

    \subsection{Subsequent work}

    In a preliminary version of this paper~\cite{CMP22} we conjectured that rank and randomized rank are polynomially related for all total Boolean functions. Note that techniques used to prove the analogous polynomial equivalence for deterministic and randomized query complexities~\cite{Nisan91} (also see~\cite{BW02}) cannot work since they use intermediate measures such as certificate complexity, sensitivity and block sensitivity, all of which are maximal for the $\OR$ function even though its rank is 1.

    It is not hard to see that, up to a logarithmic factor in the input size, (randomized) rank is equivalent to logarithm of (randomized) decision tree size complexity.
    Using this equivalence, it was subsequently shown~\cite{CDMRS23} that up to a polylogarithmic factor in the input size our conjecture is true. That is, rank and randomized rank are polynomially related (up to a polylogarithmic factor in the input size) for all total Boolean functions.

    As a consequence, the quantum query upper bounds we obtain in Theorem~\ref{thm: qq upper bound in terms of randomized rank} can only be polynomially better than the upper bound obtained in Theorem~\ref{thm: qq upper bound in terms of rank}.

    %An alternative proof for Theorem~\ref{thm: weights in program give query upper bound} also appears in \cite[Section~7.3]{cor23}.
    An alternative proof for Theorem~\ref{thm: weights in program give query upper bound} also appears in \cite[Section~5.4]{Cor25}.

    \subsection{Organization}
    We provide necessary preliminaries in Section~\ref{sec: prelims}.
    In Section~\ref{sec: dtrank}, we discuss the rank of decision trees, and prove that it is equal to the guessing complexity. We also prove Theorem~\ref{thm: rank rrank separation nand tree}, which is a polynomial separation between rank and randomized rank for the complete binary AND-OR tree, in Section~\ref{sec: dtrank}. In Section~\ref{sec: bt span program} we discuss Beigi and Taghavi's construction of a span program and a dual adversary solution for a relation $f$ by assigning weights to edges of a classical decision tree computing $f$, and we show that the best possible weighting scheme is captured in Program~\ref{program: weights}, proving Theorem~\ref{thm: weights in program give query upper bound}. In Section~\ref{sec: weight assignments}, we exhibit a solution to Program~\ref{program: weights} and argue its optimality from first principles. In Appendix~\ref{app: sizes} we prove an upper bound on formula size of a Boolean function in terms of its decision-tree size. In Appendix~\ref{app: another weight sqrt llogl} we exhibit another interesting, albeit sub-optimal, solution to Program~\ref{program: weights}.

    \section{Preliminaries}\label{sec: prelims}
    All logarithms in this paper are taken base 2. For a bit $b \in \zone$, let $\overline{b}$ denote the bit $1 - b$. For a relation $f \subseteq \zone^n \times \calR$, define the domain of $f$ to be $D_f := \{x \in \zone^n : \exists b \in \calR$ such that $(x, b) \in f\}$. For a vector $v$, let $\norm{v}$ denote its $\ell_2$-norm. For a matrix $M$, let $\norm{M}$ denote its spectral norm.
    For matrices $M$ and $N$ of the same dimensions, let $M \circ N$ denote their entry-wise (Hadamard) product. Let $\delta_{a, b}$ be the function that outputs 1 when $a = b$ and 0 otherwise. We use $[T]$ to denote the set $\{1,\ldots, T\}$ where $T \in \mathbb{Z}^+$.
    A \emph{restriction} is a partial assignment of variables $x_1, \dots, x_n$ to values in $\zone$. Formally, a restriction is a function $\rho :[n] \to \cbra{0, 1, \bot}$, where $\rho(i) = \bot$ indicates that the value of $x_i$ is left free. For a function $f : \zone^n \to \zone$ and a restriction $\rho :[n] \to \cbra{0, 1, \bot}$, the restricted function $f|_\rho$ is defined on the free variables in the natural way.

    \subsection{Decision Trees}

    A decision tree computing a relation $f \subseteq \zone^n \times \calR$ is a binary tree with leaf nodes labeled in $\calR$, each internal node is labeled by a variable $x_i$ and has two outgoing edges, labeled $0$ and $1$. On input $x \in D_f$, the tree’s computation proceeds by computing $x_i$ as indicated by the node’s label and following the edge indicated by the value of the computed variable. The output value at the leaf, say $b \in \calR$, must be such that $(x, b) \in f$.
    Given a relation $f \subseteq \zone^n \times \calR$ and a deterministic decision tree $\T$ computing it, define the function $\leaff$ that takes an input $x \in \zone^n$ and outputs the leaf of $\T$ reached on input $x$. Let $V(\T)$ denote the set of vertices in $\T$, and, we use $V_I(\T)$ to denote the set of internal nodes of $\T$. We use the notation $J(v)$ to denote the variable queried at vertex $v$ for an internal node $v$. For a vertex $v \in V_I(\T)$ and $q \in \zone$, let $N(v, q)$ denote the vertex that is the child of $v$ along the edge that has label $q$. For neighbors $v, w$, let $e(v, w)$ denote the edge between them. For an input $x \in \zone^n$, let $P_x$ denote the unique path in $\T$ from its root to $\leaff(x)$. For a path $P$ in $\T$, we say an edge $e$ \emph{deviates from $P$} if exactly one vertex of $e$ is in $P$. For a path $P$ in $\T$, define $\overline{P} := \cbra{e : e~\textnormal{deviates from}~P}$. We let $P(\T)$ denote the set of all paths from the root to a leaf in $\T$.
    We assume that decision trees computing relations $f$ contain no extraneous leaves, i.e., for all leaves there is an input $x \in D_f$ that reaches that leaf. We also assume that for every path $P$ in $\T$ and index $i \in [n]$, the variable $x_i$ is queried at most once on $P$.

    The decision tree complexity (also called \emph{deterministic query complexity}) of $f$, denoted $\DT(f)$, is defined as follows.
    \[
    \DT(f) := \min_{\T:\T~\textnormal{is a DT computing}~f}
    \textnormal{depth}(\T).
    \]
    Note that a deterministic decision tree in fact computes a function, since each input reaches exactly one leaf on the computation path of the tree.
    A randomized decision tree is a distribution over deterministic decision trees. We say a randomized decision tree computes a relation $f \subseteq \zone^n \times \calR$ with error $1/3$ if for all $x \in \zone^n$, the probability of outputting a $b \in \calR$ such that $(x, b) \in f$ is at least $2/3$. The depth of a randomized decision tree is the maximum depth of a deterministic decision tree in its support.
    Define the randomized decision tree complexity (also called \emph{randomized query complexity}) of $f$ as follows.
    \[
    \RDT(f) := \min_{\substack{\T : \T~\textnormal{is a randomized DT}\\\textnormal{that computes}~f~\textnormal{up to error}~1/3}}
    \textnormal{depth}(\T).
    \]
    Another measure of interest to us in this work is the \emph{decision-tree size complexity} of $f$.

    \begin{definition}[Decision-tree size complexity]\label{defn: leaf complexity}
        Let $f \subseteq \zone^n \times \calR$ be a relation. Define the \emph{decision-tree size complexity} of $f$, which we denote by $\dtsize(f)$, as
        \[
        \dtsize(f) := \min_{\T : \T~\textnormal{computes}~f}\dtsize(\T),
        \]
        where $\dtsize(\T)$ denotes the number of nodes of $\T$. Analogously, the \emph{randomized decision-tree size complexity} of $f$ is defined to be
        \[
        \rdtsize(f):= \min_{\substack{\T : \T~\textnormal{is a randomized DT}\\\textnormal{that computes}~f~\textnormal{up to error}~1/3}}\rdtsize(\T),
        \]
        where $\rdtsize(\T)$ denotes the maximum number of nodes of a decision tree in the support of $\T$.
    \end{definition}
    It is easy to observe that the number of nodes in a deterministic decision tree equals one less than twice the number of leaves in the tree.

    We require the notion of rank of a decision tree introduced by Ehrenfeucht and Haussler~\cite{EH89}. We also require a natural randomized variant of the same, which to the best of our knowledge, has not been studied in the literature.

    \begin{definition}[Decision tree rank and randomized rank]\label{defn: rank}
        Let $\T$ be a binary decision tree. Define the rank of $\T$ recursively as follows:
        For a leaf node $a$, define $\rank(a) = 0$. For an internal node $u$ with children $v, w$, define
        \[
        \rank(u) = \begin{cases}
            \max\cbra{\rank(v), \rank(w)} & \textnormal{if}~\rank(v) \neq \rank(w)\\
            \rank(v) + 1 & \textnormal{if}~\rank(v) = \rank(w).
        \end{cases}
        \]
        Define $\rank(\T)$ to be the rank of the root of $\T$.
        Define the \emph{randomized rank} of a randomized decision tree to be the maximum rank of a deterministic decision tree in its support.
    \end{definition}

    \begin{definition}[Rank of a relation]
        Let $f \subseteq \zone^n \times \calR$ be a relation. Define the rank of $f$, which we denote by $\rank(f)$, by
        \[
        \rank(f) = \min_{\T : \T~\textnormal{computes}~f} \rank(\T).
        \]
        Analogously define the randomized rank of $f$, which we denote by $\rrank(f)$, to be the minimum randomized rank of a randomized decision tree that computes $f$ to error $1/3$.
    \end{definition}

    \subsection{Quantum Query Complexity}\label{sec: prelims quantum}
    We refer the reader to~\cite{NC16, Wol19} for the basics of quantum computing. A quantum query algorithm $\mathcal{A}$ for a relation $f \subseteq \zone^n \times \calR$ begins in a fixed an initial state $\ket{\psi_0}$, applies a sequence of unitaries $U_0, O_x, U_1, O_x, \cdots, U_T$, and performs a measurement. Here, the initial state $\ket{\psi_0}$ and the unitaries $U_0, U_1, \dots, U_T$ are independent of the input. The unitary $O_x$ represents the ``query'' operation, and maps $\ket{i}\ket{b}$ to $\ket{i}\ket{b + x_i \mod 2}$ for all $i \in [n]$ and maps $\ket{i}$ to $\ket{i}$ whenever $i=0$.
    We say that $\mathcal{A}$ is an \emph{$\eps$-error algorithm} computing $f$ if for all $x$ in the domain of $f$, the probability of outputting $b \in \calR$ such that $(x, b) \in f$ is at least $1 - \eps$. The \emph{$\eps$-error quantum query complexity} of $f$, denoted by $\Q_\eps(f)$, is the least number of queries required for a quantum query algorithm to compute $f$ with error $\eps$. When the subscript $\eps$ is dropped we assume $\eps = 1/3$; the \emph{bounded-error query complexity} of $f$ is $\Q(f)$.

    \subsection{Span Programs}

    The model of span programs of interest to us is that of ``non-binary span programs with orthogonal inputs'', abbreviated NBSPwOI. This model was introduced by Beigi and Taghavi~\cite{BT19}.
    A NBSPwOI $(\spanProgramP, w, \overline{w})$ evaluating a function $f:D_f \rightarrow [m]$ with $D_f \subseteq [\ell]^n$ consists of the following:

    \begin{itemize}
        \item A finite dimensional vector space $V$ equipped with an inner product,
        \item for every $i \in [m]$, a target vector $\ket{t_i} \in V$,
        \item for every $j\in [n]$ and every $q \in [\ell]$, an input set $I_{j,q} \subseteq V$.
    \end{itemize}
    Let $I$ be the union of all input sets:
    \begin{equation}
        I=\bigcup_{j \in [n]}\bigcup_{q \in [\ell]} I_{j,q}.
    \end{equation}
    For every $x \in D_f$ we define the set of \emph{available vectors for $x$} to be
    \begin{equation}
        I(x)=\bigcup_{j \in [n]} I_{j,x_j}.
    \end{equation}
    We say that the set of vectors $I \setminus I(x)$ is \emph{unavailable for $x$}.
    The set $P$ comprises all of the above components.
    Let $A$ be the $(d \times |I|)$-dimensional matrix consisting of all input vectors as its columns with $d=\dim(V)$.
    We say $(\spanProgramP, w, \overline{w})$ evaluates $f$ if for every $x \in D_f$,
    \[
    \ket{t_{\alpha}} \in \spann(I(x)) \iff \alpha = f(x).
    \]
    Moreover, there should be two witnesses indicating this: a positive witness $\ket{\posWitX} \in \C^{|I|}$ and a negative witness $\ket{\negWitX} \in V$ satisfying the following conditions:
    \begin{itemize}
        \item The coordinates of $\ket{\posWitX}$ associated to unavailable vectors are zero.
        %That is,
        %\[
        %\braket{v}{\posWitX} = 0
        %\]
        %for all $\ket{v} \in I \setminus I(x)$.
        \item $A\ket{\posWitX}=\ket{t_{\alpha}}$.
        \item For all $\ket{v} \in I(x)$,
        \[
        \braket{v}{\negWitX}=0.
        \]
        \item $\forall \beta \neq \alpha$ we have $\braket{t_{\beta}}{\negWitX}=1$.
    \end{itemize}
    Let $w$ and $\overline{w}$ denote the collections of positive and negative witnesses, respectively. The positive and negative witness sizes of $(\spanProgramP, w, \overline{w})$ are defined to be
    \begin{align}
        \wsizePos(\spanProgramP, w, \overline{w}) & \coloneqq \max_{x \in D_f} \norm{\ket{\posWitX}}^2,\label{eqn: poswit defn} \\
        \wsizeNeg(\spanProgramP, w, \overline{w}) & \coloneqq \max_{x \in D_f} \norm{A^{\dagger}\ket{\negWitX}}^2. \label{eqn: negwit defn}
    \end{align}
    The \emph{witness size} of $(\spanProgramP, w, \overline{w})$ is defined as
    \begin{equation}\label{eqn: defn wsize}
        \wsize(\spanProgramP, w, \overline{w}) \coloneqq \sqrt{\wsizeNeg(\spanProgramP, w, \overline{w}) \cdot \wsizePos(\spanProgramP, w, \overline{w})}.
    \end{equation}
    Beigi and Taghavi~\cite[Theorem~2]{BT19} showed that for any NBSPwOI $(\spanProgramP, w, \overline{w})$ evaluating the function $f$, its
    complexity $\wsize(\spanProgramP, w,\overline{w})$ is an upper bound on $\Q(f)$.

    \begin{theorem}[\cite{BT19}]\label{thm: bt19}
        Let $f : D_f \to [m]$ be a function with $D_f \subseteq [\ell]^n$, and let $(P, w, \overline{w})$ be a NBSPwOI computing $f$. Then,
        \[
        \Q(f) = O(\wsize(\spanProgramP, w, \overline{w})).
        \]
    \end{theorem}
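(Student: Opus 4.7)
The plan is to convert the given NBSPwOI into a feasible solution of the (non-Boolean) dual general adversary SDP whose objective value is $\wsize(\spanProgramP,w,\overline{w})$, and then invoke the general-adversary upper bound on quantum query complexity. Write $W^{+} = \wsizePos(\spanProgramP,w,\overline{w})$ and $W^{-} = \wsizeNeg(\spanProgramP,w,\overline{w})$, so that $\wsize = \sqrt{W^{+}W^{-}}$. Recall the dual general adversary SDP for $f : D_f \to [m]$ with $D_f \subseteq [\ell]^n$: for each $x \in D_f$ and $j \in [n]$ one seeks vectors $\ket{u_{x,j}}, \ket{v_{x,j}}$ such that $\sum_{j : x_j \neq y_j} \braket{u_{x,j}}{v_{y,j}} = 1$ whenever $f(x) \neq f(y)$, minimizing $\max\rbra{\max_x \sum_j \norm{\ket{u_{x,j}}}^2,\; \max_x \sum_j \norm{\ket{v_{x,j}}}^2}$. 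By the characterization of~\cite{Rei11, LMRSS11}, the square root of this optimum upper-bounds $\Q(f)$ up to a constant.

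Next I would build the adversary vectors directly from the span program. Group input vectors by variable: define $I_j := \bigcup_{q \in [\ell]} I_{j,q}$, so that (with each input vector carrying its $(j,q)$-label) $I = \bigsqcup_{j \in [n]} I_j$. Let $\ket{u_{x,j}} \in \mathbb{C}^{I_j}$ be the restriction of $\ket{\posWitX}$ to coordinates indexed by $I_j$; its support lies in $I_{j,x_j}$ because $\ket{\posWitX}$ vanishes on indices unavailable for $x$. Let $\ket{v_{x,j}} \in \mathbb{C}^{I_j}$ be the restriction of $A^{\dagger}\ket{\negWitX}$ to $I_j$, which vanishes on $I_{j,x_j}$ because $\ket{\negWitX} \perp I(x)$. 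The key feasibility calculation, for any pair $x,y$ with $f(x) \neq f(y)$, is
\[
	\sum_{j : x_j \neq y_j} \braket{u_{x,j}}{v_{y,j}}
	= \sum_{j=1}^n \braket{u_{x,j}}{v_{y,j}}
	= \bra{\posWitX} A^{\dagger} \ket{\overline{w}_y}
	= \braket{t_{f(x)}}{\overline{w}_y}
	= 1,
\]
where the first equality holds because when $x_j = y_j$ the supports of $\ket{u_{x,j}}$ (inside $I_{j,x_j}$) and $\ket{v_{y,j}}$ (outside $I_{j,y_j}=I_{j,x_j}$) are disjoint; the second uses the block decomposition $I=\bigsqcup_j I_j$; the third uses $A\ket{\posWitX}=\ket{t_{f(x)}}$; and the fourth is the defining property of $\ket{\overline{w}_y}$ applied with $\beta = f(x) \neq f(y)$. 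The same decomposition yields $\sum_j \norm{\ket{u_{x,j}}}^2 = \norm{\ket{\posWitX}}^2 \leq W^{+}$ and $\sum_j \norm{\ket{v_{x,j}}}^2 = \norm{A^{\dagger}\ket{\negWitX}}^2 \leq W^{-}$. Rescaling $\ket{u_{x,j}} \mapsto \lambda \ket{u_{x,j}}$, $\ket{v_{x,j}} \mapsto \lambda^{-1}\ket{v_{x,j}}$ preserves feasibility and, choosing $\lambda^{4} = W^{-}/W^{+}$, balances both maxima at $\sqrt{W^{+}W^{-}} = \wsize(\spanProgramP,w,\overline{w})$.

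The main obstacle is invoking $\Q(f) = O(\Adv(f))$ in the full generality of $f : [\ell]^n \to [m]$. For Boolean inputs and outputs this is Reichardt's theorem~\cite{Rei11}, and~\cite{LMRSS11} extend the characterization to larger alphabets; once cited, the feasible solution above immediately yields the claimed bound. Alternatively, one can argue the non-Boolean case by a direct reduction: for each $\alpha \in [m]$, specialize the construction to the Boolean indicator $\mathbf{1}[f(x)=\alpha]$ to obtain a Boolean dual adversary solution of value $O(\wsize)$, and combine the $m$ resulting quantum tests via a guess-and-amplify scheme in the spirit of~\cite{Kot14}, obtaining a single $O(\wsize)$-query algorithm for $f$ itself.
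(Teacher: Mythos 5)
The paper itself offers no proof of this statement: it is imported wholesale from~\cite{BT19} (cited there as Theorem~2), so there is no internal argument to compare against. Your proposal is the standard derivation — and, up to presentation, the one Beigi and Taghavi themselves give: restrict $\ket{\posWitX}$ and $A^{\dagger}\ket{\negWitX}$ to the blocks of columns labelled by each query index $j$, verify feasibility for the dual adversary SDP (Program~\ref{program: sdp dual f}), balance the two norms by the $\lambda$-rescaling, and invoke Theorem~\ref{thm:DualSDP-QuantumQueryUpperBound}. Section~\ref{sec: dual adversary solution} of the paper is exactly this construction specialized to the decision-tree span program, so your general version is consistent with everything the paper does.

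Two caveats. First, your feasibility check is incomplete: Program~\ref{program: sdp dual f} requires $\sum_{j : x_j \neq y_j} \braket{u_{xj}}{w_{yj}} = 1 - \delta_{f(x), f(y)}$ for \emph{all} pairs, including those with $f(x) = f(y)$, where the right-hand side is $0$. Your telescoping identity gives $\sum_{j} \braket{u_{x,j}}{v_{y,j}} = \braket{t_{f(x)}}{\negWitX[y]}$ in every case; when $f(x) = f(y)$ this is $\braket{t_{f(y)}}{\overline{w}_y}$, which does vanish because $\ket{t_{f(y)}} = A\ket{w_y}$ lies in $\spann(I(y))$ while $\ket{\overline{w}_y}$ is orthogonal to every vector of $I(y)$ — but you must say so, since otherwise the constructed vectors are not verified to be feasible. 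Second, your closing paragraph manufactures an obstacle that is not there: the characterization of~\cite{LMRSS11}, as stated in Theorem~\ref{thm:DualSDP-QuantumQueryUpperBound}, already covers $f : D_f \to [m]$ with $D_f \subseteq [\ell]^n$, so no reduction to Boolean indicator functions is needed. The ``guess-and-amplify'' alternative you sketch is also too vague to stand on its own — combining $m$ separate tests into a single $O(\wsize(\spanProgramP, w, \overline{w}))$-query algorithm without incurring logarithmic overhead is exactly the kind of step that requires careful machinery, and here it is simply unnecessary.
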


    \subsection{Dual Adversary Bound}
    \label{sec:DualAdversaryBound}
    Let $f:D_f \rightarrow [m]$ be a function with $D_f \subseteq [\ell]^n$. Let $\Gamma$ be a Hermitian matrix with rows and columns labeled by elements of $D_f$. The matrix $\Gamma$ is called an \emph{adversary matrix} for $f$ if $\Gamma[x,y]=0$ whenever $f(x)=f(y)$. H{\o}yer, Lee and \v{S}palek~\cite{HLS07} defined the quantity
    \begin{equation}
        \label{eq:GeneralisedAdversarySolution}
        \Adv^{\pm}(f)= \max_{\Gamma \neq 0} \frac{\norm{\Gamma}}{\max_{i} \norm{\Gamma \circ D_i}},
    \end{equation}
    where $D_i$ is a $\zone$-valued matrix with $D_i[x, y] = 1 - \delta_{x_i, y_i}$. They showed that $\Q(f)=\Omega(\Adv^{\pm}(f))$. Reichardt~\cite{Rei11} subsequently proved that $\Q(f)=\Theta(\Adv^{\pm}(f))$ for Boolean functions $f:\zone^n \rightarrow \ZO$. This result was later generalized by Lee et al.~\cite[Theorem~1.1]{LMRSS11} who showed
    \begin{equation}
        \Q(f)=\Theta(\Adv^{\pm}(f))
    \end{equation} for non-Boolean functions as well. They first observed that the quantity $\Adv^{\pm}(f)$ in Equation~\eqref{eq:GeneralisedAdversarySolution} can be viewed as a feasible solution to a semidefinite program (SDP),
    \begin{equation}
        \label{eq:SDPprimal}
        \Adv^{\pm}(f) = \max \{\norm{\Gamma} : \forall i \in [n], \norm{\Gamma \circ D_i} \le 1\},
    \end{equation}
    where the maximization is over all real adversary matrices for $f$, i.e., symmetric matrices $\Gamma$ such that $\Gamma[x,y]=0$ whenever $f(x)=f(y)$. While the primal solution of SDP in Equation~\eqref{eq:SDPprimal} gives a lower bound on the quantum query complexity of $f$, they argued that a solution to the dual of this SDP gives upper bounds on $\Q(f)$. Based on duality of SDPs, they first showed (\cite[Lemma~A.1]{LMRSS11}) that the dual SDP is of the following form.
    \begin{table}[H]
        \centering~
        \begin{tabular}{|lll|}
            \hline
            Variables & $\cbra{\ket{u_{xj}} : x \in D_f, j \in [n]}$ and $\cbra{\ket{w_{xj}} : x \in D_f, j \in [n]}, d$ & \\
            Minimize  & $\max_{x \in D_f} \max\cbra{\sum_{j=1}^{n} \norm{\ket{u_{xj}}}^2, \sum_{j=1}^{n}\norm{\ket{w_{xj}}}^2} $ & \\
            s.t.
            & $\sum_{j \in [n] : x_j\neq y_j} \braket{u_{xj}}{w_{yj}} = 1- \delta_{f(x),f(y)}$ & $\forall x,y \in D_f$ \\
            & $\ket{u_{xj}}, \ket{w_{xj}} \in \mathbb{C}^{d}$ & for all $x \in D_f$ \\
            \hline
        \end{tabular}
        \caption{\label{program: sdp dual f} Dual SDP for $f$}
    \end{table}

    They then showed that a feasible solution to the SDP in Program~\ref{program: sdp dual f} yields an upper bound on the quantum query complexity of $f$ (\cite[Theorem~4.1]{LMRSS11}).
    \begin{theorem}[\cite{LMRSS11}]
        \label{thm:DualSDP-QuantumQueryUpperBound}Let $f:D_f \rightarrow [m]$ be a function with $D_f \subseteq [\ell]^n$, let $C$ denote the optimal value of Program~\ref{program: sdp dual f} for $f$. Then
        \begin{equation}
            \Q(f)=O(C).
        \end{equation}
    \end{theorem}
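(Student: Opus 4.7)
The plan is to leverage SDP duality together with the already-cited characterization of quantum query complexity by the general adversary bound. First, I would observe that Program~\ref{program: sdp dual f} is, by construction, the SDP dual of the primal formulation in Equation~\eqref{eq:SDPprimal}; both programs are feasible and bounded, so strong duality applies and the dual optimum $C$ equals $\Adv^{\pm}(f)$. Combined with the equivalence $\Q(f) = \Theta(\Adv^{\pm}(f))$ (Reichardt~\cite{Rei11} for Boolean functions, extended to the non-Boolean case in~\cite{LMRSS11}), this immediately yields $\Q(f) = O(C)$.

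To make the implication constructive, I would instead directly convert a feasible dual solution $\cbra{\ket{u_{xj}}, \ket{w_{xj}}}$ with objective $C$ into an NBSPwOI of witness size $O(C)$, and then appeal to Theorem~\ref{thm: bt19}. In an ambient space of the form $\C^n \otimes \C^{d'}$, I would define the input set $I_{j,q}$ to consist of vectors of the form $\ket{j}\otimes \ket{u_{yj}}$ indexed by $y \in D_f$ with $y_j = q$, and choose target vectors $\ket{t_\alpha}$ so that the span program's correctness identity expresses $\ket{t_\alpha}$ as a linear combination of available vectors with coefficients derived from the $\ket{w_{yj}}$. The dual feasibility equation $\sum_{j : x_j \neq y_j} \braket{u_{xj}}{w_{yj}} = 1 - \delta_{f(x),f(y)}$ would then translate directly into both the positive witness condition (for $\alpha = f(y)$) and the negative witness condition (for $\alpha \neq f(y)$).

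With this construction, the positive and negative witnesses for input $y$ would be built out of $\ket{u_{yj}}$ and $\ket{w_{yj}}$ respectively, so $\wsizePos \leq \max_x \sum_j \norm{\ket{u_{xj}}}^2 \leq C$ and $\wsizeNeg \leq \max_x \sum_j \norm{\ket{w_{xj}}}^2 \leq C$, giving $\wsize = \sqrt{\wsizePos \cdot \wsizeNeg} = O(C)$, whence Theorem~\ref{thm: bt19} delivers $\Q(f) = O(C)$. The main obstacle is designing the target vectors $\ket{t_\alpha}$ so that both span program correctness conditions (availability for $\alpha = f(y)$ and unavailability for $\alpha \neq f(y)$) are simultaneously met: the ``diagonal'' inner products $\braket{u_{yj}}{w_{yj}}$ are left entirely unpinned by the dual constraints, and the standard workaround is to append auxiliary coordinates to the ambient space that absorb these diagonal terms without inflating either witness size. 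Once this bookkeeping is in place, the upper bound follows.
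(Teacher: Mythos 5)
This theorem is cited from~\cite[Theorem~4.1]{LMRSS11} and the paper reproduces it without proof, so there is no in-paper argument to compare against; what matters is whether your sketch stands on its own.

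Your first route---strong duality giving $C = \Adv^{\pm}(f)$, then invoking $\Q(f) = \Theta(\Adv^{\pm}(f))$---is a valid deduction \emph{within the paper's framing}, where both facts appear as black boxes in Section~\ref{sec:DualAdversaryBound}. But as an independent proof it is circular: in the original reference, the upper-bound direction $\Q(f) = O(\Adv^{\pm}(f))$ for non-Boolean $f$ is established precisely by constructing a quantum algorithm from a feasible dual solution, i.e.~by proving the very statement you are being asked to prove. Citing the characterization to derive the theorem therefore begs the question at the source, even though it is a legitimate one-line consequence of results this paper takes for granted.

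Your second route is the one that could constitute a genuine proof, but as written it has a real gap. You correctly spot the obstruction---the feasibility constraints $\sum_{j : x_j \neq y_j} \braket{u_{xj}}{w_{yj}} = 1 - \delta_{f(x),f(y)}$ say nothing about the diagonal terms $\braket{u_{yj}}{w_{yj}}$, so a naive choice of target vectors cannot simultaneously enforce the positive-witness (availability) and negative-witness (orthogonality) conditions of an NBSPwOI---but you then defer the fix to ``append auxiliary coordinates\ldots~Once this bookkeeping is in place, the upper bound follows.'' That bookkeeping \emph{is} the theorem; deferring it means the sketch does not actually close. It is also worth noting that this is not how~\cite{LMRSS11} argue: they build a phase-estimation/quantum-walk algorithm directly from the dual solution rather than passing through span programs. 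The span-program route you propose is closer to the~\cite{BT19} machinery (Theorem~\ref{thm: bt19}), which postdates~\cite{LMRSS11}; logically that is fine since~\cite{BT19} has an independent proof, but it means your construction, once completed, would be a different proof from the cited one, and in its current form the crucial step is left as a placeholder.
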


    \section{Decision Tree Rank}\label{sec: dtrank}

    In this section, we first rephrase Theorem~\ref{thm: linlin} in terms of a measure of decision trees which we term `guessing complexity'. This reformulation was essentially done by Beigi and Taghavi~\cite[Section 3]{BT20}. We then show that the guessing complexity of a decision tree equals its rank, proving Theorem~\ref{thm: qq upper bound in terms of rank}. Finally, we show a polynomial separation between rank and randomized rank for the complete binary AND-OR tree.

    \subsection{Guessing Complexity and Rank}

    \begin{definition}[G-coloring~{\cite[Definition 1]{BT20}}]
        A G-coloring of a decision tree $\T$ is a coloring of its edges by two colors black and red, in such a way that any vertex of $\T$ has at most one outgoing edge with black color.
    \end{definition}

    \begin{definition}[Guessing Complexity]
        \label{def:guessing-complexity}
        Let $\T$ be a decision tree. Let $P(\T)$ denote the set of root-to-leaf paths in $\T$. Define the \emph{guessing complexity} of $\T$, which we denote by $G(\T)$, by
        \[
        G(\T) = \min_{\textnormal{G-colorings of~}\T}\max_{P \in P(\T)}\textnormal{number of red edges on $P$}.
        \]
    \end{definition}

    \begin{claim}\label{claim: gcoloring equals rank}
        Let $\T$ be a decision tree. Then,
        \[
        G(\T) = \rank(\T).
        \]
    \end{claim}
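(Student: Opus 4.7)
The plan is to prove the identity $G(\T) = \rank(\T)$ by induction on the number of internal nodes of $\T$, since both quantities are defined recursively over the tree structure. The base case of a single leaf is immediate: there are no edges to color (so $G(\T)=0$), and by definition $\rank(\T) = 0$.

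For the inductive step, let $\T$ have root $u$ with children $v$ and $w$, and let $\T_v, \T_w$ denote the two subtrees rooted at $v$ and $w$. The inductive hypothesis gives $G(\T_v)=\rank(\T_v)$ and $G(\T_w)=\rank(\T_w)$. To show $G(\T) \le \rank(\T)$, I would take optimal G-colorings of $\T_v$ and $\T_w$ and extend them at the root as follows: without loss of generality assume $\rank(\T_v) \ge \rank(\T_w)$, color the edge $(u,v)$ black and the edge $(u,w)$ red. Every root-to-leaf path then has at most $\max\{G(\T_v),\, G(\T_w)+1\}$ red edges, and a quick case split on whether $\rank(\T_v)>\rank(\T_w)$ or $\rank(\T_v)=\rank(\T_w)$ shows this equals $\rank(\T)$ via the recursive definition.

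For the reverse inequality $G(\T) \ge \rank(\T)$, I would start from any G-coloring of $\T$ achieving $G(\T)$. Its restriction to $\T_v$ and $\T_w$ is still a valid G-coloring of each subtree, so the worst path inside $\T_v$ already has at least $\rank(\T_v)$ red edges, and similarly for $\T_w$. Since at most one of the two root edges is black, I split into two cases. If both root edges are red, then each path picks up one extra red edge, so $G(\T) \ge \max\{\rank(\T_v),\rank(\T_w)\}+1 \ge \rank(\T)$. If exactly one is red, say $(u,w)$, then $G(\T) \ge \max\{\rank(\T_v),\,\rank(\T_w)+1\}$, and a short case analysis on the three possibilities $\rank(\T_v) >$, $=$, or $<\rank(\T_w)$ shows this is always at least $\rank(\T)$.

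The only delicate point is the case where $\rank(\T_v) < \rank(\T_w)$ but the black edge runs into the smaller-rank subtree; one has to notice that the path through $w$ then picks up an extra red edge and still dominates $\rank(\T_w) = \rank(\T)$. Besides this small case-check, everything is bookkeeping on the two-case recursion in Definition~\ref{defn: rank}, so I expect no serious obstacle.
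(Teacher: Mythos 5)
Your proposal is correct and follows essentially the same route as the paper: a structural induction at the root, extending optimal G-colorings of the two subtrees for the upper bound and restricting a G-coloring of $\T$ to the subtrees for the lower bound, with the same case split on whether the two subtree values are equal. The paper phrases this as deriving the recurrence $G(\T)=\max\{G(\T_L),G(\T_R)\}$ or $G(\T_L)+1$ and matching it to Definition~\ref{defn: rank}, which is just a repackaging of your induction.
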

    \begin{proof}
        Let $v$, $v_L$ and $v_R$ be the root of $\T$, and the left and right children of $v$, respectively. Let $\T_L$ and $\T_R$ denote the subtrees of $\T$ rooted at $v_L$ and $v_R$, respectively.

        Consider a $G$-coloring of $\T$. This naturally induces a $G$-coloring of $\T_L$ and $\T_R$.
        We consider two cases:
        \begin{itemize}
            \item $G(\T_L) = G(\T_R) = k$, say. One of the edges $(v, v_L)$ or $(v, v_R)$ must be colored red. Assume without loss of generality that $(v, v_L)$ is the red edge. Since we assumed $G(\T_L) = k$, $\T_L$ contains a path with at least $k$ red edges under the $G$-coloring induced from the given $G$-coloring of $\T$. But this induces a path in $\T$ with $k+1$ red edges, and hence $G(\T) \geq G(\T_L) + 1$.
            \item If $G(\T_L) \neq G(\T_R)$, we have $G(\T) \geq \max\cbra{G(\T_L),G(\T_R)}$, witnessed by the $G$-colorings induced on $\T_L$ and $\T_R$ by the $G$-coloring of $\T$.
        \end{itemize}
        In the other direction, we construct an optimal $G$-coloring of $\T$ given optimal $G$-colorings of $\T_L$ and $\T_R$. The edges of $\T_L$ and $\T_R$ in $\T$ are colored exactly as they are in the given optimal $G$-colorings of them. It remains to assign colors to the two remaining edges $(v, v_L)$ and $(v, v_R)$. We again have two cases:
        \begin{itemize}
            \item $G(\T_L) = G(\T_R) = k$, say. Arbitrarily color one of the edges $(v, v_L)$ and $(v, v_R)$ (say, $(v, v_L)$) red, and color the other edge black. The maximum number of red edges on a path has increased by 1. Thus, $G(\T) \leq G(\T_L) + 1$.
            \item $G(\T_L) > G(\T_R)$, say (the other case follows a similar argument). Color the edge $(v, v_L)$ black and $(v, v_R)$ red. Thus, the maximum number of red edges on a path in $\T$ equals $\max\cbra{G(\T_L), G(\T_R) + 1} = G(\T_L)$.
        \end{itemize}

        Thus, we have
        \[
        G(\T) = \begin{cases}
            \max\{G(\T_L),G(\T_R)\} & \textnormal{if}~G(\T_L)\neq G(\T_R) \\
            G(\T_L)+1 & \textnormal{if}~G(\T_L)=G(\T_R),
        \end{cases}
        \]
        The measure $\rank(\T)$ is defined exactly as the above (Definition~\ref{defn: rank}), proving the claim.
    \end{proof}

    The guessing algorithm $\G$ in Theorem~\ref{thm: linlin} corresponds to a natural G-coloring of $\T$ of cost $G$: for each internal vertex, color the guessed edge black and the other edge red. Thus, Theorem~\ref{thm: qq upper bound in terms of rank} immediately follows from Claim~\ref{claim: gcoloring equals rank} and Theorem~\ref{thm: linlin}.

    \begin{proof}[Proof of Theorem~\ref{thm: qq upper bound in terms of randomized rank}]
        A quantum query algorithm for $f$ is as follows: sample a decision tree from the support of $\T$ according to its underlying distribution, and run a $9/10$-error quantum query algorithm of cost $O(\sqrt{TG})$ from Theorem~\ref{thm: qq upper bound in terms of rank} on the resultant tree.\footnote{This is possible since the deterministic decision trees in the support of $\T$ compute \emph{functions}, which admit efficient error reduction with a constant overhead in query complexity by standard techniques (run the algorithm from Theorem~\ref{thm: qq upper bound in terms of rank} a large constant many times and return the majority output).} The success probability of this algorithm is at least $(9/10)\cdot(2/3) = 3/5$ for all inputs $x \in D_f$.
    \end{proof}

    The rank of a decision tree essentially captures the largest depth of a complete binary subtree of the original tree. Thus, the rank of a tree is bounded from above by the logarithm of the size of the tree.
    \begin{observation}[{\cite[Lemma 1]{EH89}}]
        Let $\T$ be a deterministic decision tree of size $s$. Then,
        \[
        \rank(\T) \leq \log (s+1)-1.
        \]
    \end{observation}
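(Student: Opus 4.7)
The plan is to prove the equivalent statement that a deterministic decision tree with $\ell$ leaves satisfies $\rank(\T) \le \log \ell$, and then translate back using the well-known identity $s = 2\ell - 1$ for full binary trees (every internal node has exactly two children), noted in the paragraph following Definition~\ref{defn: leaf complexity}. This gives $\ell = (s+1)/2$, so $\log \ell = \log(s+1) - 1$, matching the claimed bound.

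To prove $2^{\rank(\T)} \le \ell(\T)$, I would induct on the structure of $\T$. The base case is a single leaf, where the rank is $0$ by definition and the number of leaves is $1$, so $2^0 \le 1$. For the inductive step, let the root of $\T$ have subtrees $\T_L$ and $\T_R$ with $\ell_L$ and $\ell_R$ leaves, respectively, so $\ell(\T) = \ell_L + \ell_R$. There are two cases dictated by the recursive definition of rank (Definition~\ref{defn: rank}):
\begin{itemize}
    \item If $\rank(\T_L) \ne \rank(\T_R)$, then $\rank(\T) = \max\{\rank(\T_L), \rank(\T_R)\}$. By the inductive hypothesis, $2^{\rank(\T)} \le \max\{\ell_L, \ell_R\} \le \ell_L + \ell_R = \ell(\T)$.
    \item If $\rank(\T_L) = \rank(\T_R) = k$, then $\rank(\T) = k+1$. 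By the inductive hypothesis, $\ell_L \ge 2^k$ and $\ell_R \ge 2^k$, so $\ell(\T) = \ell_L + \ell_R \ge 2^{k+1} = 2^{\rank(\T)}$.
\end{itemize}
Taking logarithms yields $\rank(\T) \le \log \ell(\T) = \log((s+1)/2) = \log(s+1) - 1$, as desired.

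There is no real obstacle here: the bound is a direct structural induction mirroring the recursive definition of rank. The only thing one must be careful about is the convention that the decision trees in this paper are full binary trees (each internal node has exactly two outgoing edges, as stated at the start of Section~\ref{sec: prelims}), so that the identity between size and number of leaves holds cleanly; this is implicit in the assumptions of the excerpt. Intuitively, the argument is just the observation that a tree of rank $r$ must contain a complete binary subtree of depth $r$, hence at least $2^r$ leaves.
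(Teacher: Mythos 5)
Your induction correctly establishes $2^{\rank(\T)} \le \ell(\T)$ and the conversion via $s = 2\ell - 1$ is exactly the identity the paper notes after Definition~\ref{defn: leaf complexity}; this is the standard argument behind \cite[Lemma 1]{EH89}, which the paper cites without reproving, and it formalizes precisely the intuition stated there (a rank-$r$ tree contains a complete binary subtree of depth $r$, hence at least $2^r$ leaves). The proposal is correct and takes essentially the same approach.
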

    Along with this observation and the simple observation that the depth of a decision tree is at most its size, Theorem~\ref{thm: qq upper bound in terms of randomized rank} yields the following statement.
    \begin{theorem}\label{thm: llogl query upper bound}
        Let $\T$ be a randomized decision tree of size $s$ that computes a relation $f \subseteq \zone^n \times \calR$. Then,
        \[
        \Q_{2/5}(f) = O(\sqrt{s \log s}).
        \]
    \end{theorem}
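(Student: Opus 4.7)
The plan is to derive this theorem as a direct corollary of Theorem~\ref{thm: qq upper bound in terms of randomized rank} together with the size-based upper bound on rank stated in the preceding observation. Concretely, I would set $T$ to be the depth of $\T$ and $G$ to be the randomized rank of $\T$, and argue that both are controlled by the size $s$.

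First, I would observe that for any deterministic decision tree $\T'$ in the support of the randomized decision tree $\T$, the size of $\T'$ is at most $s$ (this follows directly from the definition of $\rdtsize$ in Definition~\ref{defn: leaf complexity}: the size of a randomized decision tree is the maximum size of a tree in its support). In particular, the number of nodes on any root-to-leaf path in $\T'$ is at most $s$, so $\textnormal{depth}(\T') \leq s - 1 \leq s$, and therefore the depth of $\T$ itself (again, the maximum over its support) satisfies $T \leq s$.

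Next, I would bound the randomized rank. By the observation quoted from~\cite{EH89}, every deterministic tree $\T'$ of size at most $s$ satisfies $\rank(\T') \leq \log(s+1) - 1 = O(\log s)$. Taking the maximum over the support of $\T$ yields $G = \rrank(\T) = O(\log s)$.

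Finally, I would plug these two bounds $T \leq s$ and $G = O(\log s)$ into Theorem~\ref{thm: qq upper bound in terms of randomized rank} to obtain
\[
\Q_{2/5}(f) = O(\sqrt{TG}) = O(\sqrt{s \log s}),
\]
which is the claimed upper bound. There is no real obstacle here: the proof is essentially a one-line combination of the two preceding results, and the only thing to be careful about is that the ``depth at most size'' and ``rank at most $\log(\text{size})$'' bounds apply uniformly over the support of the randomized tree so that they can be fed into the randomized-rank statement.
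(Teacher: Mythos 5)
Your proposal is correct and matches the paper's proof exactly: the paper also combines the Ehrenfeucht–Haussler observation that $\rank(\T') \leq \log(s+1)-1$ with the trivial fact that depth is at most size, and then plugs $T \leq s$, $G = O(\log s)$ into Theorem~\ref{thm: qq upper bound in terms of randomized rank}. Your added remark that both bounds hold uniformly over the support of the randomized tree is precisely the observation needed to pass from the deterministic statement to the randomized one, and it is implicit in the paper's treatment.
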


    Note here that it suffices to prove that $\Q(\leaff) = O(\sqrt{s \log s})$ where $s$ is the size of a \emph{deterministic} decision tree computing $f$, since standard techniques and error reduction yield the required bound for randomized trees.
    In Appendix~\ref{app: another weight sqrt llogl} we show an explicit NBSPwOI and dual adversary solution witnessing the same bound.
    In Sections~\ref{sec: bt span program} and~\ref{sec: weight assignments} we show an explicit NBSPwOI and dual adversary solution witnessing a stronger bound without the logarithmic factor. We choose to still give the weaker bound in Appendix~\ref{app: another weight sqrt llogl} as the weighting scheme seems considerably different from that in Section~\ref{sec: weight assignments}, and the weights are also efficiently computable.

    We note here the equivalence of the rank of a Boolean function and the value of an associated Prover-Delayer game introduced by Pudlák and Impagliazzo~\cite{PI00}. We use this equivalence in the next part of this section to show that the rank of the complete binary AND-OR tree is polynomially larger than its randomized rank.

    The game is played between two players: the Prover and the Delayer, who construct a partial assignment, say $\rho \in \cbra{0, 1, \bot}^n$, in rounds. To begin with the assignment is empty, i.e., $\rho = \bot^n$. In a round, the Prover queries an index $i \in [n]$ for which the value $x_i$ is not set in $\rho$ (i.e., $\rho_i = \bot$). The Delayer either answers $x_i = 0$ or $x_i = 1$, or defers the choice to the Prover. In the latter case, the Delayer scores a point. The game ends when the Prover knows the value of the function, i.e., when $f|_\rho$ is a constant function. The value of the game, $\val(f)$, is the maximum number of points the Delayer can score regardless of the Prover's strategy.
    The following result is implicit in~\cite{PI00} (also see~\cite[Theorem~3.1]{DM21} for an explicit statement and proof).
    \begin{claim}\label{claim: rank equals Prover Delayer game value}
        Let $f : \zone^n \to \zone$ be a (possibly partial) Boolean function. Then,
        \[
        \rank(f) = \val(f).
        \]
    \end{claim}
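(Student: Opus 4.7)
The plan is to prove the two inequalities separately by induction on the structure of decision trees computing $f$, leveraging the recursive definition of $\rank$ which is tailor-made to mirror the game's recursion.

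For the upper bound $\val(f) \le \rank(f)$, I would fix an optimal decision tree $\T$ for $f$ with $\rank(\T) = \rank(f)$ and describe a Prover strategy: the Prover asks exactly the queries prescribed by $\T$, following the Delayer's answer whenever one is given, and—when the Delayer defers at an internal node $u$ with children $v, w$ of ranks $r_v \le r_w$—setting the bit so as to descend into the lower-rank subtree $\T_v$. I would then show by induction on the size of $\T$ that under this strategy the Delayer's score is at most $\rank(\T)$. In the inductive step at the root, if the Delayer answers, we recurse on a child subtree and the bound follows from $r_v, r_w \le \rank(\T)$; if the Delayer defers, the score is $1 + (\text{score on }\T_v) \le 1 + r_v$, and the two cases of Definition~\ref{defn: rank} handle this cleanly: when $r_v = r_w$, $\rank(\T) = r_v + 1$ gives equality, and when $r_v < r_w$, integrality gives $1 + r_v \le r_w = \rank(\T)$.

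For the lower bound $\rank(f) \le \val(f)$, I would construct, by induction on the number of unfixed variables, a decision tree $\T_f$ with $\rank(\T_f) \le \val(f)$. If $f$ is constant, the game ends immediately with value $0$ and a leaf of rank $0$ suffices. Otherwise, by analyzing the Delayer's options at the first query (answer $0$, answer $1$, or defer), one obtains the recursion
\[
\val(f) = \min_{i \in [n]} \max\!\cbra{\val(f|_{x_i = 0}),\ \val(f|_{x_i = 1}),\ 1 + \min\cbra{\val(f|_{x_i = 0}),\val(f|_{x_i = 1})}}.
\]
Picking the minimizing index $i^*$, applying the inductive hypothesis to the restrictions $f|_{x_{i^*} = 0}$ and $f|_{x_{i^*} = 1}$ to obtain subtrees $\T_L, \T_R$, and joining them at a root querying $x_{i^*}$, one obtains a tree whose rank matches the right-hand side above: when the two restriction values are equal the $1+\min$ term dominates and corresponds exactly to the ``$+1$'' case of $\rank$, while when they differ the larger value dominates and matches the $\max$ case.

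The only mildly subtle point is the case analysis on whether the two children's ranks/values are equal or not, which must be reconciled with Definition~\ref{defn: rank}; once that observation is made explicit, both directions are short inductions. I therefore expect no real obstacle beyond bookkeeping, and the proof is essentially a translation between two equivalent recursive definitions.
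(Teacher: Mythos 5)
The paper does not prove this claim in-line; it defers to~\cite{PI00}, where the result is implicit, and to~\cite{DM21} for an explicit statement and proof, so there is no in-paper argument to compare against. Your self-contained argument is correct and is the natural one: for $\val(f)\le\rank(f)$ the Prover traverses an optimal tree and, on a defer, descends into the lower-rank child, with integrality of rank giving $1+r_v\le r_w$ when $r_v<r_w$; for $\rank(f)\le\val(f)$ you use the minimax recursion together with the elementary identity that for non-negative integers $a,b$, $\max\{a,\,b,\,1+\min\{a,b\}\}$ equals $a+1$ when $a=b$ and $\max\{a,b\}$ otherwise, which mirrors Definition~\ref{defn: rank} exactly. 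Two small points worth spelling out in a polished write-up: the minimax recursion for $\val$ should be justified via the standard backward-induction argument for finite perfect-information games (you invoke it implicitly); and, since the IH hands you $\rank(\T_L)\le\val(f|_{x_{i^*}=0})$ and $\rank(\T_R)\le\val(f|_{x_{i^*}=1})$ rather than equalities, the case split in the joined tree's rank must be done on $\rank(\T_L)$ versus $\rank(\T_R)$ (not on the $\val$'s) — the inequality still closes in every sub-case, as you note, but the phrase ``corresponds exactly'' overstates the matching. Additionally, for partial $f$ the constructed tree may have a branch reached by no input of $D_f$; such branches should be pruned (never increasing rank) to respect the paper's standing no-extraneous-leaves convention. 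None of these is a real gap — as you anticipated, they are bookkeeping.
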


    \subsection{A Separation Between Rank and Randomized Rank}
    We first note that there can be maximal separations between rank and randomized rank if we consider partial functions. This is witnessed by the well-studied Approximate-Majority function, for example.
    \begin{claim}\label{claim: apxmaj rank rrank}
        Let $f : \mathcal{D} \subset \zone^n \to \zone$ be a partial function defined as follows:
        \[
        f(x) = \begin{cases}
            0 & |x| \leq n/3\\
            1 & |x| \geq 2n/3.
        \end{cases}
        \]
        Then, $\rank(f) = \Theta(n)$ and $\rrank(f) = \Theta(1)$.
    \end{claim}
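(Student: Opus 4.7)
The plan is to prove the two asymptotic claims separately, in each case by giving matching upper and lower bounds. The upper bound $\rank(f) \leq n$ and lower bound $\rrank(f) \geq 1$ (the latter because $f$ is non-constant) are immediate, so only $\rrank(f) = O(1)$ and $\rank(f) = \Omega(n)$ need work.

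For $\rrank(f) = O(1)$, I would exhibit a constant-depth randomized decision tree computing $f$ to error $1/3$. The algorithm samples $k$ indices $i_1, \ldots, i_k$ independently and uniformly from $[n]$, queries $x_{i_1}, \ldots, x_{i_k}$, and outputs the majority bit observed. If $|x| \leq n/3$, each sampled bit is $1$ with probability at most $1/3$; a Chernoff bound shows that for a sufficiently large constant $k$, the fraction of $1$'s exceeds $1/2$ with probability at most $1/3$, so the algorithm outputs the correct value $0$ with probability at least $2/3$. The case $|x| \geq 2n/3$ is symmetric. Every deterministic tree in the support of the resulting randomized decision tree has depth at most $k$, and since the rank of any decision tree is at most its depth, each such tree has rank $O(1)$. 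Hence $\rrank(f) = O(1)$.

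For $\rank(f) = \Omega(n)$, I would invoke the Prover-Delayer game characterization (Claim~\ref{claim: rank equals Prover Delayer game value}) and exhibit a Delayer strategy scoring $\Omega(n)$ points against any Prover. The strategy is to defer on every query, earning one point per round. After $t$ rounds, the partial assignment $\rho$ has been fully chosen by the Prover and contains some $k$ ones and $j = t - k$ zeros. I claim that as long as $k \leq n/3$ and $j \leq n/3$, the restriction $f|_\rho$ is non-constant on the domain of $f$: extending $\rho$ by setting all remaining variables to $0$ gives an input of Hamming weight $k \leq n/3$ with $f$-value $0$, while extending them all to $1$ gives an input of Hamming weight $n - j \geq 2n/3$ with $f$-value $1$. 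Therefore the game cannot terminate until $\max(k, j) > n/3$, which forces $t \geq \lfloor n/3 \rfloor + 1$ regardless of the Prover's play. The Delayer thus scores at least $\lfloor n/3 \rfloor + 1$ points, so $\val(f) = \Omega(n)$ and $\rank(f) = \Omega(n)$ by Claim~\ref{claim: rank equals Prover Delayer game value}.

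The main (mildly) delicate step is pinning down the termination condition in the Prover-Delayer argument for this partial function, namely verifying that the Prover cannot profit from mixing $0$- and $1$-answers to end the game before both promise sides become unreachable; the weight-reachability observation above takes care of this. The remaining ingredients are a standard concentration argument and the trivial rank-at-most-depth inequality.
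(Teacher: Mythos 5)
Your proposal is correct and follows essentially the same route as the paper: the paper's lower bound also uses the Prover--Delayer characterization with the "always defer" strategy (observing that after $n/3$ Prover-chosen answers the restricted function is still non-constant on the promise), and its $\rrank(f)=O(1)$ bound likewise reduces to $\RDT(f)=O(1)$ together with rank being at most depth. You merely spell out the details (the Chernoff sampling argument and the weight-reachability check) that the paper leaves implicit.
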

    \begin{proof}
        Clearly $\rank(f) = O(n)$. For the lower bound, we use the equivalence from Claim~\ref{claim: rank equals Prover Delayer game value}. A valid Delayer strategy is as follows: allow the Prover to choose input values for their first $n/3$ queries. It is easy to see that no matter what values the Prover chooses, the function can never be restricted to become a constant after these $n/3$ queries. Thus, $\val(f) \geq n/3$ (and this can be easily seen to be tight). The randomized rank upper bound follows from the easy fact that $\RDT(f) = O(1)$ and $\rrank(f) \leq \RDT(f)$ for all $f$.
    \end{proof}

    When we restrict $f$ to be a total function, it is no longer clear whether or not randomized rank can be significantly smaller than rank. In view of the example above, one might be tempted to consider functions that witness maximal separations between deterministic and randomized query complexity. The current state-of-the-art separation of $\DT(f) = \Omega(n)$ vs.~$\RDT(f) = \tilde{O}(\sqrt{n})$ is witnessed by variants of `pointer jumping' functions~\cite{GPW18, ABBLSS17, MRS18}.
    One might hope to use a similar argument as in the proof of Claim~\ref{claim: apxmaj rank rrank} to show that $\rank(f) = \Omega(n)$ (and a randomized rank upper bound of $\tilde{O}(\sqrt{n})$ immediately follows from the randomized query upper bound). However, it is not hard to show that the rank of these functions is actually $\tilde{O}(\sqrt{n})$, rendering this approach useless for these variants of pointer jumping functions.
    Nevertheless, we are able to use such an approach to show a separation between rank and randomized rank for another function whose deterministic and randomized query complexities are polynomially separated.
    After an initial version of this paper was made public, a subsequent work~\cite{CDMRS23} used the equivalence between (randomized) rank and logarithm of (randomized) decision tree size to show a polynomial equivalence (up to a polylogarithmic factor in the input size) between rank and randomized rank for all Boolean functions. They also observed that a deterministic-randomized query complexity separation can be lifted to the same deterministic-randomized rank separation for a related function~\cite[Appendix~C]{CDMRS23}. Thus, the pointer jumping functions~\cite{GPW18, ABBLSS17, MRS18} can be used to construct a function witnessing a quadratic separation between rank and randomized rank. Finding the best possible (polynomial) separation between rank and randomized rank remains an interesting open question.

    In the remainder of this section, let $F : \zone^n \to \zone$ be defined as the function evaluated by a complete $(\log n)$-depth binary tree as described below. Assume $\log n$ to be an even integer, and the top node of this tree to be an $\OR$ gate. Nodes in subsequent layers alternate between $\AND$'s and $\OR$'s, and nodes at the bottom layer contain variables. We call this the complete AND-OR tree of depth $\log n$. See Figure~\ref{fig: andor tree} for a depiction of the complete depth-4 AND-OR tree.

    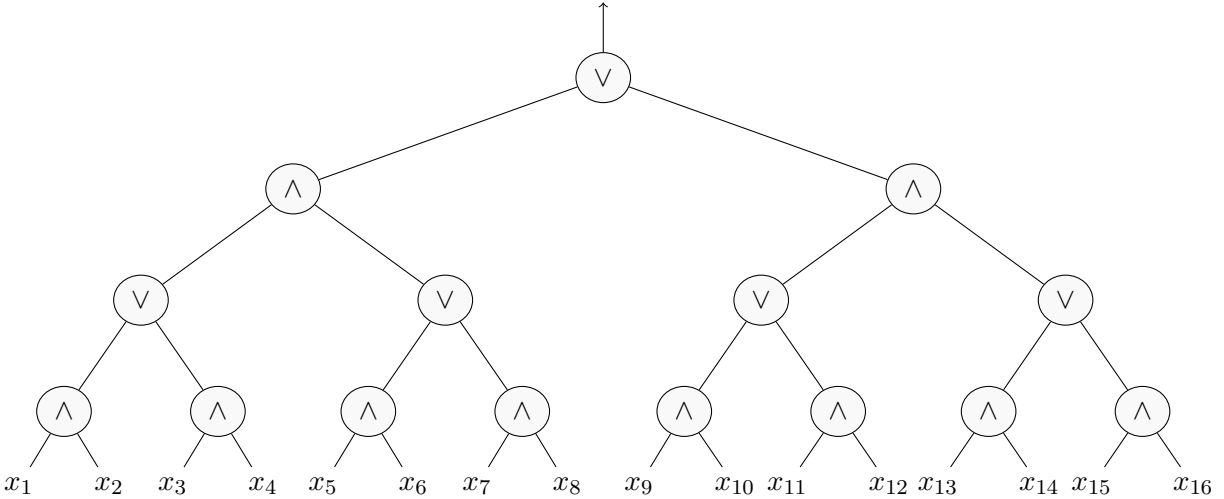
\begin{figure}
        \begin{center}
            \begin{tikzpicture}
                \node[comm] (v)   [] {$\vee$};
                \coordinate[above of = v] (output);
                \node[comm] (v0)  [below left=1 and 3 of v] {$\wedge$};
                \node[comm] (v1)  [below right=1 and 3 of v] {$\wedge$};

                \node[comm] (v00) [below left=1 and 1.3 of v0] {$\vee$};
                \node[comm] (v01) [below right=1 and 1.3 of v0] {$\vee$};
                \node[comm] (v10) [below left=1 and 1.3 of v1] {$\vee$};
                \node[comm] (v11) [below right=1 and 1.3 of v1] {$\vee$};

                \node[comm] (v000)  [below left=1 and .5 of v00] {$\wedge$};
                \node[comm] (v001)  [below right=1 and .5 of v00] {$\wedge$};
                \node[comm] (v010)  [below left=1 and .5 of v01] {$\wedge$};
                \node[comm] (v011)  [below right=1 and .5 of v01] {$\wedge$};
                \node[comm] (v100)  [below left=1 and .5 of v10] {$\wedge$};
                \node[comm] (v101)  [below right=1 and .5 of v10] {$\wedge$};
                \node[comm] (v110)  [below left=1 and .5 of v11] {$\wedge$};
                \node[comm] (v111)  [below right=1 and .5 of v11] {$\wedge$};

                \node[leaf] (v0000)  [below left=.5 and -.15 of v000] {$x_1$};
                \node[leaf] (v0001)  [below right=.5 and -.15 of v000] {$x_2$};
                \node[leaf] (v0010)  [below left=.5 and -.15 of v001] {$x_3$};
                \node[leaf] (v0011)  [below right=.5 and -.15 of v001] {$x_4$};
                \node[leaf] (v0100)  [below left=.5 and -.15 of v010] {$x_5$};
                \node[leaf] (v0101)  [below right=.5 and -.15 of v010] {$x_6$};
                \node[leaf] (v0110)  [below left=.5 and -.15 of v011] {$x_7$};
                \node[leaf] (v0111)  [below right=.5 and -.15 of v011] {$x_8$};
                \node[leaf] (v1000)  [below left=.5 and -.15 of v100] {$x_9$};
                \node[leaf] (v1001)  [below right=.5 and -.15 of v100] {$x_{10}$};
                \node[leaf] (v1010)  [below left=.5 and -.15 of v101] {$x_{11}$};
                \node[leaf] (v1011)  [below right=.5 and -.15 of v101] {$x_{12}$};
                \node[leaf] (v1100)  [below left=.5 and -.15 of v110] {$x_{13}$};
                \node[leaf] (v1101)  [below right=.5 and -.15 of v110] {$x_{14}$};
                \node[leaf] (v1110)  [below left=.5 and -.15 of v111] {$x_{15}$};
                \node[leaf] (v1111)  [below right=.5 and -.15 of v111] {$x_{16}$};

                \draw [-] (v)   --  (v0);
                \draw [-] (v)   --  (v1);
                \draw [-] (v0)   --  (v00);
                \draw [-] (v0)   --  (v01);
                \draw [-] (v1)   --  (v10);
                \draw [-] (v1)   --  (v11);
                \draw [-] (v00)   --  (v000);
                \draw [-] (v00)   --  (v001);
                \draw [-] (v01)   --  (v010);
                \draw [-] (v01)   --  (v011);
                \draw [-] (v10)   --  (v100);
                \draw [-] (v10)   --  (v101);
                \draw [-] (v11)   --  (v110);
                \draw [-] (v11)   --  (v111);

                \draw [-] (v000)   --  (v0000);
                \draw [-] (v000)   --  (v0001);
                \draw [-] (v001)   --  (v0010);
                \draw [-] (v001)   --  (v0011);
                \draw [-] (v010)   --  (v0100);
                \draw [-] (v010)   --  (v0101);
                \draw [-] (v011)   --  (v0110);
                \draw [-] (v011)   --  (v0111);
                \draw [-] (v100)   --  (v1000);
                \draw [-] (v100)   --  (v1001);
                \draw [-] (v101)   --  (v1010);
                \draw [-] (v101)   --  (v1011);
                \draw [-] (v110)   --  (v1100);
                \draw [-] (v110)   --  (v1101);
                \draw [-] (v111)   --  (v1110);
                \draw [-] (v111)   --  (v1111);

                \draw [->] (v)    --  (output);
            \end{tikzpicture}
            \caption{Complete AND-OR tree of depth 4}\label{fig: andor tree}
        \end{center}
    \end{figure}

    It is easy to see via an adversarial argument that $\DT(F) = n$. Saks and Wigderson~\cite{SW86} showed that $\RDT(F) = \Theta(n^{\log\frac{1 + \sqrt{33}}{4}}) \approx \Theta(n^{0.753\dots})$.

    \begin{theorem}[{\cite[Theorem~1.5]{SW86}}]\label{thm: saks wigderson}
        Let $F : \zone^n \to \zone$ be the complete AND-OR tree of depth $\log n$. Then
        \[
        \DT(F) = n, \qquad \RDT(F) = \Theta\rbra{n^{\log\frac{1 + \sqrt{33}}{4}}} \approx \Theta(n^{0.753\dots}).
        \]
    \end{theorem}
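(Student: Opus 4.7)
The theorem has three assertions: $\DT(F) = n$, $\RDT(F) = O(n^{\alpha})$, and $\RDT(F) = \Omega(n^{\alpha})$, where $\alpha = \log\rbra{(1+\sqrt{33})/4}$. The deterministic upper bound is trivial (query every leaf), so three nontrivial inequalities remain.

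For $\DT(F) \geq n$, the plan is a recursive adversary argument. Let $D(d)$ denote the deterministic query complexity of a depth-$d$ complete AND-OR subtree; I would show $D(d) \geq 2 D(d-1)$ by induction, with the base case $D(1) = 2$. Consider the root gate and its two depth-$(d-1)$ subtrees $g_L, g_R$. If the root is an OR, the adversary answers queries within $g_L$ recursively so as to force $g_L$'s value to $0$, which by the inductive hypothesis requires at least $D(d-1)$ queries; once $g_L$ is settled at $0$, the root's value equals $g_R$'s value, so the adversary can further force the algorithm to evaluate $g_R$ completely, costing another $D(d-1)$ queries. Symmetrically, if the root is an AND, the adversary forces each child to $1$ in turn. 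This yields $D(\log n) \geq n$.

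For $\RDT(F) = O(n^{\alpha})$, I would analyse the Snir--Saks--Wigderson recursive randomized algorithm: to evaluate an OR node, pick one child uniformly at random, recursively evaluate it, and only evaluate the other child if the first returned $0$; symmetrically for AND nodes. Let $R_0(d)$ and $R_1(d)$ denote the worst-case expected number of queries on a depth-$d$ subtree conditioned on its value being $0$ and $1$, respectively. A case analysis over the four value-patterns of the two grandchildren of a gate yields a coupled linear system in $R_0, R_1$ that, when telescoped across one OR--AND pair of levels, admits a closed-form solution growing like $c^d$ with $c$ equal to the positive root of the characteristic equation of this two-level recurrence; solving gives $c = (1+\sqrt{33})/4$. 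Since the tree has $n = 2^d$ leaves, this yields expected cost $O(n^{\log_2 c}) = O(n^{\alpha})$.

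For the matching lower bound $\RDT(F) = \Omega(n^{\alpha})$, I would apply Yao's minimax principle: it suffices to exhibit a distribution $\mu$ on inputs under which every deterministic decision tree has expected cost $\Omega(n^{\alpha})$. The natural candidate is the Saks--Wigderson hard distribution, defined recursively at each gate so that the gate's value is appropriately biased between $0$ and $1$; paired with this distribution one defines a recursive potential function that satisfies the same two-level recurrence as the upper-bound algorithm. The main obstacle I anticipate lies precisely here: one must prove that the potential decreases by the correct factor regardless of which leaf a deterministic algorithm queries next — i.e., the bound must hold uniformly over adversarial query orders rather than only over a balanced one — which is the technical heart of the Saks--Wigderson argument. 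Combining the upper and lower bounds yields $\RDT(F) = \Theta(n^{\alpha})$.
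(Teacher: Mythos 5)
The paper cites this result directly from Saks and Wigderson \cite[Theorem~1.5]{SW86} and gives no proof of its own, so there is no internal argument to compare against; your sketch retraces the original Saks--Wigderson approach, which is essentially the only known route to the tight bound. Your outline of the randomized upper bound (the directional algorithm together with the two-level $R_0/R_1$ recurrence whose characteristic root is $(1+\sqrt{33})/4$) is correct. The deterministic lower bound is morally right, but as written it implicitly assumes the algorithm finishes querying $g_L$ before touching $g_R$; a deterministic algorithm may interleave arbitrarily. The fix is to run independent sub-adversaries on $g_L$ and $g_R$, each of which, by the inductive hypothesis, keeps its subtree's value undetermined --- with both outcomes still achievable --- until all of its leaves have been read, and to set whichever subtree is completed first to the non-absorbing value ($0$ at an OR, $1$ at an AND), so that the root's value becomes that of the other subtree, which must then also be fully queried.

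The substantive gap is the randomized lower bound $\RDT(F) = \Omega\bigl(n^{\log\frac{1+\sqrt{33}}{4}}\bigr)$. You correctly identify Yao's principle, the recursive hard distribution, and the crux --- that the potential must decay at the claimed rate against \emph{every} deterministic query order, not merely the directional one --- but you stop there, acknowledging you have not carried it out. That step is precisely what makes the Saks--Wigderson lower bound nontrivial: one has to argue that no adaptive query policy can extract the root's value from the hard distribution more cheaply than the directional policy does. Without it your argument establishes only $\RDT(F) = O\bigl(n^{\log\frac{1+\sqrt{33}}{4}}\bigr)$; as a proof of the full theorem, the proposal is incomplete.
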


    We show that the rank of $F$ equals $(n + 2)/3$.
    \begin{theorem}\label{thm: rank of complete nand tree}
        Let $F : \zone^n \to \zone$ be the complete AND-OR tree of depth $\log n$. Then
        \[
        \rank(F) = \frac{n + 2}{3}.
        \]
    \end{theorem}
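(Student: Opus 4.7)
The plan is to invoke the equivalence $\rank(F) = \val(F)$ from Claim~\ref{claim: rank equals Prover Delayer game value} and prove $\val(F) = (n+2)/3$ by induction on even $h := \log n$. Writing $F_h := F$ to track depth, the base case $h = 2$ is a direct calculation on $F_2 = (x_1 \wedge x_2) \vee (x_3 \wedge x_4)$, yielding $r(2) = 2$. For the inductive step, I exploit the recursive structure $F_h = (A \wedge B) \vee (C \wedge D)$, where $A, B, C, D$ are four variable-disjoint copies of $F_{h-2}$, and establish the recurrence $r(h) = 4 r(h-2) - 2$, whose solution with the base case is exactly $(2^h + 2)/3$.

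For the upper bound, I would construct a decision tree recursively. Given a rank-$a$ DT for $F_{h-2}$ with $a = r(h-2)$, first build a DT for $G_{h-1} := F_{h-2} \wedge F_{h-2}'$ by running the DT for $F_{h-2}$ and replacing each 0-leaf by a single 0-leaf (since $G_{h-1} = 0$ whenever $F_{h-2} = 0$) and each 1-leaf by a DT for $F_{h-2}'$. A careful propagation through the rank recurrence of Definition~\ref{defn: rank} yields rank at most $2a - 1$ (one less than the naive $2a$): the asymmetric leaf replacement saves one rank unit along the spine. Iterating analogously for $F_h = G_{h-1} \vee G_{h-1}'$ (with 1-leaves now terminal) gives rank at most $2(2a - 1) = 4a - 2$.

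For the lower bound, I would use the Prover-Delayer recursion
\[
\val(\phi) = \min_{x} \max\rbra{\val(\phi|_{x=0}),\ \val(\phi|_{x=1}),\ 1 + \min(\val(\phi|_{x=0}), \val(\phi|_{x=1}))},
\]
where $x$ ranges over variables of $\phi$; this encodes Prover's query-selection and Delayer's commit-to-0/commit-to-1/defer choices. The key subclaim (by nested induction) is that $\val(F_h|_{x=b}) \geq r(h) - 1$ for every variable $x$ and each $b \in \{0, 1\}$. The defer option then yields $\val(F_h) \geq 1 + (r(h) - 1) = r(h)$.

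The main obstacle is the ``$-2$'' saving in the recurrence. Applying the additive bound $\rank(\phi \circ \psi) \leq \rank(\phi) + \rank(\psi)$ twice only gives $r(h) \leq 4 r(h-2)$. For the upper bound, the saving requires structural analysis of how 0- and 1-leaves are distributed along the rank-achieving spine of the recursively constructed DT, so that at each of the two doubling steps only one type of leaf expands into a non-trivial subtree. For the lower bound, establishing $\val(F_h|_{x=b}) \geq r(h) - 1$ for both $b$ requires its own parallel induction over the family of restricted, non-complete AND-OR subformulas that arise from setting a single variable of $F_h$.
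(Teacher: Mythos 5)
Your skeleton is right: the reduction to the Prover--Delayer value via Claim~\ref{claim: rank equals Prover Delayer game value}, the base case $r(2)=2$, and the recurrence $r(h)=4r(h-2)-2$ all check out and do solve to $(2^h+2)/3$. But both directions rest on claims whose stated justification does not hold up, and these are precisely the crux of the theorem.

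For the upper bound, ``asymmetric leaf replacement saves one rank unit along the spine'' is false as a general principle: take the rank-$1$ decision list for $x_1\vee x_2$ and replace only its two $1$-leaves by rank-$1$ trees for $x_3\vee x_4$. The two children of the root then both acquire rank $1$, the root gets rank $2$, and indeed $\rank((x_1\vee x_2)\wedge(x_3\vee x_4))=2=1+1$, with no saving. So the $2a-1$ bound for $F_{h-2}\wedge F_{h-2}'$ cannot follow merely from the fact that only $1$-leaves are expanded; it needs a structural invariant of the recursively built tree (tracking which subtrees along the rank-critical path have equal rank before and after substitution). The tell-tale symptom is that your argument, applied verbatim to the OR step $G_{h-1}\vee G_{h-1}'$ --- where again only one type of leaf is expanded --- would yield $2(2a-1)-1=4a-3$ and contradict the lower bound, so whatever invariant you formulate must explain why the saving occurs at the $\wedge$-composition but not at the $\vee$-composition. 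For the lower bound, the subclaim $\val(F_h|_{x=b})\ge r(h)-1$ is likewise not obtainable from an induction confined to single-variable restrictions: after the Delayer's first deferred point the game continues on $F_h|_{x=b}$, and extracting $r(h)-1$ further points requires lower-bounding $\val$ of every multiply-restricted AND--OR formula reachable during play. In other words, you need a potential function defined on arbitrary reduced AND--OR trees that the Delayer can preserve; this is exactly what the paper constructs (Definition~\ref{def:progress-measure-lb}), and its essential feature --- OR-nodes accumulate their children's costs additively while AND-nodes are capped at cost $1$ --- is also what accounts for the $\wedge$-versus-$\vee$ asymmetry your recurrence needs. Without supplying such an invariant, both halves of the plan remain open.
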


    This proof uses the characterization of rank by the value of the Prover-Delayer game in Claim~\ref{claim: rank equals Prover Delayer game value}. For the lower bound, we exhibit a strategy of the Delayer that scores $(n + 2)/3$ points regardless of the strategy of the Prover. For the upper bound, we show a strategy of the Prover which ensures that the Delayer can score at most $(n + 2)/3$ points.

    Before we exhibit these strategies, we describe how partial assignments to the AND-OR tree can be captured by modifications of the tree itself. To that end, we formalize required properties of AND-OR trees in the following definition.

    \begin{definition}
        Let $n \in \N$, and let $\T$ be a tree, whose internal nodes are labeled by either $\land$ or $\lor$, and whose leaves are labeled by indices in $[n]$, such that no two leaves are labeled by the same index. Then, we refer to $\T$ as an \emph{AND-OR tree}. Edges are directed from the root to leaves, and if $\T$ contains no nodes, we say that it is empty. When $\T$ is non-empty, then the corresponding Boolean function it computes is denoted by $f_{\T} : \{0,1\}^S \to \{0,1\}$, where $S \subseteq [n]$ is the set of index labels that correspond to the leaves present in $\T$. If $\T$ is empty, we say that it computes a constant function, and we have to additionally supply the outcome $b \in \{0,1\}$. We often identify $\T$ with $f_\T$.

        We define some notation to describe the nodes within $\T$:
        \begin{enumerate}
            \item We denote the root node of $\T$ by $\root(\T)$.
            \item For a leaf node $v$, we denote the variable it is querying by $\index(v) \in [n]$.
            \item For a node $v$, define $\llabel(v) \in \{\land, \lor, \leaf\}$ to be the type of node.
            \item For a non-root node $v$, define $\parent(v)$ to be the node in the tree that is the parent of $v$.
            \item For an internal node $v$, define $\child(v)$ to be the set
            \[\child(v)\coloneqq \{ w \in \T: v = \parent(w)\}.\]
            \item For a non-root node $v$, we define its \emph{proper parent}, $\pparent(v)$, recursively as
            \[\pparent(v) = \begin{cases}
                \mathsf{undefined}, & \text{if } v = \root(\T), \\
                \parent(v), & \text{if } |\child(\parent(v))| \geq 2, \\
                \pparent(\parent(v)), & \text{otherwise}.
            \end{cases}\]
        \end{enumerate}
        We say that $\T$ is in \emph{reduced form} if all internal nodes have at least two children, and none of these are labeled by the same gate as the parent.
    \end{definition}

    By fixing a subset of the input variables as per a partial assignment $\rho \in \{0,1,\bot\}^n$, any AND-OR tree can be simplified to a reduced tree $\T|_{\rho}$, i.e., leaves can be removed, and in many cases gates can be removed as well. When we fix a bit in our partial assignment, then the resulting simplification process can be split into two steps, which we refer to as the $\update$ and the $\contract$ steps, respectively. We formally define the corresponding routines in Algorithms~\ref{alg: update}~and~\ref{alg: contract}.

    \begin{algorithm}[H]
        \caption{$\update$ subroutine}\label{alg: update}
        \begin{algorithmic}[1]
            \Require $\T, i \in [n], b \in \{0,1\}$.
            \If{$\exists$ leaf $\ell$ in $\T$ such that $\index(\ell) = i$}
            \If{$\ell$ is the only leaf in $\T$}
            \State $\T \gets $ the empty tree computing $b$.
            \ElsIf{($b = 1$ and $\llabel(\parent(\ell)) = \wedge$) or ($b = 0$ and $\llabel(\parent(\ell)) = \vee$)}
            \State Remove the leaf $\ell$ and the edge above it from $\T$.
            \ElsIf{$\parent(\ell) = \root(\T)$}
            \State $\T \gets $ the empty tree computing $b$.
            \Else
            \State Remove the edge above $\parent(\ell)$, and the subtree rooted at $\parent(\ell)$.
            \EndIf
            \EndIf
            \Ensure $\T$.
        \end{algorithmic}
    \end{algorithm}

    \begin{algorithm}[H]
        \caption{$\contract$ subroutine}\label{alg: contract}
        \begin{algorithmic}[1]
            \Require $\T$
            \While{there is a node $v$ in $\T$ with $|\child(v)|= 1$}
            \If{$v$ is not $\root(\T)$}
            \State $\parent(\child(v)) \gets \parent(v)$
            \EndIf
            \State Remove $v$ from $\T$.
            \EndWhile
            \While{there are vertices $a, a' \in \T$ with $a' = \parent(a)$ and $\llabel(a) = \llabel(a')$}
            \For{$w \in \child(a)$}
            \State $\parent(w) \gets a'$
            \EndFor
            \State Remove $a$ from $\T$.
            \EndWhile
            \Ensure $\T$.
        \end{algorithmic}
    \end{algorithm}

    We now formally prove that the $\update$ procedure modifies the AND-OR tree in such a way that the Boolean function computed by the updated tree indeed computes the function computed by the original tree after updating the partial assignment accordingly. This is the objective of the following claim.

    \begin{claim}
        Let $f : \{0,1\}^n \to \{0,1\}$ be a Boolean function, and let $\rho \in \{0,1,\bot\}^n$ be a partial assignment. Let $\T$ be an AND-OR tree in reduced form, computing $f|_{\rho}$. Let $i \in [n]$ be such that $\rho_i = \bot$, and let $b \in \{0,1\}$. Then, the AND-OR tree $\T' = \update(\T,i,b)$ computes the Boolean function $f|_{\rho'}$, where $\rho'_i = b$, and $\rho'_j = \rho_j$ for $j \neq i$.
    \end{claim}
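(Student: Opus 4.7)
The plan is to proceed by case analysis on whether some leaf of $\T$ has label $i$. Suppose first that no leaf of $\T$ carries label $i$. Since $\T$ computes $f|_{\rho}$ and the Boolean function $f_{\T}$ does not depend on $x_i$, the restriction $f|_{\rho}$ is also independent of $x_i$. Hence $f|_{\rho'} = f|_{\rho}$, and since $\update$ leaves $\T$ untouched in this branch, $\T' = \T$ continues to compute $f|_{\rho'}$.

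In the remaining case, let $\ell$ be the (unique, by the no-repeated-indices condition) leaf of $\T$ with $\index(\ell) = i$. If $\ell$ is the only leaf of $\T$, then $f_{\T}(x) = x_i$, so $f|_{\rho'}$ is the constant function $b$, which is exactly what the empty tree returned by the algorithm is defined to compute. Otherwise, set $p = \parent(\ell)$; since $\T$ is in reduced form, $|\child(p)| \geq 2$, so $\ell$ has at least one sibling. If $b$ is the identity element of the gate at $p$ (that is, $b = 1$ with $\llabel(p) = \wedge$, or $b = 0$ with $\llabel(p) = \vee$), the value computed at $p$ is unchanged when the input from $\ell$ is removed, and the tree obtained by deleting $\ell$ and the edge above it still computes $f|_{\rho'}$. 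If instead $b$ is the absorbing element of $\llabel(p)$, then the gate at $p$ evaluates to the constant $b$ regardless of its other inputs; the algorithm accordingly deletes the edge above $p$ together with the subtree rooted at $p$. If $p = \root(\T)$, the entire tree is removed and the resulting empty tree is taken to compute $b$, matching the value of $\T$ under $x_i = b$. If $p$ is not the root, let $p' = \parent(p)$; reduced form gives $\llabel(p') \neq \llabel(p)$, and the Boolean duality that $0$ is absorbing for $\wedge$ but identity for $\vee$ (and symmetrically for $1$) then forces $b$ to be the identity element of $\llabel(p')$. Detaching the $b$-valued child $p$ from $p'$ therefore preserves the value at $p'$, and hence the Boolean function computed by the whole tree.

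The crux, and the only place the hypothesis of reduced form is genuinely needed, is this last sub-case: without reduced form, $\llabel(p')$ could equal $\llabel(p)$, in which case detaching $p$ would flip the value at $p'$ to $b$ as well and the $b$-propagation would have to cascade upward, which the algorithm does not do. A minor wrinkle is that after removal, $p'$ (or $p$, in the identity sub-case) may be left with a single surviving child, so the output of $\update$ need not itself be in reduced form; however, a unary gate computes the identity on its child, so the Boolean function is preserved, and restoring reduced form is precisely what $\contract$ handles in a separate pass. Modulo this observation, the case analysis above suffices.
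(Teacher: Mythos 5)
Your proof is correct and follows the same case analysis as the paper: no matching leaf, $\ell$ being the only leaf, $b$ being the identity of $\llabel(\parent(\ell))$, and $b$ being the absorbing element. Where you go further is the absorbing case: the paper's proof stops at ``the other inputs to $v$ are now redundant, therefore we can remove $v$ and the subtree rooted at it,'' without spelling out why detaching the subtree rooted at $\parent(\ell)$ from the rest of the tree leaves the function at $\parent(\parent(\ell))$ unchanged. You fill this in — reduced form gives $\llabel(\parent(\parent(\ell))) \neq \llabel(\parent(\ell))$, and the absorbing element of one gate is precisely the identity element of the other, so detaching the $b$-valued child does not alter its parent's output — and you are right that this is the place where the reduced-form hypothesis does genuine work beyond ensuring $|\child(\parent(\ell))| \geq 2$. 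Your closing remark about $\update$ possibly producing a non-reduced tree and $\contract$ restoring reduced form in a separate pass matches the paper's own remark immediately after the claim. In short: same route, slightly more careful bookkeeping at the one step that deserves it.
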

    \begin{proof}
        We consider multiple cases.
        \begin{itemize}
            \item	If there does not exist a leaf $\ell$ in $\T$ such that $\index(\ell) = i$, then $f|_{\rho}$ does not depend on the $i$'th input bit. Therefore, $f|_{\rho} = f|_{\rho'}$. Hence $\T'$, which is the same as $\T$, computes $f_{\rho'}$.

            \item Suppose such a leaf does exist and let $\ell$ be the leaf in $\T$ such that $\index(\ell) = i$. If $\ell$ is the only leaf present in $\T$, then $f|_{\rho}(x) = x_i$, which means that after setting $\rho_i' = b$, the function value is determined, i.e., $f|_{\rho'} = b$, which corresponds to the tree $\T'$ being empty.

            \item If $\ell$ is a child of an $\land$-node, say $v$, and $b = 1$, then the output of $v$ is completely determined by the other edges going into it. Since we assumed $\T$ to be in reduced form, at least one other such edge must exist. Therefore, the node $\ell$ can be removed, and the new tree $\T'$ computes $f|_{\rho'}$. The same argument holds when $v$ is an $\lor$-node and $b = 0$.

            \item Now suppose that $\parent(\ell)$ is the root of the tree, it is labeled by $\land$, and $b = 0$. In that case, we the $\land$ evaluated to $0$, and therefore the function value is completely determined. Thus, $f|_{\rho'} = 0$, and we output the empty tree computing $0$. The same argument applies when $\parent(\ell)$ is the root of the tree, it is labeled by $\lor$, and $b = 1$.

            \item Finally, if $\ell$ is a child of an $\land$-node that is not the root of $\T$, say $v$, and $b = 0$, then the output of $v$ is always $0$, and the other inputs to $v$ are now redundant. Therefore, we can remove $v$ and the subtree rooted at it. The same holds for the case when $v$ is an $\lor$-node and $b = 1$.
        \end{itemize}

        This completes the proof.
    \end{proof}

    Note that the tree that $\update$ outputs might not be in reduced form. Next, we show that the $\contract$ routine takes as input an AND-OR tree, converts it into reduced form, and does not change the function it computes.

    \begin{claim}
        Let $f : \{0,1\}^n \to \{0,1\}$ be a Boolean function, and let $\rho \in \{0,1,\bot\}^n$ be a partial assignment. Let $\T$ be an AND-OR tree computing $f|_{\rho}$. Then $\T' = \contract(\T)$ is an AND-OR tree in reduced form that computes $f|_{\rho}$.
    \end{claim}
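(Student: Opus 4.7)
The plan is to establish three properties of $\contract$ in sequence: termination of both while loops, invariance of the Boolean function computed by $\T$ under each iteration, and the reduced-form property of the final output. Throughout, I would treat the two while loops separately, and handle the empty-tree case as a trivial base (both loops do nothing and an empty tree is vacuously in reduced form).

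First, termination is immediate because each iteration of either loop strictly decreases the number of nodes in $\T$, and $\T$ is finite. Next, to show that each iteration preserves the function computed, I would argue as follows. In the first loop, if an internal node $v$ has a unique child $w$, then $\llabel(v) \in \{\land, \lor\}$, and both gates act as the identity on a single argument, so the value computed at $v$ is equal to the value computed at $w$; hence rerouting $\parent(\child(v)) \gets \parent(v)$ and deleting $v$ (or, if $v = \root(\T)$, promoting $w$ to the new root) does not change $f_{\T}$. In the second loop, if $a' = \parent(a)$ and $\llabel(a) = \llabel(a') \in \{\land,\lor\}$, then by associativity of $\land$ and $\lor$, replacing $a$'s contribution to $a'$ by the list of $a$'s children's contributions yields the same value at $a'$, so again $f_{\T}$ is unchanged.

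Finally, to establish the reduced-form property, I would verify the two defining conditions at termination: (i) every internal node has at least two children, and (ii) no internal node carries the same label as its parent. Condition (i) holds when the first loop exits, by its exit condition. I would then show that the second loop preserves (i): when it collapses $a$ into $a'$, both of these are internal nodes (since $\llabel(a), \llabel(a') \in \{\land,\lor\}$), so by the invariant each has at least two children; after the collapse $a'$ loses one child ($a$ itself) but gains at least two (the children of $a$), leaving it with at least three children, and no other node's child-count is affected. Condition (ii) holds at termination of the second loop by its exit condition.

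The main subtlety, and the place where the argument requires care rather than cleverness, is the interaction that occurs when $a$'s children are re-parented to $a'$ in the second loop: one of $a$'s children $c$ may now share its label with the new parent $a'$ (even if it had a different label from its old parent $a$, the labels of $a$ and $a'$ agree by hypothesis, so this can occur). The resolution is that the second while loop is driven by a global check for \emph{any} same-label parent-child pair in $\T$, so any such newly created pair is processed in a subsequent iteration; the loop exits only once condition (ii) holds throughout the tree. This closes the argument that $\T' = \contract(\T)$ is in reduced form and computes $f|_{\rho}$.
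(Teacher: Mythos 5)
Your proposal is correct and follows essentially the same route as the paper: analyze the two while loops separately, argue that single-child gates act as the identity and that same-label parent--child pairs merge by associativity, and read the reduced-form conditions off the loop exit conditions. Your explicit treatment of termination and of newly created same-label pairs after re-parenting is slightly more careful than the paper's, but it is the same argument.
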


    \begin{proof}
        We analyze the two \texttt{while} loops in Algorithm~\ref{alg: contract}.
        \begin{itemize}
            \item
            In the loop beginning at Line 1, we are selecting internal nodes that have exactly one child. Since these internal nodes are either $\land$- or $\lor$-gates, they act as the identity function, which means that they are redundant. Therefore, they can be removed without altering the function that is being computed, and the their child can be directly connected to their parent, if the node itself is not the root node to begin with. This is precisely what is done in Lines 2-4. After this first \texttt{while} block, we ensure that all internal nodes have at least two children, and the function computed by the tree has not changed. This part of the algorithm is described in Figure~\ref{fig:step1}.

            \begin{figure}[h!]
                \centering
                \begin{tabular}{ccc}
                    \begin{tikzpicture}[gate/.style={draw,circle,minimum size=1.2em,inner sep=0}]
                        \node[gate] (child) at (0,0) {};
                        \node[above right=.3em] at (child) {$w$};
                        \node[gate] (parent) at (0,1) {};
                        \node[above right=.3em] at (parent) {$v$};
                        \node (grandparent) at (0,2) {$\triangle$};
                        \node (leftchild) at (-1,-1) {$\triangle$};
                        \node at (0,-1) {$\cdots$};
                        \node (rightchild) at (1,-1) {$\triangle$};
                        \draw[->] (grandparent) to (parent);
                        \draw[->] (parent) to (child);
                        \draw[->] (child) to (leftchild);
                        \draw[->] (child) to (rightchild);
                    \end{tikzpicture} & \raisebox{18mm}{$\mapsto$} & \begin{tikzpicture}[gate/.style={draw,circle,minimum size=1.2em,inner sep=0}]
                        \node[gate] (new) at (0,.5) {};
                        \node[above right=.3em] at (new) {$w$};
                        \node (grandparent) at (0,2) {$\triangle$};
                        \node (leftchild) at (-1,-1) {$\triangle$};
                        \node at (0,-1) {$\cdots$};
                        \node (rightchild) at (1,-1) {$\triangle$};
                        \draw[->] (grandparent) to (new);
                        \draw[->] (new) to (leftchild);
                        \draw[->] (new) to (rightchild);
                    \end{tikzpicture} \\
                \end{tabular}
                \caption{Graphical depiction of the first step in the $\contract$ procedure where $|\child(v)| = 1$}
                \label{fig:step1}
            \end{figure}
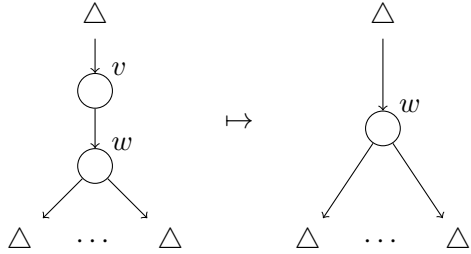

            \item In the loop beginning at Line 5, we select pairs of neighboring vertices (say $v$ and $w = \parent(v)$) that have the same label. In this case, $v$ and $w$ can be contracted without changing the function being computed (due to the associativity of $\land$ and $\lor$). Concretely, this means that the algorithm takes the children of $v$, and adds them as children to $w$ (Line 6-7), after which $v$ is removed (Line 8). This can only increase the number of children of $w$, which means that its number of children must still be at least two. Moreover, this process is repeated until there are no more neighboring vertices that have the same label. Therefore, the resulting tree $\T'$ is in reduced form. This step is displayed in Figure~\ref{fig:step2}.

            \begin{figure}[h!]
                \centering
                \begin{tabular}{ccc}
                    \begin{tikzpicture}[gate/.style={draw,circle,minimum size=1.2em,inner sep=0}]
                        \node[gate] (child) at (0,0) {$\cdot$};
                        \node[above right=.3em] at (child) {$v$};
                        \node[gate] (parent) at (-1.5,1) {$\cdot$};
                        \node[above right=.3em] at (parent) {$w$};
                        \node (leftmiddlechild) at (-3,0) {$\triangle$};
                        \node at (-2,0) {$\cdots$};
                        \node (rightmiddlechild) at (-1,0) {$\triangle$};
                        \node (grandparent) at (-1.5,2) {$\triangle$};
                        \node (leftchild) at (-1,-1) {$\triangle$};
                        \node at (0,-1) {$\cdots$};
                        \node (rightchild) at (1,-1) {$\triangle$};
                        \draw[->] (grandparent) to (parent);
                        \draw[->] (parent) to (child);
                        \draw[->] (parent) to (leftmiddlechild);
                        \draw[->] (parent) to (rightmiddlechild);
                        \draw[->] (child) to (leftchild);
                        \draw[->] (child) to (rightchild);
                    \end{tikzpicture} & \raisebox{18mm}{$\mapsto$} & \begin{tikzpicture}[gate/.style={draw,circle,minimum size=1.2em,inner sep=0}]
                        \node[gate] (parent) at (-1.5,1) {$\cdot$};
                        \node[above right=.3em] at (parent) {$w$};
                        \node (leftmiddlechild) at (-3,0) {$\triangle$};
                        \node at (-2,0) {$\cdots$};
                        \node (rightmiddlechild) at (-1,0) {$\triangle$};
                        \node (grandparent) at (-1.5,2) {$\triangle$};
                        \node (leftchild) at (-1,-1) {$\triangle$};
                        \node at (0,-1) {$\cdots$};
                        \node (rightchild) at (1,-1) {$\triangle$};
                        \draw[->] (grandparent) to (parent);
                        \draw[->] (parent) to (leftmiddlechild);
                        \draw[->] (parent) to (rightmiddlechild);
                        \draw[->] (parent) to (leftchild);
                        \draw[->] (parent) to (rightchild);
                    \end{tikzpicture} \\
                \end{tabular}
                \caption{Graphical depiction of the second step in the $\contract$ procedure.}
                \label{fig:step2}
            \end{figure}
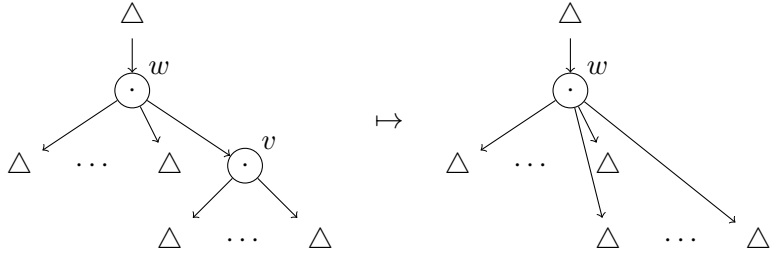
        \end{itemize}
        This completes the proof.
    \end{proof}

    Now that we have described two routines to modify an AND-OR tree $\T$, we can describe the specific strategies for the Delayer and Prover, that witness lower and upper bounds, respectively, on the rank of the complete binary AND-OR tree.

    \subsubsection{Delayer's Strategy (Lower Bound)} The Delayer starts by constructing the complete binary AND-OR tree $\T$ on $n$ bits. Next, whenever the Prover asks variable $i$, Algorithm~\ref{alg: Delayer strat} describes the procedure that the Delayer follows. The output of this algorithm is a new AND-OR tree $\T$, which is being used as input for Algorithm~\ref{alg: Delayer strat} upon the next query of the Prover.

    \begin{algorithm}[H]
        \caption{Delayer strategy for a reduced AND-OR tree $\T$}\label{alg: Delayer strat}
        \begin{algorithmic}[1]
            \Require $\T$, $i$
            \State $\ell \gets $ leaf in $\T$ such that $\index(\ell) = i$.
            \If{$\ell$ is the only leaf in $\T$}
            \State $b \gets $ Prover's choice.
            \ElsIf{$\llabel(\parent(\ell)) = \lor$}
            \State $b \gets 0$.
            \Else
            \State $v \gets \parent(\ell)$.
            \If{$|\child(v)| > 2$}
            \State $b \gets 1$.
            \Else
            \If{$v = \root(\T)$}
            \State $b \gets 1$.
            \Else
            \State $\{m\} \gets \child(v) \setminus \{\ell\}$.
            \If{$\exists x \in \child(m) : \llabel(x) = \land$}
            \State $b \gets 1$.
            \Else
            \State $u \gets \parent(v)$.
            \If{$\forall x \in \child(u) \setminus \{v\}$, $\llabel(x) = \textrm{leaf}$}
            \State $b \gets 1$.
            \Else
            \State $b \gets $ Prover's choice.
            \EndIf
            \EndIf
            \EndIf
            \EndIf
            \EndIf
            \Ensure $\contract(\update(\T,i,b))$.
        \end{algorithmic}
    \end{algorithm}

    Next, we argue that the Delayer always earns exactly $(n+2)/3$ points with this strategy. To that end, we introduce the following progress measure.
    \begin{definition}\label{def:progress-measure-lb}
        Let $\T$ be an AND-OR tree (not necessarily in reduced form). We define the following cost function recursively on the nodes of $\T$, as
        \[c(v) = \begin{cases}
            0, & \text{if } \llabel(v) = \leaf, \\
            c(w), & \text{if } \child(v) = \{w\}, \\
            1, & \text{if } \llabel(v) = \land, |\child(v)| \geq 2, \\
            \displaystyle \sum_{w \in \child(v)} c(w), & \text{if } \llabel(v) = \lor.
        \end{cases}\]
        We also define the set of marked nodes $M$, as
        \begin{align*}
            M =\;& \{v \in \T : \llabel(v) = \lor \text{ and } |\child(v)| \geq 2 \\
            &\qquad \text{ and } (\pparent(v) = \mathsf{undefined} \text{ or } \llabel(\pparent(v)) = \land)\}.
        \end{align*}
        We now define a \textit{progress measure} of $\T$, as
        \[P(\T) = \begin{cases}
            \displaystyle 1 + \sum_{v \in M} \max\{c(v) - 1, 0\}, & \text{if}~\T~ \text{is non-empty}, \\
            0, & \text{otherwise}.
        \end{cases}\]
    \end{definition}

    We observe that $c(v)$ only recursively depends on $\cbra{c(w) : w \in \child(v)}$. Consequently, if we prove that the cost function is not altered on all children of $v$ then $c(v)$ is also not altered. In particular, it is completely determined by the subtree rooted at $v$.

    Now, we show that as long as the progress function is positive, the game does not end.

    \begin{claim}
        \label{claim:PositiveP-implies-NonEmptyT}
        If $P(\T)>0$, then the function computed by $\T$ is not a constant function.
    \end{claim}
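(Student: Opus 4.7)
The plan is to reduce the claim to a purely structural statement about non-empty AND-OR trees, and then to establish that statement by induction on the number of nodes. The first step is to note that, directly from Definition~\ref{def:progress-measure-lb}, $P(\T) > 0$ if and only if $\T$ is non-empty. Indeed, when $\T$ is empty we have $P(\T) = 0$ by fiat; and when $\T$ is non-empty we have
\[
P(\T) \;=\; 1 + \sum_{v \in M} \max\{c(v) - 1,\, 0\} \;\geq\; 1,
\]
since every summand is non-negative. Thus the claim reduces to the statement that every non-empty AND-OR tree (with leaf labels pairwise distinct, as required by our definition) computes a non-constant Boolean function.

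Next, the plan is to prove this reduced statement by induction on the number of nodes of $\T$. For the base case, a single-node tree must consist of a lone leaf, say labeled by $i \in [n]$, whence $f_{\T}(x) = x_i$ is non-constant. For the inductive step, let $v$ be the root of $\T$ with children $w_1, \dots, w_k$, and let $\T_j$ be the subtree rooted at $w_j$. Each $\T_j$ is a strictly smaller non-empty AND-OR tree, so by the inductive hypothesis $f_{\T_j}$ is non-constant. Because no two leaves of $\T$ share an index label, the variable sets on which the functions $f_{\T_1}, \dots, f_{\T_k}$ depend are pairwise disjoint. Consequently we may set each $f_{\T_j}$ to $0$ or to $1$ independently via appropriate partial assignments to its own variables. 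Setting all $f_{\T_j} = 0$ forces $f_{\T} = 0$ (whether $v$ is an $\wedge$- or an $\vee$-node), and setting all $f_{\T_j} = 1$ forces $f_{\T} = 1$, so $f_{\T}$ is non-constant.

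The main, and essentially only, subtlety is that $\T$ need not be in reduced form, so internal nodes may have a single child, and sibling subtrees may share a gate label. Neither issue affects the induction: a chain of unary gates above a subtree computes exactly the function computed by the subtree itself, and the disjointness of variable sets across sibling subtrees is guaranteed by the distinct-label assumption regardless of what gates are used. Hence the induction goes through cleanly and, combined with the equivalence $P(\T) > 0 \iff \T$ is non-empty, yields the claim.
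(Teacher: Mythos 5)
Your proposal is correct and rests on the same key observation as the paper's proof: a positive progress measure forces $\T$ to be non-empty, and then the all-zeros assignment makes $\T$ evaluate to $0$ while the all-ones assignment makes it evaluate to $1$. The paper states this directly in one line, whereas you package the same fact as an induction on the tree size; the extra formality is harmless but not a different argument.
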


    \begin{proof}
        If $P(\T)>0$, then $\T$ is not empty. Thus, there is at least one leaf (variable) in the tree. Setting values of all leaves to $0$ causes $\T$ to evaluate to $0$, and setting values of all leaves to $1$ causes $\T$ to evaluate to $1$.
    \end{proof}

    Next, we show that the $\contract$ procedure does not affect the progress measure.

    \begin{claim}
        \label{claim:Contract-Changes-No-P}
        The $\contract$ procedure in Algorithm~\ref{alg: contract} does not alter the value of $P(\T)$.
    \end{claim}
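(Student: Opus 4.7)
Plan for proving Claim~\ref{claim:Contract-Changes-No-P}. I will show that a single iteration of either \texttt{while} loop in Algorithm~\ref{alg: contract} preserves $P(\T)$. Since $P(\T) = 1 + \sum_{v \in M}\max\{c(v) - 1, 0\}$ whenever $\T$ is non-empty, and non-emptiness is clearly preserved by both loops (each iteration only removes or contracts an internal node), it suffices to check that $c(v)$ is unchanged at every surviving node $v$ and that $M$ is preserved as a subset of surviving nodes. I will split the argument according to which loop is executing.

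For the first loop, we remove a node $v$ with $\child(v) = \{w\}$ and reattach $w$ to $\parent(v)$ (or promote $w$ to the root if $v$ was the root). The defining case $c(v) = c(w)$ in Definition~\ref{def:progress-measure-lb}, valid precisely because $v$ has a single child, immediately yields that all $c$-values at surviving nodes are unchanged. For the marked set, I will observe that $v \notin M$ because $|\child(v)| < 2$, and that by inspecting the recursive definition of $\pparent$, no node in the tree has $\pparent = v$; a short direct verification then gives that $\pparent(w)$ after the removal equals $\pparent(v)$ before, while all other $\pparent$ relationships are untouched. Hence $M$ is unchanged.

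For the second loop, we contract neighbors $a$ and $a' = \parent(a)$ satisfying $\llabel(a) = \llabel(a')$. Since the first loop has already terminated, I may assume $|\child(a)|, |\child(a')| \geq 2$ (so in particular $|\child(a')| \geq 2$ remains true after the merge). I split on the common label. If both are $\land$, then $c(a') = 1$ both before and after. If both are $\lor$, then using associativity of summation and the identity $c(a) = \sum_{w \in \child(a)} c(w)$, the new value of $c(a')$ equals $(c(a') - c(a)) + c(a)$, recovering the old value; all other $c$-values are untouched. For $M$, the disappearing node $a$ is not in $M$ (if $\land$, by the label condition; if $\lor$, then $\pparent(a) = a'$ has label $\lor$, failing the parent condition), and the children of $a$ acquire $\pparent = a'$ in place of $\pparent = a$, but since $\llabel(a) = \llabel(a')$ this does not affect $M$-membership. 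All other $\pparent$ relationships are inherited unchanged, and $a'$ itself keeps its label, number of children $\geq 2$, and proper parent, so its own $M$-status is preserved.

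I expect the main difficulty to lie in bookkeeping the $\pparent$ relation, since local edits to the tree can in principle propagate upward. The crucial observation making everything clean is that both loop bodies respect the ``labeled skeleton'' of the tree: the first loop is invisible to $\pparent$ because $\pparent$ is designed to skip single-child nodes, and the second loop only merges matching labels, so the $M$-membership condition for every surviving node is preserved automatically.
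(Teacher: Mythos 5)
Your proposal is correct and follows essentially the same route as the paper's proof: treat each \texttt{while} loop separately, use the single-child case of the cost function to show $c$ is preserved in the first loop and the label case-split ($\land$ gives cost $1$, $\lor$ telescopes) in the second, and verify that the marked set $M$ is unchanged by checking that the disappearing node was never in $M$ and that $\pparent$ relations of surviving nodes are preserved. Your slightly more explicit bookkeeping of the $\pparent$ relation (in particular, that no node can have a single-child node as its proper parent) is a welcome clarification but does not change the argument.
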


    \begin{proof}
        The first part of the $\contract$ subroutine, i.e., Lines 1 to 4, looks for an internal node $v$ that has exactly one child $w$ and contracts it into a new node $v'$. Let $\T$ be the AND-OR tree before the contraction, and let $\T'$ be the tree afterwards. Similarly, let $c$ and $c'$ be the cost functions defined on $\T$ and $\T'$, respectively, and let $M$ and $M'$ be their marked sets. Since the cost function evaluated at a node is completely determined by the subtree rooted at it, we observe that $c'(v') = c(w)$. Moreover, it is easy to verify that both when $\llabel(v) = \lor$ and when $\llabel(v) = \land$, we have that $c(v) = c(w) = c'(v')$. It follows that the cost function on the rest of the tree is not changed anywhere. Finally, $v \not\in M$ since $|\child(v)| = 1$. Furthermore, $w \in M$ if and only if $v' \in M$, since this only depends on the proper parent of $v$ and $v'$ and their children, which are the same. Thus, $P(\T) = P(\T')$.

        Next, we focus on the second part of the $\contract$ routine, which merges two nodes that have the same label. Since the first part of the algorithm has been completed, we can assume that all vertices have at least two children, which means that the proper parent of any internal vertex is its direct parent. Now, let $\T$, $\T'$, $c$, $c'$, $M$ and $M'$ be as before, and let $v$ and $w$ be the parent and child nodes in $\T$, and let $v'$ be the new node in $\T'$. We immediately observe that $w \not\in M$, since it has the same label as its parent (which is its proper parent since all nodes have at least 2 children after the first phase of Algorithm~\ref{alg: contract}), and $v \in M \Leftrightarrow v' \in M'$, since both have at least two children, and hence it only depends on their parents, which are the same.
        \begin{itemize}
            \item	Suppose that $v$ and $w$ are both labeled by $\lor$. Then,
            \begin{align*}
                c'(v') &= \sum_{x \in \child(v')} c'(x) \tag*{by definition of the cost function and $\llabel(v') = \lor$} \\
                &= \sum_{x \in \child(v) \setminus \{w\}} c'(x) + \sum_{x \in \child(w)} c'(x) \tag*{by Lines~6-7 of Algorithm~\ref{alg: contract}} \\
                &= \sum_{x \in \child(v) \setminus \{w\}} c(x) + \sum_{x \in \child(w)} c(x) \tag*{since $c(x)$ only depends on $c(y)$ for $y \in \child(x)$} \\
                &= c(w) + \sum_{x \in \child(v) \setminus \{w\}} c(x) \tag*{by definition of the cost function, and since $\llabel(w) = \lor$} \\
                &= \sum_{x \in \child(v)} c(x) \\
                &= c(v),
            \end{align*}
            where the last equality follows from the definition of the cost function in Definition~\ref{def:progress-measure-lb}, and since the label of $v$ was assumed to be $\lor$. Since the cost function of any given node depends only on its immediate children it follows that the cost function is not changed in any other part of the tree.
            \item Similarly, if $v$ and $w$ are both labeled by $\land$, then since both $v$ and $w$ have more than $1$ child, so does $v'$, and hence $c(v) = c'(v') = 1$, and thus in this case the cost function is not changed anywhere else either.
        \end{itemize}
        Thus, the $\contract$ procedure leaves $P(\T)$ invariant.
    \end{proof}

    \begin{claim}
        \label{claim:PointScored-implies-Pdecreases}
        In Algorithm~\ref{alg: Delayer strat}, if the score is increased by 1, then $P(\T)$ is decreased by 1. At any other point in the algorithm, the value of $P(\T)$ does not change. In other words, $\score + P(\T)$ is constant throughout the Prover-Delayer game.
    \end{claim}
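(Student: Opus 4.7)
The overall plan is to reduce the claim to an analysis of the $\update$ step alone, and then perform a case analysis on the branches of Algorithm~\ref{alg: Delayer strat}. The output of Algorithm~\ref{alg: Delayer strat} is $\contract(\update(\T,i,b))$, and by Claim~\ref{claim:Contract-Changes-No-P} the $\contract$ subroutine does not alter $P$, so I only need to verify how $P$ changes going from $\T$ to $\update(\T,i,b)$. Note that $P$ as written is defined on AND-OR trees that need not be in reduced form, which is exactly what lets me evaluate $P$ directly on $\update(\T,i,b)$ without first contracting.

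For the non-scoring branches of Algorithm~\ref{alg: Delayer strat} I would argue $P$ is preserved as follows. Since $c(v)$ depends only on the subtree rooted at $v$, only ancestors of $\ell$ can have their cost affected, and only the node $\parent(\ell)$ can have its membership in $M$ change directly. In the branch where $\llabel(\parent(\ell)) = \lor$ and $b=0$, the $\update$ step merely deletes $\ell$; since $c(\ell)=0$ and $\parent(\ell)$ is an $\lor$-node (so its cost is the sum of children's costs), the cost of $\parent(\ell)$ is unchanged, and a quick check shows its marked-status is unchanged as well, so $P(\T)$ is preserved. The three sub-cases in which the Delayer answers $b=1$ at an $\land$-parent (i.e.\ $|\child(v)|>2$; $v=\root(\T)$; or the sibling $m$ has an $\land$-child) each correspond to a situation where $\update$ only removes $\ell$ and the local contribution to $P$ (coming from the nearest enclosing marked $\lor$-node, if any) is unchanged because the tree-structure feeding that marked $\lor$ still carries the same cost. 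These are routine verifications once the cases are drawn out.

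For the scoring branches, I need to show $P$ drops by exactly $1$. The single-leaf branch is immediate: the original tree has no internal nodes, so $M=\emptyset$ and $P(\T)=1$, while after $\update$ the tree is empty and $P=0$. For the deferred case in Lines~17--20, the hypotheses pin down the local structure precisely: $v=\parent(\ell)$ is an $\land$-node with exactly two children $\ell$ and $m$, $v$ is not the root, every grandchild $x\in\child(m)$ satisfies $\llabel(x)\ne\land$, and $u=\parent(v)$ is an $\lor$-node possessing at least one non-leaf sibling of $v$. Here I would explicitly compute $c$ and $M$ before and after the removal in each of the cases $b=0$ and $b=1$: when $b=0$, $\update$ excises the whole subtree rooted at $v$; when $b=1$, $\update$ keeps only the subtree rooted at $m$ underneath $u$. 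In both cases the marked $\lor$-ancestor of $v$ (which turns out to be $u$, given the hypotheses) loses exactly one unit from its cost, while no other marked node's $\max\{c(\cdot)-1,0\}$ contribution changes, yielding $P(\T)$ decreasing by exactly one.

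The hardest part of the argument is the deferred case. The progress measure is engineered so that the guarantee "$P$ decreases by exactly one" is tight on both of the Prover's two available responses, which forces the Delayer's conditional cascade (whether $m$ has an $\land$-child, whether $u$'s other children are all leaves) to be calibrated exactly right. The bookkeeping is delicate because the removal can cause $u$ itself to lose children, potentially triggering further changes to $M$; here I would use that Claim~\ref{claim:Contract-Changes-No-P} lets me compute $P$ on the un-contracted tree, so I can count the contributions at $u$ in its unreduced form and separately verify that no marked node higher up the tree has changed its $c$-value (again because $c(u)$ is unchanged up to the loss of a single unit of cost at $v$). Once these verifications are assembled across all branches, the telescoping identity $\score + P(\T) = \text{const}$ follows.
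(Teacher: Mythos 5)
Your overall structure matches the paper's: reduce to the $\update$ step via Claim~\ref{claim:Contract-Changes-No-P}, then case-analyze the branches of Algorithm~\ref{alg: Delayer strat}. But the case analysis has a concrete gap and a flawed justification.

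First, you enumerate only \emph{three} non-scoring sub-cases in which the Delayer answers $b=1$ at an $\land$-parent ($|\child(v)|>2$; $v=\root(\T)$; the sibling $m$ has an $\land$-child). There is a fourth: $|\child(v)|=2$, $v$ is not the root, $m$ has no $\land$-child, and \emph{all} children of $u=\parent(v)$ other than $v$ are leaves (Line~20 of Algorithm~\ref{alg: Delayer strat}). You appear to have conflated this with the deferred branch, which also explains the mistaken line reference ``Lines~17--20'' — the Delayer only asks the Prover to choose at Line~22, while Line~20 is a non-scoring $b=1$ branch. Your verification must cover this missing case.

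Second, the blanket justification you give for the non-scoring $b=1$ cases — ``the local contribution to $P$ \ldots{} is unchanged because the tree-structure feeding that marked $\lor$ still carries the same cost'' — is false for two of these cases. When $m$ has an $\land$-child (the paper's Case~5), the cost $c(u)$ genuinely changes by $c(m)-1$; the invariance of $P$ comes from a cancellation: $m$ itself is a marked $\lor$-node in $\T$ (contributing $\max\{c(m)-1,0\}$) and drops out of the marked set after $\update$ because $v$ is left with one child, so $\pparent(m)$ becomes $u$, another $\lor$-node. That exact cancellation is the heart of the case and is not a ``routine verification'' under your stated reason. In the missing Line~20 case, $c(u)$ drops from $1$ to $0$, and the reason $P$ is preserved is that $\max\{c(u)-1,0\}$ is $0$ both before and after — again, the cost \emph{does} change. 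Both of these require you to track how the marked set $M$ and the $\max\{\cdot,0\}$ truncation interact with the cost change, which your argument does not do. The scoring (deferred) branch and the single-leaf branch are handled correctly in spirit, although in the $b=0$ sub-case where $|\child(u)|=2$, $u$ also leaves $M$, so ``$u$ loses one unit of cost'' is not the whole story there either — you flag this possibility but don't resolve it.
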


    \begin{proof}
        We already know from Claim~\ref{claim:Contract-Changes-No-P} that the $\contract$ procedure does not impact the progress measure $P(\T)$. Thus, it remains to check that the $\update$ procedure decreases the progress measure by 1 if and only if the Delayer scores a point. To that end, let $\T$ be the tree at the start of Algorithm~\ref{alg: Delayer strat}, and let $\T'$ be the tree after the $\update$ procedure is performed, but before the $\contract$ procedure is called, in the final line of Algorithm~\ref{alg: Delayer strat}. Let $c$, $c'$, $M$ and $M'$ be the cost functions and marked sets of $\T$ and $\T'$, respectively.

        Below, we go through the same case analysis as presented in Algorithm~\ref{alg: Delayer strat}. Also see Figure~\ref{fig: case analysis} for a graphical description of each case.

        \begin{enumerate}
            \item Suppose that $\T$ contains only one leaf, which the Prover queries. We have $P(\T) = 1$, the Delayer scores a point in Line~3 of the Algorithm~\ref{alg: Delayer strat}, and the update routine makes the tree empty in Line~3 of Algorithm~\ref{alg: update}. Thus, we find that $\T'$ is empty, and hence $P(\T) - P(\T') = 1 - 0 = 1$.
            \item Suppose that the parent of the leaf being queried, denoted by $v$ as in Line~7 of Algorithm~\ref{alg: Delayer strat}, is labeled by $\lor$. Then, the Delayer sets the corresponding variable to $0$ in Line~5 of Algorithm~\ref{alg: Delayer strat}, which means that Line~5 of Algorithm~\ref{alg: update} removes $\ell$ from the tree. We find that $c(v) - c'(v) = c(\ell) = 0$, and hence $c|_{\T'} = c'$. Moreover, if $v$ had more than $2$ children in $\T$, then $M = M'$, and hence $P(\T) = P(\T')$. On the other hand, if $v$ has exactly $2$ children in $\T$, then $v$ will have only one child in $\T'$ and hence $v \not\in M'$. Let $w$ be the other child of $v$ in $\T$. Since $\T$ is in reduced form, $w$ is either a leaf or its label is $\land$, wich implies that $w \not\in M \cap M'$. Moreover, if $w$ is an internal node, it must have at least $2$ children, and hence any node in the subtree rooted at $w$ will have a proper parent that is also contained in that subtree. Therefore, $M = M'$, and since the cost function evaluated at any given root only depends on the tree rooted at the given node, we also find that $c'(v) = c'(w) = c(w) = c(v)$. Thus, $P(\T) = P(\T')$, and hence the progress measure is not affected.
            \item Next we analyze the case where the parent node $v$ has label $\land$. Since $\T$ is in reduced form, $v$ has at least $2$ children, and hence $c(v) = 1$. Suppose that $v$ has more than $2$ children. Then, Line~9 of Algorithm~\ref{alg: Delayer strat} sets the corresponding variable to $1$, which means that Line~5 of Algorithm~\ref{alg: update} removes $\ell$ from the tree. Since the remaining number of children of $v$ is still at least $2$, we have that $c'(v) = c(v) = 1$, and so also on the remainder of the tree, the cost function is not changed. Similarly, $M = M'$, and hence the $P(\T') = P(\T)$.
            \item We next consider the case where $v$ has exactly $2$ children, and we let $m$ be the other child of $v$, as in Line~14 of Algorithm~\ref{alg: Delayer strat}. If $v$ is the root of $\T$, then Line~12 in Algorithm~\ref{alg: Delayer strat} sets the variable associated to the queried leaf to $1$, which means that Line~5 in Algorithm~\ref{alg: update} removes the queried leaf from the tree. The cost function on all nodes in $\T$ apart from $v$ will not be altered, and $v \not\in M \cup M'$ and $m \in M \cap M'$. Thus, $P(\T) = P(\T')$, proving that the progress measure is left unaltered.
            \item It remains to check the case where $v$ is not the root of $\T$, in which case we can let $u = \parent(v)$. Since $\T$ is in reduced form, we have that the label of $u$ must be different from the label of $v$, so we find that $\llabel(u) = \lor$. Suppose that $m$ is an internal vertex and that at least one child of $m$ is labeled by $\land$, then we have $c(m) \geq 1$, and similarly $c(u) \geq 1$. Line~16 of Algorithm~\ref{alg: Delayer strat} now sets the queried leaf to $1$, which means that Line~5 in Algorithm~\ref{alg: update} removes the queried leaf from the tree. Since $v$ is now left with just one child, $c'(v) = c'(m) = c(m)$. This implies that $c'(u) - c(u) = c'(v) - c(v) = c(m) - 1$, and we also find that $u,m \in M$ and $u \in M'$, but $m \not\in M'$. Thus,
            \begin{align*}
                \P(\T) - \P(\T') &= \max\{c(u)-1,0\} + \max\{c(m)-1,0\} - \max\{c'(u)-1,0\} \\
                &= c(u)-1 + c(m)-1 - (c'(u)-1) \\
                &= c(u) - c'(u) + c(m) - 1 = 0,
            \end{align*}
            from which we observe that the progress measure is indeed not altered.
            \item It remains to check the case where $m$ does not have a child labeled by $\land$, which implies that it is a leaf itself or all its children must be leaves, and thus $c(m) = 0$. Suppose that all children of $u$, except for $v$, are leaves, which means that $c(u) = c(v) = 1$. Then, Line~20 of Algorithm~\ref{alg: Delayer strat} sets the variable corresponding to the queried leaf to $1$, which means that Line~5 of Algorithm~\ref{alg: update} removes it from the tree. Then, we find that $c'(u) = c'(v) = c'(m) = c(m) = 0$, and since $\T$ is in reduced form, the parent of $u$, if it exists, must be labeled by $\land$ and have multiple children, which means that the cost function is not altered anywhere outside of the subtree rooted at $u$. Finally, since
            \begin{align*}
                \max\{c(u)-1,0\} = \max\{c(m)-1,0\} = \max\{c'(u)-1,0\} = \max\{c'(m)-1,0\} = 0,
            \end{align*}
            we conclude that none of the vertices in the subtree rooted at $u$ makes any contribution to the progress measure, and hence it remains unaffected.
            \item Thus, it remains to investigate the case where $u$ has multiple children that are labeled by $\land$, which means that $c(u) \geq 2$. In that case, the Delayer scores a point in Line~22 of Algorithm~\ref{alg: Delayer strat}. If the Prover chooses the value $1$, then Line~5 of Algorithm~\ref{alg: update} removes the leaf from the tree. Since $c(u) \geq 2$, and $c(v) = 1$, we find that $c'(u) \geq 1$. Since $\T$ is in reduced form, we find that the label of the parent of $u$, if it exists, must be $\land$, and it must have multiple children, from which we infer that the cost function remains unaffected on all of $\T$ except for the subtree rooted at $u$. Furthermore, we have $u \in M \cap M'$, and $\max\{c(m)-1,0\} = \max\{c'(m)-1,0\} = 0$, which means that
            \begin{align*}
                P(\T) - P(\T') &= \max\{c(u)-1,0\} - \max\{c'(u)-1,0\} \\
                &= c(u) - c'(u) = c(v) - c'(v) = 1 - c'(m) = 1 - c(m) = 1,
            \end{align*}
            which indeed proves that the progress measure is decreased by $1$.
            \item On the other hand, if the Prover chooses $0$, then Line~7 in Algorithm~\ref{alg: update} removes the leaf from the tree, as well as the node $v$ and the subtree rooted at it. Again, the cost function is not altered on all of $\T$, except for the subtree rooted at $u$. If $u$ has more than $2$ children in $\T$, then, we find that $u \in M$ and $u \in M'$, which implies
            \begin{align*}
                P(\T) - P(\T') &= \max\{c(u)-1,0\} - \max\{c'(u)-1,0\} = c(u) - c'(u) = c(v) = 1,
            \end{align*}
            and if $u$ has exactly $2$ children in $\T$, then $c(u) = 2$, $u \in M$ and $u \not\in M'$, from which we find that
            \begin{align*}
                P(\T) - P(\T') &= \max\{c(u)-1,0\} = 1.
            \end{align*}
        \end{enumerate}
        Thus, indeed, the progress measure is decreased by $1$ iff the Prover scores a point (which happened in Items 1, 7 and 8 above). This completes the proof.
    \end{proof}

    \begin{figure}[p]
        \centering
        \begin{tabular}{|c|c|c|c|}
            \hline
            & $\T$ && $\T'$ \\
            \hline
            Case 1 & \raisebox{-.5em}{\begin{tikzpicture}\small
                    \fill (0,0) circle[radius=.15em];
                    \node[below] at (0,0) {$x_i$};
            \end{tikzpicture}} & $x_i = b$ & $b$ \\
            \hline
            Case 2 & \raisebox{-2.5em}{\begin{tikzpicture}[scale=.6,gate/.style={draw,circle,minimum size=1.2em,inner sep=0}]\small
                    \node (leaf) at (-.5,0) {$x_i$};
                    \node[gate] (OR) at (1,1) {$\lor$};
                    \node[above right=.3em] at (1,1) {$v$};
                    \node (parent) at (1,2) {$\triangle$};
                    \node (firstneighbor) at (.5,0) {$\triangle$};
                    \node at (1.5,0) {$\cdots$};
                    \node (lastneighbor) at (2.5,0) {$\triangle$};
                    \draw[->] (OR) to (leaf);
                    \draw[->] (parent) to (OR);
                    \draw[->] (OR) to (firstneighbor);
                    \draw[->] (OR) to (lastneighbor);
            \end{tikzpicture}} & $x_i = 0$ & \raisebox{-2.5em}{\begin{tikzpicture}[scale=.6,gate/.style={draw,circle,minimum size=1.2em,inner sep=0}]\small
                    \node[gate] (OR) at (1,1) {$\lor$};
                    \node[above right=.3em] at (1,1) {$v$};
                    \node (parent) at (1,2) {$\triangle$};
                    \node (firstneighbor) at (.5,0) {$\triangle$};
                    \node at (1.5,0) {$\cdots$};
                    \node (lastneighbor) at (2.5,0) {$\triangle$};
                    \draw[->] (parent) to (OR);
                    \draw[->] (OR) to (firstneighbor);
                    \draw[->] (OR) to (lastneighbor);
            \end{tikzpicture}} \\
            \hline
            Case 3 & \raisebox{-2.7em}{\begin{tikzpicture}[scale=.6,gate/.style={draw,circle,minimum size=1.2em,inner sep=0}]\small
                    \node (leaf) at (-.5,0) {$x_i$};
                    \node[gate] (AND) at (1,1) {$\land$};
                    \node[above right=.3em] at (1,1) {$v$};
                    \node (parent) at (1,2) {$\triangle$};
                    \node (firstneighbor) at (.5,0) {$\triangle$};
                    \node at (1.5,0) {$\cdots$};
                    \node (lastneighbor) at (2.5,0) {$\triangle$};
                    \draw[decorate, decoration={calligraphic brace,mirror,raise=6pt}] (.3,0) to node[below=6pt] {$\geq 2$ children} (2.7,0);
                    \draw[->] (AND) to (leaf);
                    \draw[->] (parent) to (AND);
                    \draw[->] (AND) to (firstneighbor);
                    \draw[->] (AND) to (lastneighbor);
            \end{tikzpicture}} & $x_i = 1$ & \raisebox{-2.7em}{\begin{tikzpicture}[scale=.6,gate/.style={draw,circle,minimum size=1.2em,inner sep=0}]\small
                    \node[gate] (AND) at (1,1) {$\land$};
                    \node[above right=.3em] at (1,1) {$v$};
                    \node (parent) at (1,2) {$\triangle$};
                    \node (firstneighbor) at (.5,0) {$\triangle$};
                    \node at (1.5,0) {$\cdots$};
                    \node (lastneighbor) at (2.5,0) {$\triangle$};
                    \draw[decorate, decoration={calligraphic brace,mirror,raise=6pt}] (.3,0) to node[below=6pt] {$\geq 2$ children} (2.7,0);
                    \draw[->] (parent) to (AND);
                    \draw[->] (AND) to (firstneighbor);
                    \draw[->] (AND) to (lastneighbor);
            \end{tikzpicture}} \\
            \hline
            Case 4 & \raisebox{-2.5em}{\begin{tikzpicture}[scale=.6,gate/.style={draw,circle,minimum size=1.2em,inner sep=0}]\small
                    \node (leaf) at (0,0) {$x_i$};
                    \node[gate] (AND) at (1,1) {$\land$};
                    \node[above right=.3em] at (1,1) {$v$};
                    \node[gate] (m) at (2,0) {$\lor$};
                    \node[above right=.3em] at (m) {$m$};
                    \node (firstneighbor) at (1,-1) {$\triangle$};
                    \node at (2,-1) {$\cdots$};
                    \node (lastneighbor) at (3,-1) {$\triangle$};
                    \draw[->] (AND) to (leaf);
                    \draw[->] (AND) to (m);
                    \draw[->] (m) to (firstneighbor);
                    \draw[->] (m) to (lastneighbor);
            \end{tikzpicture}} & $x_i = 1$ & \raisebox{-2.5em}{\begin{tikzpicture}[scale=.6,gate/.style={draw,circle,minimum size=1.2em,inner sep=0}]\small
                    \node[gate] (AND) at (1,1) {$\land$};
                    \node[above right=.3em] at (1,1) {$v$};
                    \node[gate] (m) at (2,0) {$\lor$};
                    \node[above right=.3em] at (m) {$m$};
                    \node (firstneighbor) at (1,-1) {$\triangle$};
                    \node at (2,-1) {$\cdots$};
                    \node (lastneighbor) at (3,-1) {$\triangle$};
                    \draw[->] (AND) to (m);
                    \draw[->] (m) to (firstneighbor);
                    \draw[->] (m) to (lastneighbor);
            \end{tikzpicture}} \\
            \hline
            Case 5 & \raisebox{-4.5em}{\begin{tikzpicture}[scale=.6,gate/.style={draw,circle,minimum size=1.2em,inner sep=0}]\small
                    \node (leaf) at (0,0) {$x_i$};
                    \node[gate] (AND) at (1,1) {$\land$};
                    \node[above left=.3em] at (1,1) {$v$};
                    \node[gate] (m) at (2,0) {$\lor$};
                    \node[above right=.3em] at (m) {$m$};
                    \node[gate] (u) at (2.5,2) {$\lor$};
                    \node[above right=.3em] at (u) {$u$};
                    \node (top) at (2.5,3) {$\triangle$};
                    \node (firstupneighbor) at (2,1) {$\triangle$};
                    \node at (3,1) {$\cdots$};
                    \node (lastupneighbor) at (4,1) {$\triangle$};
                    \node[gate] (childand) at (.5,-1) {$\land$};
                    \node (firstneighbor) at (1.5,-1) {$\triangle$};
                    \node at (2.5,-1) {$\cdots$};
                    \node (lastneighbor) at (3.5,-1) {$\triangle$};
                    \node (bottomleftchild) at (-.5,-2) {$\triangle$};
                    \node at (.5,-2) {$\cdots$};
                    \node (bottomrightchild) at (1.5,-2) {$\triangle$};
                    \draw[->] (top) to (u);
                    \draw[->] (AND) to (leaf);
                    \draw[->] (AND) to (m);
                    \draw[->] (u) to (AND);
                    \draw[->] (u) to (firstupneighbor);
                    \draw[->] (u) to (lastupneighbor);
                    \draw[->] (m) to (firstneighbor);
                    \draw[->] (m) to (lastneighbor);
                    \draw[->] (m) to (childand);
                    \draw[->] (childand) to (bottomleftchild);
                    \draw[->] (childand) to (bottomrightchild);
            \end{tikzpicture}} & $x_i = 1$ & \raisebox{-4.5em}{\begin{tikzpicture}[scale=.6,gate/.style={draw,circle,minimum size=1.2em,inner sep=0}]\small
                    \node[gate] (AND) at (1,1) {$\land$};
                    \node[above left=.3em] at (1,1) {$v$};
                    \node[gate] (m) at (2,0) {$\lor$};
                    \node[above right=.3em] at (m) {$m$};
                    \node[gate] (u) at (2.5,2) {$\lor$};
                    \node[above right=.3em] at (u) {$u$};
                    \node (top) at (2.5,3) {$\triangle$};
                    \node (firstupneighbor) at (2,1) {$\triangle$};
                    \node at (3,1) {$\cdots$};
                    \node (lastupneighbor) at (4,1) {$\triangle$};
                    \node[gate] (childand) at (.5,-1) {$\land$};
                    \node (firstneighbor) at (1.5,-1) {$\triangle$};
                    \node at (2.5,-1) {$\cdots$};
                    \node (lastneighbor) at (3.5,-1) {$\triangle$};
                    \node (bottomleftchild) at (-.5,-2) {$\triangle$};
                    \node at (.5,-2) {$\cdots$};
                    \node (bottomrightchild) at (1.5,-2) {$\triangle$};
                    \draw[->] (top) to (u);
                    \draw[->] (AND) to (m);
                    \draw[->] (u) to (AND);
                    \draw[->] (u) to (firstupneighbor);
                    \draw[->] (u) to (lastupneighbor);
                    \draw[->] (m) to (firstneighbor);
                    \draw[->] (m) to (lastneighbor);
                    \draw[->] (m) to (childand);
                    \draw[->] (childand) to (bottomleftchild);
                    \draw[->] (childand) to (bottomrightchild);
            \end{tikzpicture}} \\
            \hline
            Case 6 & \raisebox{-3.5em}{\begin{tikzpicture}[scale=.6,gate/.style={draw,circle,minimum size=1.2em,inner sep=0}]\small
                    \node (leaf) at (0,0) {$x_i$};
                    \node[gate] (AND) at (1,1) {$\land$};
                    \node[above left=.3em] at (1,1) {$v$};
                    \node[gate] (m) at (2,0) {$\lor$};
                    \node[above right=.3em] at (m) {$m$};
                    \node[gate] (u) at (2.5,2) {$\lor$};
                    \node[above right=.3em] at (u) {$u$};
                    \node (top) at (2.5,3) {$\triangle$};
                    \node (firstupneighbor) at (2,1) {$\bullet$};
                    \node at (3,1) {$\cdots$};
                    \node (lastupneighbor) at (4,1) {$\bullet$};
                    \node (firstneighbor) at (1,-1) {$\bullet$};
                    \node at (2,-1) {$\cdots$};
                    \node (lastneighbor) at (3,-1) {$\bullet$};
                    \draw[->] (top) to (u);
                    \draw[->] (AND) to (leaf);
                    \draw[->] (AND) to (m);
                    \draw[->] (u) to (AND);
                    \draw[->] (u) to (firstupneighbor);
                    \draw[->] (u) to (lastupneighbor);
                    \draw[->] (m) to (firstneighbor);
                    \draw[->] (m) to (lastneighbor);
            \end{tikzpicture}} & $x_i = 1$ & \raisebox{-3.5em}{\begin{tikzpicture}[scale=.6,gate/.style={draw,circle,minimum size=1.2em,inner sep=0}]\small
                    \node[gate] (AND) at (1,1) {$\land$};
                    \node[above left=.3em] at (1,1) {$v$};
                    \node[gate] (m) at (2,0) {$\lor$};
                    \node[above right=.3em] at (m) {$m$};
                    \node[gate] (u) at (2.5,2) {$\lor$};
                    \node[above right=.3em] at (u) {$u$};
                    \node (top) at (2.5,3) {$\triangle$};
                    \node (firstupneighbor) at (2,1) {$\bullet$};
                    \node at (3,1) {$\cdots$};
                    \node (lastupneighbor) at (4,1) {$\bullet$};
                    \node (firstneighbor) at (1,-1) {$\bullet$};
                    \node at (2,-1) {$\cdots$};
                    \node (lastneighbor) at (3,-1) {$\bullet$};
                    \draw[->] (top) to (u);
                    \draw[->] (AND) to (m);
                    \draw[->] (u) to (AND);
                    \draw[->] (u) to (firstupneighbor);
                    \draw[->] (u) to (lastupneighbor);
                    \draw[->] (m) to (firstneighbor);
                    \draw[->] (m) to (lastneighbor);
            \end{tikzpicture}} \\
            \hline
            Case 7 & \raisebox{-3.5em}{\begin{tikzpicture}[scale=.6,gate/.style={draw,circle,minimum size=1.2em,inner sep=0}]\small
                    \node (leaf) at (0,0) {$x_i$};
                    \node[gate] (AND) at (1,1) {$\land$};
                    \node[above left=.3em] at (1,1) {$v$};
                    \node[gate] (m) at (2,0) {$\lor$};
                    \node[above right=.3em] at (m) {$m$};
                    \node[gate] (u) at (2.5,2) {$\lor$};
                    \node[above right=.3em] at (u) {$u$};
                    \node (top) at (2.5,3) {$\triangle$};
                    \node (firstupneighbor) at (2,1) {$\triangle$};
                    \node at (3,1) {$\cdots$};
                    \node (lastupneighbor) at (4,1) {$\triangle$};
                    \node[gate] (andchild) at (5,1) {$\land$};
                    \node (left) at (4,0) {$\triangle$};
                    \node at (5,0) {$\cdots$};
                    \node (right) at (6,0) {$\triangle$};
                    \node (firstneighbor) at (1,-1) {$\bullet$};
                    \node at (2,-1) {$\cdots$};
                    \node (lastneighbor) at (3,-1) {$\bullet$};
                    \draw[->] (top) to (u);
                    \draw[->] (AND) to (leaf);
                    \draw[->] (AND) to (m);
                    \draw[->] (u) to (AND);
                    \draw[->] (u) to (firstupneighbor);
                    \draw[->] (u) to (lastupneighbor);
                    \draw[->] (u) to (andchild);
                    \draw[->] (m) to (firstneighbor);
                    \draw[->] (m) to (lastneighbor);
                    \draw[->] (andchild) to (left);
                    \draw[->] (andchild) to (right);
            \end{tikzpicture}} & $x_i = 1$ & \raisebox{-3.5em}{\begin{tikzpicture}[scale=.6,gate/.style={draw,circle,minimum size=1.2em,inner sep=0}]\small
                    \node[gate] (AND) at (1,1) {$\land$};
                    \node[above left=.3em] at (1,1) {$v$};
                    \node[gate] (m) at (2,0) {$\lor$};
                    \node[above right=.3em] at (m) {$m$};
                    \node[gate] (u) at (2.5,2) {$\lor$};
                    \node[above right=.3em] at (u) {$u$};
                    \node (top) at (2.5,3) {$\triangle$};
                    \node (firstupneighbor) at (2,1) {$\triangle$};
                    \node at (3,1) {$\cdots$};
                    \node (lastupneighbor) at (4,1) {$\triangle$};
                    \node[gate] (andchild) at (5,1) {$\land$};
                    \node (left) at (4,0) {$\triangle$};
                    \node at (5,0) {$\cdots$};
                    \node (right) at (6,0) {$\triangle$};
                    \node (firstneighbor) at (1,-1) {$\bullet$};
                    \node at (2,-1) {$\cdots$};
                    \node (lastneighbor) at (3,-1) {$\bullet$};
                    \draw[->] (top) to (u);
                    \draw[->] (AND) to (m);
                    \draw[->] (u) to (AND);
                    \draw[->] (u) to (firstupneighbor);
                    \draw[->] (u) to (lastupneighbor);
                    \draw[->] (u) to (andchild);
                    \draw[->] (m) to (firstneighbor);
                    \draw[->] (m) to (lastneighbor);
                    \draw[->] (andchild) to (left);
                    \draw[->] (andchild) to (right);
            \end{tikzpicture}} \\
            \hline
            Case 8 & \raisebox{-3.5em}{\begin{tikzpicture}[scale=.6,gate/.style={draw,circle,minimum size=1.2em,inner sep=0}]\small
                    \node (leaf) at (0,0) {$x_i$};
                    \node[gate] (AND) at (1,1) {$\land$};
                    \node[above left=.3em] at (1,1) {$v$};
                    \node[gate] (m) at (2,0) {$\lor$};
                    \node[above right=.3em] at (m) {$m$};
                    \node[gate] (u) at (2.5,2) {$\lor$};
                    \node[above right=.3em] at (u) {$u$};
                    \node (top) at (2.5,3) {$\triangle$};
                    \node (firstupneighbor) at (2,1) {$\triangle$};
                    \node at (3,1) {$\cdots$};
                    \node (lastupneighbor) at (4,1) {$\triangle$};
                    \node[gate] (andchild) at (5,1) {$\land$};
                    \node (left) at (4,0) {$\triangle$};
                    \node at (5,0) {$\cdots$};
                    \node (right) at (6,0) {$\triangle$};
                    \node (firstneighbor) at (1,-1) {$\bullet$};
                    \node at (2,-1) {$\cdots$};
                    \node (lastneighbor) at (3,-1) {$\bullet$};
                    \draw[->] (top) to (u);
                    \draw[->] (AND) to (leaf);
                    \draw[->] (AND) to (m);
                    \draw[->] (u) to (AND);
                    \draw[->] (u) to (firstupneighbor);
                    \draw[->] (u) to (lastupneighbor);
                    \draw[->] (u) to (andchild);
                    \draw[->] (m) to (firstneighbor);
                    \draw[->] (m) to (lastneighbor);
                    \draw[->] (andchild) to (left);
                    \draw[->] (andchild) to (right);
            \end{tikzpicture}} & $x_i = 0$ & \raisebox{-2.5em}{\begin{tikzpicture}[scale=.6,gate/.style={draw,circle,minimum size=1.2em,inner sep=0}]\small
                    \node[gate] (u) at (2.5,2) {$\lor$};
                    \node[above right=.3em] at (u) {$u$};
                    \node (top) at (2.5,3) {$\triangle$};
                    \node (firstupneighbor) at (2,1) {$\triangle$};
                    \node at (3,1) {$\cdots$};
                    \node (lastupneighbor) at (4,1) {$\triangle$};
                    \node[gate] (andchild) at (5,1) {$\land$};
                    \node (left) at (4,0) {$\triangle$};
                    \node at (5,0) {$\cdots$};
                    \node (right) at (6,0) {$\triangle$};
                    \draw[->] (top) to (u);
                    \draw[->] (u) to (firstupneighbor);
                    \draw[->] (u) to (lastupneighbor);
                    \draw[->] (u) to (andchild);
                    \draw[->] (andchild) to (left);
                    \draw[->] (andchild) to (right);
            \end{tikzpicture}} \\\hline
        \end{tabular}
        \caption{Graphical depiction of the cases considered in the proof of Claim~\ref{claim:PointScored-implies-Pdecreases}. Black dots indicate leaves, and triangles depict arbitrary AND-OR trees. In cases 6 to 8, the subtree rooted at $m$ can also be replaced by a single leaf.}
        \label{fig: case analysis}
    \end{figure}

    Now that we know that the Delayer's strategy ensures that $\score + P(\T)$ remains constant throughout the Prover-Delayer game, we observe that the initial value of the progress measure tells us how many points will be scored throughout the game.

    \begin{claim}
        \label{claim:Initial-P-value}
        At the start of the game, the progress measure is $(n+2)/3$.
    \end{claim}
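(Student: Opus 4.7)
The plan is to directly evaluate the cost function $c$ at every internal node of the initial tree $\T$ and then enumerate the marked set $M$, after which $P(\T)$ falls out of a geometric sum. Since $\log n$ is assumed even and the root is an $\lor$-gate, the alternating structure places $\lor$-gates at all even depths $0, 2, 4, \dots, \log n - 2$ and $\land$-gates at all odd depths $1, 3, \dots, \log n - 1$ (so the gates immediately above the leaves are $\land$-gates). Moreover every internal node of $\T$ has exactly two children.

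First I would compute $c$ on all internal nodes. By Definition~\ref{def:progress-measure-lb}, every internal $\land$-node has $c = 1$ since it has at least two children. Arguing by induction on depth from the bottom layer upward, each $\lor$-node then has two $\land$-children, each with $c = 1$, and hence $c = 2$. Consequently every $\lor$-node of $\T$ satisfies $c(v) = 2$, contributing $\max\{c(v)-1,0\} = 1$ to the progress measure if it is marked.

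Next I would identify $M$. The key observation is that because every internal node has exactly two children, the proper parent of any non-root node in $\T$ coincides with its direct parent. The root is itself an $\lor$-node with undefined proper parent, and every non-root $\lor$-node lies immediately below an $\land$-node by alternation. Thus $M$ is exactly the set of all $\lor$-nodes in $\T$.

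Finally I would count $|M|$. The number of nodes at depth $2k$ is $2^{2k} = 4^k$, and summing over $k = 0, 1, \dots, (\log n)/2 - 1$ gives
\[
|M| \;=\; \sum_{k=0}^{(\log n)/2 - 1} 4^k \;=\; \frac{4^{(\log n)/2} - 1}{3} \;=\; \frac{n-1}{3}.
\]
Since $\T$ is non-empty, plugging into the definition of $P(\T)$ yields
\[
P(\T) \;=\; 1 + \sum_{v \in M}\max\{c(v)-1,0\} \;=\; 1 + \frac{n-1}{3} \;=\; \frac{n+2}{3},
\]
as claimed. The argument is a straightforward unwinding of Definition~\ref{def:progress-measure-lb}; the only potential pitfall is the distinction between $\parent$ and $\pparent$, which is harmless here precisely because $\T$ is a complete binary tree, so every internal node has two children.
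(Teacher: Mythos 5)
Your proof is correct and follows essentially the same route as the paper: compute the cost function at every node (AND-nodes have $c=1$, OR-nodes have $c=2$), identify $M$ as the set of all OR-nodes, and conclude $P(\T) = 1 + |M| = 1 + (n-1)/3 = (n+2)/3$. You are slightly more careful than the paper in one respect: you explicitly justify that $M$ consists of all $\lor$-nodes by noting that in a complete binary tree $\pparent = \parent$, a step the paper passes over silently; the paper also contains a small typo (``$\max\{c_{\land}(v)-1,0\}=2$'' should read ``$c(v)=2$'', hence $\max\{c(v)-1,0\}=1$), which your writeup avoids.
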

    \begin{proof}
        At the beginning of Algorithm~\ref{alg: Delayer strat} the tree $\T$ is a complete binary tree with every node $v \in \T$ with $\llabel(v)=\lor$ has exactly two children, say $w_1$ and $w_2$, with $\llabel(w_1)=\llabel(w_2)=\land$. Since $w_1$ and $w_2$ both have two children as well, we have $c_{\land}(w_1) = c_{\land}(w_2) = 1$, which implies that $\max\{c_{\land}(v)-1,0\} = 2$. Therefore, the progress measure is equal to $1$ plus the number of $\lor$-labelled nodes, which can be easily seen to equal (with $4^k = n$),
        \[1 + \sum_{j=1}^k \frac{n}{4^j} = 1 + \frac{n}{4} \sum_{j=0}^{k-1} \frac{1}{4^j} = 1 + \frac{n}{4} \frac{1 - \frac{1}{4^k}}{1 - \frac14} = 1 + \frac{n}{4} \cdot \frac{1 - \frac1n}{\frac34} = 1 + \frac{n-1}{3} = \frac{n+2}{3},\]
        completing the proof.
    \end{proof}

    \begin{claim}
        \label{claim:rank-lb}
        The rank of $F$ is at least $(n+2)/3$.
    \end{claim}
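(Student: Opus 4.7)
The plan is to simply assemble the lemmas that have already been proved. Given a complete binary AND-OR tree on $n$ variables (with $\log n$ assumed even and the root an $\lor$-gate), I play the Prover-Delayer game with the Delayer following Algorithm~\ref{alg: Delayer strat}, and I track the quantity $\score + P(\T)$ throughout the game, where $\T$ is the current (reduced-form) AND-OR tree representing the restricted function $F|_\rho$.

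First I invoke Claim~\ref{claim:Initial-P-value} to note that at the start of the game $\score = 0$ and $P(\T) = (n+2)/3$, so the invariant $\score + P(\T)$ starts at $(n+2)/3$. Then by Claim~\ref{claim:PointScored-implies-Pdecreases} every round either scores a point and decreases $P(\T)$ by $1$, or scores nothing and leaves $P(\T)$ unchanged; hence $\score + P(\T)$ is an invariant of the play. The game terminates: each query strictly decreases the number of leaves in $\T$ (the queried leaf is always removed by the $\update$ step in Algorithm~\ref{alg: update}), so after at most $n$ rounds the tree is empty. When the game terminates, the restricted function $F|_\rho$ is constant, and by the contrapositive of Claim~\ref{claim:PositiveP-implies-NonEmptyT} we must have $P(\T) = 0$ at that point. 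Combining these observations, $\score = (n+2)/3$ at termination, regardless of the Prover's strategy.

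Since this strategy guarantees the Delayer at least $(n+2)/3$ points against every Prover, $\val(F) \geq (n+2)/3$. Applying Claim~\ref{claim: rank equals Prover Delayer game value}, which states $\rank(F) = \val(F)$, we conclude $\rank(F) \geq (n+2)/3$, proving the claim.

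The only point that is not entirely routine is confirming that the Delayer's strategy is always well-defined on any reduced-form tree reached during play, i.e.\ that one of the cases in Algorithm~\ref{alg: Delayer strat} always applies; this is immediate from inspection of the algorithm since every internal node of a reduced tree is labeled $\land$ or $\lor$ with at least two children, so the branching structure of the algorithm is exhaustive. Everything else is bookkeeping already handled by the earlier claims.
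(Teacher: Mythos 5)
Your proof is correct and follows essentially the same route as the paper's: combine Claim~\ref{claim:Initial-P-value}, Claim~\ref{claim:PointScored-implies-Pdecreases}, and Claim~\ref{claim:PositiveP-implies-NonEmptyT} to conclude the Delayer's strategy scores $(n+2)/3$ points, then apply Claim~\ref{claim: rank equals Prover Delayer game value}. The extra remarks on termination and well-definedness are fine but not a departure from the paper's argument.
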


    \begin{proof}
        The progress measure starts at $(n+2)/3$ (Claim~\ref{claim:Initial-P-value}), decreases by exactly $1$ whenever a point is scored (Claim~\ref{claim:PointScored-implies-Pdecreases}), and the game doesn't end until it is $0$ (Claim~\ref{claim:PositiveP-implies-NonEmptyT}). This proves that the Delayer can at least score $(n+2)/3$ points. Thus, by using Claim~\ref{claim: rank equals Prover Delayer game value}, we see that the rank is indeed bounded from below by $(n+2)/3$.
    \end{proof}

    \subsubsection{Prover's Strategy (Upper Bound)} The Prover also starts by building the complete AND-OR tree $\T$ on $n$ bits. Then, in every round of the Prover-Delayer game, the Prover asks the variable that is associated to an arbitrary leaf of the tree $\T$. Next, if the Delayer answers $0$ or $1$, the Prover updates the tree with the $\update$ procedure, Algorithm~\ref{alg: update}. If instead, the Delayer asks the Prover to choose, then the Prover always chooses $0$, and updates the tree accordingly. Finally, the Prover puts the tree back into reduced form using the $\contract$ procedure. The algorithm is formally stated in Algorithm~\ref{alg: Prover strat}.

    \begin{algorithm}[H]
        \caption{Prover's strategy}\label{alg: Prover strat}
        \begin{algorithmic}[1]
            \Require $\T$
            \State Query an arbitrary leaf of $\T$, associated with variable $i$.
            \State $b \gets $ answer from Delayer.
            \If{$b = $ Prover's choice}
            \State Choose $0$.
            \State $b \gets 0$.
            \EndIf
            \Ensure $\contract(\update(\T,i,b))$.
        \end{algorithmic}
    \end{algorithm}

    We now argue that if the Prover uses the strategy outlined in Algorithm~\ref{alg: Prover strat}, the Delayer cannot score more than $(n+2)/3$ points. To that end, we introduce the following progress measure:

    \begin{definition}\label{def:progress-measure-ub}
        Let $\T$ be an AND-OR tree (not necessarily in reduced form). We define the following cost function recursively on the nodes of $\T$, as
        \[d(v) = \begin{cases}
            \displaystyle \sum_{w \in \child(v)} d(w), & \text{if } \llabel(v) = \lor, \\
            d(w), & \text{if } \child(v) = \{w\}, \\
            1, & \text{otherwise}.
        \end{cases}\]
        We also define the set of marked nodes $M$ as before in Definition~\ref{def:progress-measure-lb}, as
        \begin{align*}
            M =\;& \{v \in \T : \llabel(v) = \lor \text{ and } |\child(v)| \geq 2 \\
            &\qquad \text{ and } (\pparent(v) = \mathsf{undefined} \text{ or } \llabel(\pparent(v)) = \land)\}.
        \end{align*}
        We now define a new \textit{progress measure} of $\T$, as
        \[S(\T) = \begin{cases}
            \displaystyle 1 + \sum_{v \in M} \max\{d(v) - 1, 0\}, & \text{if}~\T~ \text{is non-empty}, \\
            0, & \text{otherwise}.
        \end{cases}\]
    \end{definition}

    Note that the cost function introduced in Definition~\ref{def:progress-measure-ub} only differs slightly from the cost function introduced in Definition~\ref{def:progress-measure-lb}, in that leaves have cost 1 in Definition~\ref{def:progress-measure-ub}, whereas they have cost 0 in Definition~\ref{def:progress-measure-ub}. Just as before, we observe that $d$ evaluated at $v$ only depends on the immediate children of $v$, and whether $v$ belongs to $M$ only depends on its parent.

    \begin{claim}\label{claim:progmeasure0endgame}
        If the progress measure is $0$, the Prover-Delayer game ends.
    \end{claim}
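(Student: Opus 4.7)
The plan is to unpack the definition of $S(\T)$ to show that it is zero only when $\T$ is empty, and then invoke the convention (established when AND-OR trees were introduced) that an empty AND-OR tree represents a constant function, at which point the Prover-Delayer game terminates by definition. There is really no heavy lifting here; the content of the claim is essentially already encoded in Definition~\ref{def:progress-measure-ub}.

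First, I would observe that when $\T$ is non-empty, each summand $\max\{d(v)-1,0\}$ in the definition of $S(\T)$ is non-negative, so
\[ S(\T) \;=\; 1 + \sum_{v \in M} \max\{d(v)-1,\, 0\} \;\geq\; 1. \]
Taking the contrapositive, $S(\T)=0$ forces $\T$ to be empty.

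Second, I would recall that an empty AND-OR tree, by the convention stated earlier in this section, represents a constant function $f_{\T}\equiv b$ for some $b\in\{0,1\}$. The invariant that throughout the game the working tree $\T$ computes $f|_{\rho}$ for the current partial assignment $\rho$ is maintained by the Prover's strategy (Algorithm~\ref{alg: Prover strat}) because of the two preservation claims already established for $\update$ and $\contract$. Once $\T$ becomes empty, $f|_{\rho}$ is a constant function, at which point the Prover knows the output of $f$ and, by the rules of the Prover-Delayer game, the game terminates.

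There is no real obstacle in this step. The only care required is to remember that ``progress measure $0$'' is literally synonymous with ``empty tree'' in Definition~\ref{def:progress-measure-ub}, and to quote the earlier claims about $\update$ and $\contract$ to justify that $\T$ faithfully tracks $f|_{\rho}$ throughout the execution of the Prover's strategy. Once those are in place the conclusion is immediate.
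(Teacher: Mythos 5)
Your proposal is correct and follows essentially the same route as the paper: note that a non-empty tree has $S(\T) \geq 1$, so $S(\T)=0$ forces $\T$ to be empty, whence $f|_\rho$ is constant and the game ends. The extra detail you supply (citing the preservation claims for $\update$ and $\contract$ to justify that $\T$ tracks $f|_\rho$) is a reasonable elaboration of what the paper leaves implicit.
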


    \begin{proof}
        If the progress measure is $0$, then the tree is empty, and hence the function value must be completely determined. This implies that the game ends, completing the proof.
    \end{proof}

    \begin{claim}\label{claim: contract-ub}
        The $\contract$ procedure does not alter the value of $S(\T)$.
    \end{claim}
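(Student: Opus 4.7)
The plan is to mirror the proof of Claim~\ref{claim:Contract-Changes-No-P}, handling the two phases of the $\contract$ procedure separately and showing that in each phase both the cost function $d$ and the marked set $M$ are preserved in such a way that the expression $1 + \sum_{v \in M}\max\{d(v)-1,0\}$ is unchanged. The crucial observation that makes this nearly free is that $S(\T)$ differs from $P(\T)$ only in the base case of the cost function at a leaf (where $d$ is $1$ and $c$ is $0$), and the $\contract$ procedure never removes or creates leaves. Hence the modified leaf cost never enters the analysis directly, and the two case analyses can be carried out essentially in parallel.

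First I would handle the loop on Lines~1--4 of Algorithm~\ref{alg: contract}, which removes an internal node $v$ with a single child $w$ and reparents $w$ to $v$'s former parent. The cost is preserved because $d(v) = d(w)$ by definition, and the cost at every ancestor of $v$ depends only on its children's costs, so nothing upstream is affected. For the marked set, $v \notin M$ because $|\child(v)| = 1$ fails the second condition of $M$; and the membership of $w$ is determined by its proper parent, which by the recursive definition of $\pparent$ already skips over single-child ancestors, and is therefore the same in $\T$ and in the contracted tree.

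Next I would handle the loop on Lines~5--9, which merges a node $v$ with its parent $w$ sharing the same label. At this stage every internal node has at least two children, so the proper parent of any internal vertex coincides with its actual parent. A short case distinction on the shared label shows $d$ is preserved at $w$: in the $\lor$ case by substituting the defining sum for $d(v)$ into the sum for $d(w)$, and in the $\land$ case because $v$, $w$, and the merged node all have at least two children, forcing all three values to $1$. For the marked set I would verify that $v \notin M$ in both cases (if $\llabel(v) = \lor$ then its proper parent $w$ has label $\lor$, violating condition~3; if $\llabel(v) = \land$ then condition~1 fails), and that $w$'s membership in $M$ is preserved because the merged node has the same label, parent, and ``at least two children'' status as $w$ did in $\T$.

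The main obstacle, as I see it, is not a single hard step but the careful bookkeeping in the second loop: verifying that $v$ is excluded from $M$ via the right clause of the definition in each label case, and that the proper-parent condition transfers correctly once single-child nodes have been eliminated in the first phase. Once these routine checks are made, the invariance $S(\T) = S(\T')$ under each iteration of each loop follows, and by induction over the number of iterations, $\contract$ leaves $S(\T)$ unchanged overall.
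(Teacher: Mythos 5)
Your proposal is correct and follows essentially the same two-phase argument as the paper's proof: in each loop of $\contract$ you show the cost function $d$ is preserved at the contracted node (by substitution in the $\lor$ case, and by the ``at least two children'' observation in the $\land$ case) and that membership in $M$ is unchanged for all surviving nodes, with the contracted-away node never in $M$ to begin with. The only difference is a harmless swap of which of the two merged vertices you call $v$ versus $w$ in the second loop.
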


    \begin{proof}
        The first part of the $\contract$ procedure looks for a node $v$ that has only one child $w$, and it contracts these into a new node $v'$. Let $\T$ be the tree before this contraction, and $\T'$ the resulting tree afterwards, and define the cost functions and marked sets $d$, $d'$, $M$ and $M'$ accordingly. We have $d(v) = d(w)$, both when $\llabel(v) = \lor$ and $\llabel(v) = \land$, and we also have $d(v') = d(w)$, since $\llabel(v') = \llabel(w)$ and the cost function only depends on their immediate children. Next observe that $v \notin M$ since $|\child(v)| = 1$, and $\pparent(w) = \pparent(v')$. Thus, $w \in M \iff v' \in M'$.
        Since the cost of other nodes and the marked set remain unaffected in the remainder of the tree, we find that $S(\T') = S(\T)$.

        In the second part of the $\contract$ procedure, we find a node $v$ and one of its children $w$ having the same label, and we contract them into a new node $v'$. Again let $\T$ be the tree before this contraction, and $\T'$ the resulting tree afterwards. Define the cost functions $d$, $d'$, $M$ and $M'$, accordingly.
        \begin{itemize}
            \item
            If $\llabel(v) = \llabel(w) = \lor$, then
            \begin{align*}
                d'(v') &= \sum_{x \in \child(v')} d'(x) \tag*{by definition of the cost function and $\llabel(v') = \lor$} \\
                &= \sum_{x \in \child(v) \setminus \{w\}} d'(x) + \sum_{x \in \child(w)} d'(x) \tag*{by Lines~6-7 in Algorithm~\ref{alg: contract}} \\
                &= \sum_{x \in \child(v) \setminus \{w\}} d(x) + \sum_{x \in \child(w)} d(x) \tag*{since $d(x)$ depends only on $d(y)$ for $y \in \child(x)$} \\
                &= \sum_{x \in \child(v) \setminus \{w\}} d(x) + d(w) \tag*{by definition of the cost function and $\llabel(w) = \lor$} \\
                &= d(v),
            \end{align*}
            where the final equality follows from the definition of the cost function and the fact that $\llabel(v) = \lor$. \item
            Similarly, if $\llabel(v) = \llabel(w) = \land$, then because $v$ and $w$ both have more than one child each, then so does $v'$, and hence $d'(v') = d(v) = 1$.
        \end{itemize}
        Thus, both in the cases where the labels are $\land$ and $\lor$, the cost function on the rest of the tree is not altered. Moreover, $w \notin M$ because its parent is $v$ (which is its proper parent since all nodes have at least 2 children after the first phase of Algorithm~\ref{alg: contract}), which has the same label as $w$. Moreover $v \in M \Leftrightarrow v' \in M'$, because their proper parents are the same.
        Thus, we find that $S(\T') = S(\T)$, completing the proof.
    \end{proof}

    \begin{claim}\label{claim:point-scored-S-decreases}
        Every time the Delayer asks the Prover to choose, the progress measure decreases by at least $1$.
    \end{claim}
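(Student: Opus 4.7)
The plan is to track how one Prover move alters $S(\T)$. The move begins with $\T$ in reduced form. The Prover queries an arbitrary leaf $\ell$ at index $i$; when the Delayer defers, the Prover sets $b = 0$ and runs $\update(\T,i,0)$ followed by $\contract$. By Claim~\ref{claim: contract-ub}, $\contract$ preserves $S$, so it suffices to compare $S$ on $\T$ and on $\T' := \update(\T,i,0)$. Throughout, reduced form forces every internal node to have at least two children and to carry a different label than its parent. In particular, all ancestors of the queried leaf that are labeled by $\wedge$ have cost $d(\cdot) = 1$, and this cost is insensitive to local changes below them, so cost perturbations cannot propagate past a $\wedge$-ancestor with at least two children.

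The argument splits on the label of $v = \parent(\ell)$. Suppose first $\llabel(v) = \vee$. Then $\update$ removes only $\ell$, and $d'(v) = d(v) - 1$ while the cost values of all ancestors of $v$ are unchanged. If $|\child(v)| \geq 3$ in $\T$, then $v$ stays in $M'$ and its contribution $\max\{d(v)-1, 0\}$ drops by exactly $1$. If $|\child(v)| = 2$, then $v$ becomes single-child in $\T'$ and exits $M'$; its sole remaining sibling $w$ satisfies $\llabel(w) \neq \vee$ by reduced form, so $w \notin M \cup M'$, and the lost contribution of $v$ equals $d(w) \geq 1$.

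Suppose instead $\llabel(v) = \wedge$. Then $\update$ deletes $v$ together with the entire subtree rooted at it. If $v$ is the root of $\T$, then $\T'$ is empty, so $S(\T') = 0$ while $S(\T) \geq 1$. Otherwise, let $u = \parent(v)$; reduced form forces $\llabel(u) = \vee$, so $u \in M$, and since $d(v) = 1$ we have $d'(u) = d(u) - 1$. If $|\child(u)| \geq 3$, then $u$ remains in $M'$ and its contribution drops by exactly $1$; the only other change is the loss of the (nonnegative) contributions of any $M$-nodes that lay inside the discarded subtree of $v$. If $|\child(u)| = 2$, then $u$ becomes single-child in $\T'$ and exits $M'$; the remaining sibling $s$ of $v$ satisfies $\llabel(s) \neq \vee$ by reduced form and so belongs to neither $M$ nor $M'$, and the lost contribution from $u$ equals $d(u) - 1 = d(s) \geq 1$.

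In every case we obtain $S(\T) - S(\T') \geq 1$, which is the desired bound. The main subtlety is the ``single-child'' sub-case in each branch: one must check that no sibling gets pulled into $M'$ by the change of proper parent (which could in principle cancel the loss incurred by $v$ or $u$), and this is exactly the place where the reduced-form assumption on $\T$ is used, via the fact that $\llabel(w) \neq \vee$ in the first branch and $\llabel(s) \neq \vee$ in the second.
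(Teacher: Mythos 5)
Your proof is essentially correct and follows the same strategy as the paper: fix the post-$\contract$ invariance via Claim~\ref{claim: contract-ub}, then do a case analysis on $\llabel(\parent(\ell))$ and track the change in $d$ and $M$ at the few nodes that can be affected, using the reduced-form hypothesis to show that no new marked contribution can appear and cancel the loss. Your extra sub-split in the $\lor$-branch into $|\child(v)| \geq 3$ versus $|\child(v)| = 2$ is slightly more explicit than the paper, which handles both at once with the compact observation that $S(\T) - S(\T') = d(v) - d'(v)$ works out numerically in either sub-case; your version makes the ``$v$ might exit $M'$'' subtlety visible, which is a small improvement in clarity.

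There is one small gap: your case split begins ``on the label of $v = \parent(\ell)$,'' which tacitly assumes $\ell$ has a parent. But when the Delayer defers, $\T$ may consist of a single leaf node (this state is reachable from the initial complete AND-OR tree; e.g.\ an $\lor$-node with two leaf children loses one leaf and then $\contract$s to a lone leaf). In that case $\parent(\ell)$ is undefined and your two branches do not apply. The paper handles this as a separate, trivial first case: $\update$ makes the tree empty, so $S(\T) - S(\T') = 1 - 0 = 1$. You should add this base case to complete the argument.
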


    \begin{proof}
        We already know from Claim~\ref{claim: contract-ub} that the $\contract$ procedure does not impact the progress measure $S(\T)$. Thus, it remains to check that the $\update$ procedure decreases the progress measure by at least $1$, whenever the Delayer allows the Prover to choose. To that end, $\T$ be the tree at the moment the Delayer asks the Prover to choose. Since the Prover puts $\T$ back into reduced form after every round, we can assume it is in reduced form. According to the strategy outlined in Algorithm~\ref{alg: Prover strat}, the Prover chooses $0$. Let $\T'$ be the result of the $\update$ procedure, called in the last line of Algorithm~\ref{alg: Prover strat}, and let $d$, $d'$, $M$ and $M'$ be the associated cost functions and marked sets.

        Let the queried leaf be denoted by $\ell$. We consider three cases.

        \begin{enumerate}
            \item Suppose $\ell$ is the final remaining leaf in the tree. Then, Line~3 in the $\update$ procedure, i.e., Algorithm~\ref{alg: update}, will render the tree completely empty, and hence we find that $S(\T) - S(\T') = 1 - 0 = 1$. Thus, the progress measure indeed decreases by $1$.

            \item Let $v = \parent(\ell)$. Suppose $\llabel(v) = \lor$. Then, since the Prover always chooses $0$ according to Line~5 in Algorithm~\ref{alg: Prover strat}, we find that Line~5 of the $\update$ procedure, i.e., Algorithm~\ref{alg: update}, removes $\ell$ from the tree. Thus, we find that $d(v) - d'(v) = d(\ell) = 1$, by Definition~\ref{def:progress-measure-ub}. Moreover, $v \in M$ by Definition~\ref{def:progress-measure-ub} because $\llabel(v) = \lor$, $|\child(v)| \geq 2$, and $v$ is either the root node, or $\llabel(\pparent(v)) = \llabel(\parent(v)) = \land$ and $|\child(\parent(v))| \geq 2$ as $\T$ is in reduced form. Moreover, $d(v) \geq 2$, because
            $v$ has at least $2$ children that are either leaves or $\land$-nodes, and similarly $d'(v) \geq 1$. Thus, $\max\{d(v)-1,0\} = d(v)-1$ and $\max\{d'(v)-1,0\} = d'(v)-1$. Finally, the cost of other nodes and the set of marked nodes remain unaffected in the other parts of the tree, we find that $S(\T) - S(\T') = d(v) - d'(v) = 1$. Hence, also in this case the progress measure is decreased by $1$.

            \item On the other hand, suppose $\llabel(v) = \land$. Since the Prover chooses $0$, according to Line~5 of Algorithm~\ref{alg: Prover strat}, Line~7 of the $\update$ procedure, i.e., Algorithm~\ref{alg: update}, removes this node and the subtree rooted at $v$. If this node was the root, then the resulting tree is empty, meaning that $S(\T) - S(\T') \geq 1 - 0 = 1$, and hence the progress measure is indeed decreased by at least $1$. Otherwise, if $v$ was not the root, then let $u = \parent(v)$. We find that $\llabel(u) = \lor$, because $\T$ was in reduced form. Furthermore, since $|\child(u)| \geq 2$ in $\T$, we find that $d(u) \geq 2$, and $d'(u) \geq 1$. Furthermore, $v \not\in M$ because $\llabel(v) = \land$, and $u \in M$, because $\llabel(u) = \lor$, and the other conditions in the definition of $M$ are satisfied too. Moreover, the cost function above the node $u$ is not affected, and $M' \setminus\{u\} \subseteq M \setminus\{u\}$, where strict containment only happens whenever $M$ contains nodes in the subtree rooted at $v$ in $\T$.

            There are now two possible cases:
            \begin{itemize}
                \item If $|\child(u)| > 2$ in $\T$, then $|\child(u)| \geq 2$ in $\T'$, and hence we find that $u \in M'$ as well. Therefore, the difference between the old and new progress measures can be expressed as $S(\T) - S(\T') \geq d(u) - d'(u) = d(v) = 1$. Thus, indeed, the progress measure decreases by at least $1$.
                \item This leaves the case where $|\child(u)| = 2$ in $\T$, in which case we obtain that $d(u) = 2$ and $|\child(u)| = 1$ in $\T'$, and so $u \not\in M'$. This implies that $S(\T) - S(\T') \geq \max\{d(u) - 1, 0\} = 2 - 1 = 1$, and hence the progress measure decreases by at least $1$ in this case as well.
            \end{itemize}
        \end{enumerate}
        This completes the proof.
    \end{proof}

    \begin{claim}\label{claim:init-S-value}
        Initially, the progress measure $S(\T)$ is $(n+2)/3$.
    \end{claim}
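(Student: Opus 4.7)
The plan is to repeat essentially the same counting argument as in Claim~\ref{claim:Initial-P-value}, adapted to the new cost function $d$ from Definition~\ref{def:progress-measure-ub}. The key observation is that, at the start of the game, the AND-OR tree $\T$ is the complete binary AND-OR tree on $n$ variables, whose top node is an $\lor$-gate and whose levels alternate $\lor, \land, \lor, \land$, with leaves as children of the deepest $\land$-nodes.

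First, I would compute $d(v)$ for every node in $\T$. Every leaf and every $\land$-node with at least two children has $d$-cost $1$ by definition, and since the complete tree contains no single-child internal nodes, only the first and third cases of the recursion for $d$ ever apply. In particular, every $\land$-node has two children (both leaves or both $\lor$-nodes) and hence $d = 1$. For each $\lor$-node $v$ with children $w_1, w_2$ (both $\land$-nodes in the complete tree), we then get $d(v) = d(w_1) + d(w_2) = 1 + 1 = 2$, so $\max\{d(v) - 1, 0\} = 1$.

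Next, I would argue that every $\lor$-node of $\T$ lies in the marked set $M$. The root is an $\lor$-node with $\pparent$ undefined, and every non-root $\lor$-node has its direct parent as its proper parent (since all nodes have two children), and that parent is labelled $\land$ by the alternating structure. Both conditions for membership in $M$ are therefore satisfied, and since $\land$-nodes are not in $M$, the set $M$ consists exactly of the $\lor$-nodes of $\T$.

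Finally, I would count the $\lor$-nodes. The $\lor$-nodes sit at the odd levels $1, 3, \dots, \log n - 1$, with $2^{2k}$ nodes at level $2k+1$, so setting $m = (\log n)/2$, the total number of $\lor$-nodes is
\[\sum_{k=0}^{m-1} 4^k \;=\; \frac{4^m - 1}{3} \;=\; \frac{n - 1}{3}.\]
Plugging into the definition of $S(\T)$ yields $S(\T) = 1 + (n-1)/3 \cdot 1 = (n+2)/3$, as required. There is no real obstacle: the argument is a direct recursive computation and a geometric sum, completely parallel to Claim~\ref{claim:Initial-P-value}, with the only subtlety being to check that the definition of $d$ produces the same value on $\lor$-nodes of the complete tree as $c$ did (which it does, since the change from $c$ to $d$ only affects leaf costs, and in the complete tree leaves never sit directly below $\lor$-nodes).
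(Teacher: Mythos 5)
Your proof is correct and follows essentially the same route as the paper's: compute $d$ on leaves and $\land$-nodes (getting $1$), on $\lor$-nodes (getting $2$), observe that the marked set $M$ consists exactly of the $\lor$-nodes, and count those nodes via a geometric series to get $(n-1)/3$, hence $S(\T) = 1 + (n-1)/3 = (n+2)/3$. Your closing remark — that $c$ and $d$ differ only on leaves, and leaves in the complete tree sit under $\land$-nodes whose cost is $1$ under both definitions — is a nice way to see that the two progress measures must agree at the start, which the paper states but does not spell out.
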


    \begin{proof}
        Initially, every $\lor$-gate in the AND-OR tree is in $M$, because it is either the root node, or it has a parent node with $2$ children labeled by $\land$. Moreover, the cost of every node labeled by $\lor$ is $2$, because it has two children, both labeled by $\land$. Thus, we obtain that $S(\T)$ is $1$ plus the number of nodes labeled by $\lor$, which is $(n+2)/3$ (see the proof of Claim~\ref{claim:Initial-P-value}).
    \end{proof}

    \begin{claim}
        \label{claim:rank-ub}
        The rank of $F$ is at most $(n+2)/3$.
    \end{claim}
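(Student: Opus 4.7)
The plan is to combine the ingredients already assembled for the Prover's strategy in essentially the same fashion as the mirror argument used for Claim~\ref{claim:rank-lb}. By Claim~\ref{claim: rank equals Prover Delayer game value}, it suffices to exhibit a Prover strategy which forces the Delayer to score at most $(n+2)/3$ points. I will use the strategy from Algorithm~\ref{alg: Prover strat} and track the progress measure $S(\T)$ from Definition~\ref{def:progress-measure-ub} throughout the game.

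First I will record the invariant $\score + S(\T) \leq (n+2)/3$, maintained across every round. The initial value of $S(\T)$ is $(n+2)/3$ by Claim~\ref{claim:init-S-value}, and $\score = 0$, so the invariant holds at the start. In each round, the Prover first queries a leaf and then calls $\update$ followed by $\contract$; by Claim~\ref{claim: contract-ub} the $\contract$ step leaves $S$ unchanged, so the only changes to $S$ come from $\update$. I then distinguish two cases. If the Delayer answers $0$ or $1$ directly, the score does not change, and a brief case analysis (paralleling Figure~\ref{fig: case analysis} but simpler, since now leaves carry cost $1$) shows that $S$ does not increase: removing a leaf under an $\lor$-parent may decrease $d$ at that $\lor$-node but never raises any marked contribution, and an answer that kills the parent subtree can only drop marked vertices out of $M$. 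If instead the Delayer defers the choice to the Prover, Claim~\ref{claim:point-scored-S-decreases} gives $S(\T) \to S(\T')$ with $S(\T) - S(\T') \geq 1$, exactly compensating the $+1$ increase in $\score$.

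With the invariant in hand, the bound is immediate: the game terminates as soon as the tree is empty, at which point Claim~\ref{claim:progmeasure0endgame} says $S(\T) = 0$, and the invariant then yields $\score \leq (n+2)/3$. Since this bound holds regardless of the Delayer's choices, the value of the Prover-Delayer game on $F$ is at most $(n+2)/3$, and hence $\rank(F) \leq (n+2)/3$ by Claim~\ref{claim: rank equals Prover Delayer game value}.

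The main obstacle will be verifying that $S$ indeed does not increase when the Delayer answers deterministically with $0$ or $1$; this is not stated as a separate claim above, so the argument must either fold that verification into the proof or invoke a short lemma with the same case structure as Claim~\ref{claim:point-scored-S-decreases}. Once that monotonicity is in place, the rest of the proof is a one-line consequence of Claims~\ref{claim:progmeasure0endgame},~\ref{claim: contract-ub},~\ref{claim:point-scored-S-decreases},~\ref{claim:init-S-value}, together with Claim~\ref{claim: rank equals Prover Delayer game value}.
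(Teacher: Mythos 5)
Your proposal is correct and follows the paper's own argument essentially verbatim: initialize $S(\T)=(n+2)/3$ (Claim~\ref{claim:init-S-value}), note that $\contract$ preserves $S$ (Claim~\ref{claim: contract-ub}), that each deferred choice costs at least one unit of $S$ (Claim~\ref{claim:point-scored-S-decreases}), and that the game ends once $S=0$ (Claim~\ref{claim:progmeasure0endgame}), then conclude via Claim~\ref{claim: rank equals Prover Delayer game value}. You are in fact slightly more careful than the paper, which also leaves implicit the verification that $S$ does not increase in rounds where the Delayer answers $0$ or $1$ directly; the case analysis you flag (tracking how a cost increase at an $\land$-node left with a single child is exactly offset by its $\lor$-child leaving $M$) does go through.
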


    \begin{proof}
        Since the progress measure starts at $(n+2)/3$ (Claim~\ref{claim:init-S-value}), with every point scored it decreases by at least $1$ (Claim~\ref{claim:point-scored-S-decreases}), and the game ends whenever the progress measure $S$ is $0$ (Claim~\ref{claim:progmeasure0endgame}), the Prover can ensure that the number of points scored by the Delayer is at most $(n+2)/3$. From Claim~\ref{claim: rank equals Prover Delayer game value}, we infer that $\rank(F) \leq (n+2)/3$, completing the proof.
    \end{proof}

    \begin{proof}[Proof of Theorem~\ref{thm: rank of complete nand tree}]
        Combining Claim~\ref{claim:rank-lb} and Claim~\ref{claim:rank-ub} completes the proof.
    \end{proof}

    \begin{proof}[Proof of Theorem~\ref{thm: rank rrank separation nand tree}]
        Since randomized rank is at most randomized decision tree complexity, Theorem~\ref{thm: saks wigderson} implies
        \[
        \rrank(F) \leq \mathsf{R}(f) = \Theta\rbra{n^{\log\frac{1 + \sqrt{33}}{4}}} \approx \Theta\rbra{n^{.753\dots}}.
        \]
        Theorem~\ref{thm: rank of complete nand tree} implies
        \[
        \rank(F) = \frac{n+2}{3}.
        \]
        This proves the theorem.
    \end{proof}

    \section{Proof of Theorem~\ref{thm: weights in program give query upper bound}}\label{sec: bt span program}

    The main results of this section provide a characterization of the optimal witness complexity and objective value of the dual adversary bound, based on the weighting scheme in Beigi and Taghavi's construction of an NBSPwOI and a dual adversary solution for $\leaff$ given a relation $f \subseteq \zone^n \times \calR$ and a decision tree $\T$ computing it~\cite[Section~3]{BT20} (recall from Section~\ref{sec: prelims} that $\leaff$ takes an input $x \in \zone^n$ and outputs the leaf of $\T$ reached on input $x$), in terms of the objective value of Program~\ref{program: weights}. We describe their construction in a modular fashion: we leave the choice of `weights' of the vectors in the span program and dual adversary solution unfixed. We show that the witness complexity and dual adversary bounds thus obtained are captured by the objective value of Program~\ref{program: weights}. In the next section we demonstrate a choice of weights and prove its optimality. In Appendix~\ref{app: another weight sqrt llogl} we exhibit another interesting choice of weights.

    \subsection{Span Program Construction}\label{sec: construction}

    In order to define the NBSPwOI, we first assign strictly positive real weights $W_e$ to all edges $e$ in the decision tree. These weights play a crucial role in the witness complexity analysis.

    The following is the NBSPwOI for $\leaff$.

    \begin{itemize}
        \item The vector space is the span of $\cbra{\ket{v} : v \in V(\T)}$.
        \item The input vectors are
        \begin{equation}\label{eqn: weights in input vectors}
            I_{j, q} = \bigcup_{v \in V_I(\T) : J(v) = j} \cbra{\sqrt{W_{e(v, N(v,q))}}\rbra{\ket{v} - \ket{N(v, q)}}}.
        \end{equation}
        That is, for all $j \in [n]$ and $q \in \zone$, the input vectors correspond to edges corresponding to answers of queries of the form $x_j = q$. In other words, for every vertex $v \in V_I(\T)$, $e(v,N(v,x_{J(v)}))$ is always the unique available outgoing edge of $v$. Moreover, these vectors are weighted, and we leave these weights variable for now.
        \item Let $r$ denote the root vertex of $\T$. For each leaf $u$ of $\T$, the associated target vector is given by
        \[
        \ket{t_u} = \ket{r} - \ket{u}.
        \]
    \end{itemize}
    We now give positive and negative witnesses for every $x \in D_f$, argue that the above span program evaluates $\leaff$, and analyze the positive and negative witness complexities.

    Note that, we use $v \in P_x$ to denote a vertex in the path $P_x$, and, we use $e \in P_x$ to denote an edge in the path $P_x$. For every $x \in D_f$, we can express the corresponding target vector by a telescoping sum of vectors that are all available to $x$, as
    \begin{align*}
        \ket{t_{\leaff(x)}} &= \ket{r} - \ket{\leaff(x)} = \sum_{v \in P_x \setminus \{\leaff(x)\}} \ket{v} - \ket{N(v,x_{J(v)})} \\
        &= \sum_{v \in P_x \setminus \{\leaff(x)\}}\frac{1}{\sqrt{W_{e(v, N(v, x_{J(v)}))}}}\rbra{\sqrt{W_{e(v, N(v, x_{J(v)}))}}\rbra{\ket{v} - \ket{N(v, x_{J(v)})}}}.
    \end{align*}
    Note from Equation~\eqref{eqn: weights in input vectors} that all the vectors in the curly braces above are available for $x$. Any vector in $\C^{I}$ can be viewed as indexed by elements of $I$. Define $\ket{w_x} \in \C^{|I|}$ as follows: For a node $v \in P_x \setminus \cbra{\leaff(x)}$, assign the value $\frac{1}{\sqrt{W_{e(v, N(v, x_{J(v)}))}}}$ to the entry corresponding to the vector $\sqrt{W_{e(v, N(v, x_{J(v)}))}}\rbra{\ket{v} - \ket{N(v, x_{J(v)})}}$, and set all other coefficients to 0.
    On the other hand, we let
    \[\ket{\overline{w}_x} = \sum_{v \in P_x} \ket{v}.\]
    For any vector $\ket{v''}$ in $I(x)$, because it is of the form
    \[\ket{v''}=\sqrt{W_{e(v',N(v',x_{J(v')}))}} \left(\ket{v'} - \ket{N(v',e(v',x_{J(v')}))}\right)\]
    for some $v' \in V_I(\T)$, we have
    \[\braket{\overline{w}_x}{v''} = \sum_{v \in P_x} \sqrt{W_{e(v',N(v',x_{J(v')}))}} \left(\braket{v}{v'} - \braket{v}{N(v',e(v',x_{J(v')}))}\right) = 0.\]
    For a leaf $u \neq \leaff(x)$ of $\T$,
    \begin{equation*}
        \braket{t_u}{\overline{w}_x} = \sum_{v \in P_x} \braket{r}{v} - \sum_{v \in P_x} \braket{u}{v} = 1 - 0 = 1.
    \end{equation*}
    This implies that the NBSPwOI indeed computes $\leaff$. For the positive and negative witness sizes, we have
    \[\wsizePos(P,w,\overline{w}) = \max_{x \in D_f}\norm{\ket{w_x}}^2 = \max_{x \in D_f} \sum_{v \in P_x \setminus \{\leaff(x)\}} \frac{1}{W_{e(v,N(v,x_{J(v)}))}} = \max_{P \in P(\T)} \sum_{e \in P} \frac{1}{W_e},\]
    and
    \begin{align*}
        \wsizeNeg(P,w,\overline{w}) &= \max_{x \in D_f} \norm{A^{\dagger} \ket{\overline{w}_x}}^2 = \max_{x \in D_f} \norm{\sqrt{W_{e(v,N(v,x_{J(v)}))}}\left(\bra{v} - \bra{N(v,x_{J(v)})}\right)\ket{\overline{w}_x}}^2 \\
        &= \sum_{v \in P_x \setminus \{\leaff(x)\}} W_{e(v,N(v,\overline{x_{J(v)}}))} = \max_{P \in P(\T)} \sum_{e \in \overline{P}} W_e.
    \end{align*}
    Now, it remains to find the weight assignment $W$ that minimizes the total complexity of the NBSPwOI, which is given by
    \[\wsize(P,w,\overline{w}) = \sqrt{\wsizeNeg(P,w,\overline{w}) \cdot \wsizePos(P,w,\overline{w})} = \sqrt{\max_{P \in P(\T)} \sum_{e \in P} \frac{1}{W_e} \cdot \max_{P \in P(\T)} \sum_{e \in \overline{P}} W_e}.\]

    Thus, if we assign weights $W$ to the edges of $\T$ as earlier in this section, and set $\alpha = \max_{P \in P(\T)} \sum_{e \in \overline{P}} W_e$ and $\beta = \max_{P \in P(\T)} \sum_{e \in P} \frac{1}{W_e}$, the construction from earlier in this section gives rise to an explicit NBSPwOI computing $\leaff$
    %, and hence $f$,
    with witness complexity of $\sqrt{\alpha\beta}$. This is exactly captured by Program~\ref{program: weights}, giving us the following theorem.

    \begin{theorem}\label{thm: witness size bounded by opt}
        Let $f \subseteq \zone^n \times \calR$ be a relation, let $\T$ be a decision tree computing $f$ and let $\OPT_\T$ denote the optimal value of Program~\ref{program: weights}. Then, for the construction of $(P, w, \overline{w})$ with variable weights as earlier in this section,
        \[
        \wsize(P, w, \overline{w}) = \OPT_\T.
        \]
    \end{theorem}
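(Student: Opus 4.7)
The plan is to establish the stated equality by direct correspondence between weight assignments in the NBSPwOI construction and feasible solutions of Program~\ref{program: weights}, read in both directions. The main observation is that all the work has essentially been done in the construction itself: the witness complexity as a function of the edge weights $W$ has already been derived as
\[
\wsize(P, w, \overline{w}) = \sqrt{\max_{P \in P(\T)} \sum_{e \in \overline{P}} W_e \; \cdot \; \max_{P \in P(\T)} \sum_{e \in P} \frac{1}{W_e}},
\]
and the program is simply the natural epigraphic reformulation of this expression as a minimization. The statement to prove should therefore be interpreted as: the infimum of $\wsize(P, w, \overline{w})$ over all positive weight assignments $W$ equals $\OPT_\T$.

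For the inequality $\OPT_\T \leq \wsize(P, w, \overline{w})$ (for any $W$), I would take an arbitrary positive weighting $W$ and define $\alpha(W) := \max_{P \in P(\T)} \sum_{e \in \overline{P}} W_e$ and $\beta(W) := \max_{P \in P(\T)} \sum_{e \in P} \tfrac{1}{W_e}$. By construction, $(W, \alpha(W), \beta(W))$ satisfies every constraint of Program~\ref{program: weights} with equality at the worst path, so it is feasible, and its objective value $\sqrt{\alpha(W)\beta(W)}$ equals $\wsize(P, w, \overline{w})$ by the formula above. Hence $\OPT_\T \leq \wsize(P, w, \overline{w})$.

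For the reverse inequality, I would take any feasible solution $(W, \alpha, \beta)$ of Program~\ref{program: weights} and instantiate the construction with weights $W$. The path constraints imply $\max_P \sum_{e \in \overline{P}} W_e \leq \alpha$ and $\max_P \sum_{e \in P} \tfrac{1}{W_e} \leq \beta$, so again by the formula for $\wsize(P, w, \overline{w})$ we get $\wsize(P, w, \overline{w}) \leq \sqrt{\alpha\beta}$. Taking the infimum over feasible solutions gives $\inf_W \wsize(P, w, \overline{w}) \leq \OPT_\T$. Combining the two directions yields the theorem.

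There is no real obstacle here; the only subtlety is that the problem is an infimum rather than a minimum a priori, but since for any fixed $W$ one can take $\alpha = \alpha(W)$, $\beta = \beta(W)$ to saturate the constraints, the infimum is attained exactly when the program's optimum is attained, so no continuity or compactness argument is needed. The one technical point worth flagging explicitly in the writeup is that the \emph{positive} weight condition $W_e > 0$ in the program matches the requirement that the scaling factors $\sqrt{W_{e(v,N(v,q))}}$ appearing in the input vectors and in the reciprocals $1/\sqrt{W_e}$ in $\ket{w_x}$ be well-defined, so the feasible set of the program exactly parametrizes admissible constructions.
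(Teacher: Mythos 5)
Your proposal is correct and matches the paper's reasoning: the paper derives exactly the formula $\wsize(P,w,\overline{w}) = \sqrt{\max_{P}\sum_{e\in\overline{P}}W_e\cdot\max_{P}\sum_{e\in P}1/W_e}$ and then simply observes that minimizing this over positive weightings is the epigraph reformulation given by Program~\ref{program: weights}, which is precisely your two-direction correspondence between weightings and feasible triples $(W,\alpha,\beta)$. Your writeup is in fact slightly more explicit than the paper's one-line justification, but it is the same argument.
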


    In the next subsection we show that a solution to Program~\ref{program: weights} also gives a dual adversary solution with the same objective value.

    \subsection{Dual Adversary Solution}\label{sec: dual adversary solution}

    We give a refined analysis of Beigi and Taghavi's construction of a dual adversary solution for a relation $f\subseteq \{0,1\}^n \times \calR$ given a deterministic decision tree $\T$ that computes $f$. Recall that for a deterministic tree $\T$ computing $f$, $\leaff$ is the function that takes input $x \in D_f$ and outputs the leaf of $\T$ reached on input $x$. We show that a dual adversary solution for $\leaff$ can also be obtained by different settings of weights as in the previous subsection, and obtain a corresponding dual adversary bound as the optimum value of Program~\ref{program: weights}.

    We construct vectors $\cbra{\ket{u_{xj}} : x \in D_f, j \in [n]}$ and $\cbra{\ket{w_{xj}} : x \in D_f, j \in [n]}$ that are feasible solutions to Program~\ref{program: sdp dual f} for $\leaff$, which we recall below.

    \begin{table}[H]
        \centering~
        \begin{tabular}{|lll|}
            \hline
            Variables & $\cbra{\ket{u_{xj}} : x \in D_f, j \in [n]}$ and $\cbra{\ket{w_{xj}} : x \in D_f, j \in [n]}, d$ & \\
            Minimize  & $\max_{x \in D_f} \max\cbra{\sum_{j=1}^{n} \norm{\ket{u_{xj}}}^2, \sum_{j=1}^{n}\norm{\ket{w_{xj}}}^2} $ & \\
            s.t.
            & $\sum_{j \in [n] : x_j\neq y_j} \braket{u_{xj}}{w_{yj}} = 1- \delta_{\leaff(x),\leaff(y)}$ & $\forall x,y \in D_f$ \\
            & $\ket{u_{xj}}, \ket{w_{xj}} \in \mathbb{C}^{d}$ & for all $x \in D_f$ \\
            \hline
        \end{tabular}
        \caption{\label{program: sdp dual ftilde} Dual SDP for $\leaff$}
    \end{table}

    Let $V_I(\T)$ denote the set of internal nodes of $\T$. Consider the basis set $\{\ket{v}: v \in V_I(\T)\}$.
    We construct the vectors $\ket{u_{xj}}$ and $\ket{w_{xj}}$ in the space $\mathbb{C}^{V_I(\T)}$. Additionally, we use $V_j(\T)$ to denote the set of vertices associated with query index $j$, i.e., $V_j(\T) = \cbra{v \in V_I(\T) : J(v) = j}$.

    Define $\ket{u_{xj}}$ and $\ket{w_{xj}}$ as follows.
    %in the vector space $\mathbb{C}^{V_I(\T)}$ as follows.
    \begin{equation}
        \label{eq:DetUxj}
        \ket{u_{xj}}=
        \begin{cases} \frac{1}{\sqrt{W_{e(v,N(v,x_j))}}}\ket{v} & \text{if } \exists v \in P_x \cap V_j(\T), \\
            0 & \text{otherwise},
        \end{cases}
    \end{equation}
    and,
    \begin{equation}
        \label{eq:DetWxj}
        \ket{w_{xj}}=
        \begin{cases} \sqrt{W_{e(v,N(v,\overline{x_j}))}}\ket{v} & \text{if } \exists v \in P_x \cap V_j(\T), \\
            0 & \text{otherwise}.
        \end{cases}
    \end{equation}

    We claim that these vectors form a feasible solution to Program~\ref{program: sdp dual ftilde}. Fix $x,y \in D_f$ with $\leaff(x)\neq \leaff(y)$. We now verify that the corresponding equality constraint in Program~\ref{program: sdp dual ftilde} is satisfied.
    \begin{itemize}
        \item There is a unique vertex in $\T$ where $P_x$ and $P_y$ deviate. Let $v \in V_I(\T)$ denote this vertex, and let its associated query index be $J(v) = i$. We have $v \in P_x \cap P_y$ and $x_i\neq y_i$. In that case $\braket{u_{xi}}{w_{yi}}=1$ by the definitions of $\ket{u_{xi}}$ and $\ket{w_{yi}}$ from Equations~\eqref{eq:DetUxj} and~\eqref{eq:DetWxj}.
        \item Consider an index $j \in [n] \setminus \cbra{i}$ such that $x_j \neq y_j$. Let $v'$ and $v''$ be the vertices on $P_x$ and $P_y$, respectively (if they exist), with $J(v') = J(v'') = j$. By the previous point, $v' \notin P_y$ and $v'' \notin P_x$. Thus, $\braket{v'}{v''} = 0$ from Equations~\eqref{eq:DetUxj} and~\eqref{eq:DetWxj}, which implies $\braket{u_{xj}}{w_{yj}} = 0$.
    \end{itemize}
    Thus, we have for all $x, y \in D_f$ with $\leaff(x)\neq \leaff(y)$,
    \[
    \sum_{j \in [n] : x_j \neq y_j} \braket{u_{xj}}{w_{yj}}=1-\delta_{\leaff(x),\leaff(y)}.
    \]

    In the case when $\leaff(x) = \leaff(y)$, the right hand side in the constraint evaluates to $0$ and so does the left side, because the indices where $x$ and $y$ differ cannot be queried on their path since $x$ and $y$ reach the same leaf in $\T$. Therefore, the set of vectors $\cbra{\ket{u_{xj}} : x \in D_f, j \in [n]}$ and $\cbra{\ket{w_{xj}} : x \in D_f, j \in [n]}$, and $d = |V_I(\T)|$ form a feasible solution to Program~\ref{program: sdp dual ftilde} for $\leaff$.

    \begin{theorem}\label{thm: dadv is bounded by opt}
        Let $f \subseteq \zone^n \times \calR$ be a relation, and let $\T$ be a decision tree computing it. Let $C$ denote the optimal value of Program~\ref{program: sdp dual ftilde} with variable weights as earlier in this section, and let $\OPT_\T$ denote the optimal value of Program~\ref{program: weights}. Then $C = \OPT_\T$.
    \end{theorem}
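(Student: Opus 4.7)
The construction in Section~\ref{sec: dual adversary solution} already gives us a family of feasible solutions to Program~\ref{program: sdp dual ftilde} parameterized by the positive edge weights $W = (W_e)_e$. The plan is to (i) compute the objective value of Program~\ref{program: sdp dual ftilde} attained by this family in closed form as a function of $W$, and (ii) show via a simple scaling argument that minimizing over $W$ yields exactly $\OPT_\T$.

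For step (i), fix $x \in D_f$. From Equation~\eqref{eq:DetUxj}, the only indices $j$ for which $\ket{u_{xj}}$ is nonzero are those indices queried at some internal node $v$ along $P_x$. Since $\T$ queries each variable at most once along any path, these nodes are in bijection with the edges of $P_x$ via $v \mapsto e(v, N(v, x_{J(v)}))$. Consequently,
\[
\sum_{j=1}^{n} \norm{\ket{u_{xj}}}^2 \;=\; \sum_{v \in P_x \cap V_I(\T)} \frac{1}{W_{e(v, N(v, x_{J(v)}))}} \;=\; \sum_{e \in P_x} \frac{1}{W_e}.
\]
An entirely analogous calculation using Equation~\eqref{eq:DetWxj}, now pairing each $v$ with the edge leaving $P_x$ at $v$ (namely $e(v, N(v, \overline{x_{J(v)}}))$), gives
\[
\sum_{j=1}^{n} \norm{\ket{w_{xj}}}^2 \;=\; \sum_{e \in \overline{P_x}} W_e.
\]
Therefore the objective value of this feasible solution equals
\[
\Phi(W) \;\coloneqq\; \max_{x \in D_f}\; \max\!\cbra{\sum_{e \in P_x} \frac{1}{W_e},\; \sum_{e \in \overline{P_x}} W_e}
\;=\; \max\!\cbra{\max_{P \in P(\T)} \sum_{e \in P} \frac{1}{W_e},\; \max_{P \in P(\T)} \sum_{e \in \overline{P}} W_e},
\]
and $C = \inf_{W > 0} \Phi(W)$.

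For step (ii), I will show $\inf_{W > 0} \Phi(W) = \OPT_\T$ by two inequalities. In one direction, any weights $W$ attaining $\Phi(W) = C$ may be used in Program~\ref{program: weights} by setting $\alpha = \max_P \sum_{e \in \overline{P}} W_e$ and $\beta = \max_P \sum_{e \in P} 1/W_e$, both of which are at most $C$, so $\sqrt{\alpha\beta} \le C$, giving $\OPT_\T \le C$. In the other direction, take weights $W^*$ and values $\alpha^*, \beta^*$ feasible for Program~\ref{program: weights} with $\sqrt{\alpha^*\beta^*}$ arbitrarily close to $\OPT_\T$, and rescale by $\lambda = \sqrt{\beta^*/\alpha^*}$, i.e., set $W_e = \lambda W^*_e$. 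Then $\sum_{e \in \overline{P}} W_e \le \lambda \alpha^* = \sqrt{\alpha^* \beta^*}$ and $\sum_{e \in P} 1/W_e \le \beta^*/\lambda = \sqrt{\alpha^* \beta^*}$ for every path $P$. Hence $\Phi(W) \le \sqrt{\alpha^* \beta^*}$, which implies $C \le \OPT_\T$. Combining the two inequalities proves the theorem.

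There is no real obstacle here: step (i) is bookkeeping that reuses the bijection between internal nodes on $P_x$ and the edges in $P_x$ (respectively $\overline{P_x}$), and step (ii) is the standard observation that a product-minimization $\min \sqrt{\alpha\beta}$ subject to $\alpha \ge A(W),\ \beta \ge B(W)$ is equivalent, under scaling of $W$, to a max-minimization of $\max\{A(W), B(W)\}$. The only mild care needed is that the feasibility constraints in Program~\ref{program: weights} hold for \emph{all} root-to-leaf paths, but this is exactly what the maxima in $\Phi(W)$ encode.
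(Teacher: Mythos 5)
Your proposal is correct and follows essentially the same route as the paper: the same closed-form computation of the two norm sums as path sums over $P_x$ and $\overline{P_x}$, followed by the same rescaling argument (by $\sqrt{\beta/\alpha}$) to convert between minimizing $\max\{\alpha,\beta\}$ and minimizing $\sqrt{\alpha\beta}$. The only cosmetic difference is your use of infima rather than assuming the optimum is attained.
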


    \begin{proof}[Proof of Theorem~\ref{thm: dadv is bounded by opt}]
        Let $C$ denote the objective value of Program~\ref{program: sdp dual ftilde} with the settings of vectors $\cbra{\ket{u_{xj}} : x \in D_f, j \in [n]}$ and $\cbra{\ket{w_{xj}} : x \in D_f, j \in [n]}$ as defined in Equations~\eqref{eq:DetUxj} and~\eqref{eq:DetWxj}, and $d = |V_I(\T)|$.

        We now argue that $C=\OPT_{\T}$, where $\OPT_{\T}$ denotes the optimal solution of Program~\ref{program: weights}. First note that for all $x \in D_f$,
        \begin{equation*}
            \sum_{j=1}^{n} \norm{\ket{u_{xj}}}^2 = \sum_{e \in P_x} \frac{1}{W_e} \qquad \text{and} \qquad \sum_{j=1}^{n} \norm{\ket{w_{xj}}}^2 = \sum_{e \in \overline{P}_x} W_e.\\
        \end{equation*}
        Thus,
        \begin{align*}
            C &= \min \max_{x \in D_f} \max  \cbra{\sum_{j=1}^{n} \norm{\ket{u_{xj}}}^2,  \sum_{j=1}^{n} \norm{\ket{w_{xj}}}^2} = \min \max_{x \in D_f} \max \cbra{\sum_{e \in P_x} \frac{1}{W_e}, \sum_{e \in \overline{P}_x} W_e}.
        \end{align*}
        Thus we can alternatively view $C$ to be an optimal solution to the following optimization program.

        \begin{table}[H]
            \centering~
            \begin{tabular}{|llllll|}
                \hline
                Variables & $\cbra{W_e: e~\textnormal{is an edge in}~\T}, \alpha, \beta$ & & & & \\
                Minimize  & $\max\{\alpha, \beta\}$ & & & & \\
                s.t. & $\sum_{e \in \overline{P}}W_e$ & $\leq \alpha,$ & & for all paths $P \in P(\T)$ & \\
                & $\sum_{e \in P}\frac{1}{W_e}$ & $\leq \beta,$ & & for all paths $P \in P(\T)$ & \\
                & $W_e$ & $> 0,$ & & for all edges $e$ in $\T$ & \\
                & $\alpha,\beta$ & $\geq 0.$ & & & \\
                \hline
            \end{tabular}
            \caption{\label{program: sdp dual ftilde program}}
        \end{table}

        We now show $C = \OPT_\T$. Let $\cbra{W_e : e~\textnormal{edge in}~\T}, \alpha, \beta$ be settings of variables in a feasible solution to Program~\ref{program: sdp dual ftilde program}, with objective value $C$. Clearly the same settings of variables also form a feasible solution to Program~\ref{program: weights}, since the constraints are the same. The corresponding objective value of Program~\ref{program: weights} is $\sqrt{\alpha \beta} \leq \max\cbra{\alpha, \beta} = C$. Thus, $\OPT_\T \leq C$.

        In the other direction, let $\cbra{W_e : e~\textnormal{edge in}~\T}, \alpha, \beta$ be settings of variables in a feasible solution to Program~\ref{program: weights} with objective value $\OPT_\T$. Set
        \begin{align*}
            W'_e & = \sqrt{\beta/\alpha} \cdot W_e \qquad \textnormal{for all edges $e$ in $\T$},\\
            \alpha' & = \sqrt{\alpha \beta},\\
            \beta' & = \sqrt{\alpha\beta}.
        \end{align*}
        It is easy to verify that this setting of variables is feasible for Program~\ref{program: sdp dual ftilde program}, and attains objective value $\max\cbra{\alpha', \beta'} = \sqrt{\alpha\beta} = \OPT_\T$. Thus, $C \leq \OPT_\T$, proving the theorem.
    \end{proof}

    We now prove Theorem~\ref{thm: weights in program give query upper bound}, using results proved earlier in this section.

    \begin{proof}[Proof of Theorem~\ref{thm: weights in program give query upper bound}]
        We give two proofs, one via span program witness complexity, and another via the dual adversary bound.
        \begin{enumerate}
            \item Consider the NBSPwOI $(P, w, \overline{w})$ for $\leaff$ as constructed in Section~\ref{sec: construction}. Theorem~\ref{thm: witness size bounded by opt} implies $\wsize(P, w, \overline{w}) \leq \OPT_\T$. Theorem~\ref{thm: bt19} implies $\Q(\leaff) = O(\OPT_\T)$.
            \item Consider the dual adversary solution for $\leaff$ as constructed in Section~\ref{sec: dual adversary solution}. Theorems~\ref{thm:DualSDP-QuantumQueryUpperBound} and~\ref{thm: dadv is bounded by opt} imply $\Q(\leaff) = O(C) = O(\OPT_\T)$.
        \end{enumerate}
    \end{proof}

    \section{An Optimal Weight Assignment}
    \label{sec: weight assignments}

    It now remains to investigate how we can assign weights to edges in a decision tree so as to optimize the objective value of Program~\ref{program: weights}. Beigi and Taghavi gave an explicit weighting scheme by coloring the edges of the decision tree with two colors, and then assigning weights depending on the color that the edge is colored by~\cite[Section~3]{BT20}. They raised the question whether their scheme can be significantly improved upon. We answer this question affirmatively, by giving an \emph{optimal} solution to the weight optimization program from Definition~\ref{def:weight-program}, and providing a constructive algorithm to compute the optimal weights in this section. We also give an alternative, albeit sub-optimal, assignment of weights in Appendix~\ref{app: another weight sqrt llogl}.

    %The construction we present is recursive, in the sense that we first assign weights to the edges connected to the leaves, and subsequently work our way up the tree until we reach the root node. More precisely, in the $i$'th iteration, we assign weights to all edges that have maximum distance $i$ from a leaf.

    First, we define the weight assignment, which we will refer to as the \textit{canonical weight assignment}, in Definition~\ref{def:weight-assignment}. The construction we present is recursive, in the sense that we first assign weights to the edges connected to the leaves, and subsequently work our way up the tree until we reach the root node. Then, we prove its optimality, resulting in Theorem~\ref{thm:weight-optimality}. Finally, we prove Corollary~\ref{cor:ubs}, which gives some upper bounds on the optimal objective value, in terms of natural measures of the decision tree.

    \begin{definition}[Canonical weight assignment]
        \label{def:weight-assignment}
        Let $\T$ be a non-trivial decision tree with root node $r$. Let $L$ and $R$ be the two children nodes of $r$, connected to $r$ by the edges $e_L$ and $e_R$, respectively. Let $\T_L$ and $\T_R$ be the subtrees of $\T$ rooted at $L$ and $R$, respectively. Then, we assign weights to $e_L$ and $e_R$, by setting
        \begin{align*}
            W_{e_L} &= \frac{\OPT_{\T_L} - \OPT_{\T_R} + \sqrt{(\OPT_{\T_L} - \OPT_{\T_R})^2 + 4}}{2}, \\
            W_{e_R} &= \frac{\OPT_{\T_R} - \OPT_{\T_L} + \sqrt{(\OPT_{\T_R} - \OPT_{\T_L})^2 + 4}}{2}.
        \end{align*}
    \end{definition}

    In order to define the weights $W_{e_L}$ and $W_{e_R}$, we need to know the optimal values of Program~\ref{program: weights} for the subtrees $\T_L$ and $\T_R$. We now proceed to show how one can compute these optimal values via a recurrence relation.

    \begin{lemma}
        \label{lem:complexity-ub}
        Let $\T$ be a non-trivial decision tree, and let $L$ and $R$ be the two children nodes of the root node. Let $\T_L$ and $\T_R$ be the subtrees of $\T$ rooted at $L$ and $R$, respectively. Then,
        \[\OPT_\T \leq \frac{\OPT_{\T_L} + \OPT_{\T_R} + \sqrt{(\OPT_{\T_L} - \OPT_{\T_R})^2 + 4}}{2}.\]
    \end{lemma}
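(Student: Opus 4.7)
The plan is to construct a feasible solution to Program~\ref{program: weights} for $\T$ whose objective value matches the claimed upper bound, by combining optimal solutions on $\T_L$ and $\T_R$ with the canonical weights on $e_L, e_R$ from Definition~\ref{def:weight-assignment}.

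First, I would take optimal solutions to Program~\ref{program: weights} on $\T_L$ and $\T_R$, and use the standard rescaling trick (replacing each $W_e$ by $cW_e$ multiplies $\alpha$ by $c$ and $\beta$ by $1/c$, leaving $\sqrt{\alpha\beta}$ invariant) to assume without loss of generality that the two defining parameters in each subtree coincide, i.e., $\alpha_L = \beta_L = \OPT_{\T_L}$ and $\alpha_R = \beta_R = \OPT_{\T_R}$. Write $a = \OPT_{\T_L}$ and $b = \OPT_{\T_R}$ for brevity.

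Second, I would assign $W_{e_L}$ and $W_{e_R}$ according to Definition~\ref{def:weight-assignment} and verify two algebraic identities:
\begin{itemize}
    \item $W_{e_L} \cdot W_{e_R} = 1$, which follows from $(a-b+\sqrt{(a-b)^2+4})(b-a+\sqrt{(a-b)^2+4}) = ((a-b)^2+4) - (a-b)^2 = 4$. In particular $1/W_{e_L} = W_{e_R}$ and $1/W_{e_R} = W_{e_L}$.
    \item $W_{e_L} - W_{e_R} = a - b$, so $W_{e_L} + b = W_{e_R} + a$.
\end{itemize}
Denote this common value by $M := W_{e_L} + b = W_{e_R} + a = \tfrac{a+b+\sqrt{(a-b)^2+4}}{2}$.

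Third, I would check feasibility. Every root-to-leaf path $P$ of $\T$ either begins with $e_L$ and continues along a root-to-leaf path $P'$ in $\T_L$, or begins with $e_R$ and continues along some $P'$ in $\T_R$. In the first case,
\[\sum_{e \in \overline{P}} W_e = W_{e_R} + \sum_{e \in \overline{P'}} W_e \le W_{e_R} + a = M, \qquad \sum_{e \in P} \frac{1}{W_e} = \frac{1}{W_{e_L}} + \sum_{e \in P'} \frac{1}{W_e} \le W_{e_R} + a = M,\]
using the two identities above together with $\alpha_L = \beta_L = a$. The second case is symmetric, yielding the same bound $M$ from $W_{e_L} + b = M$ and $1/W_{e_R} + b = W_{e_L} + b = M$. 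Thus $\alpha = \beta = M$ is a feasible choice, and the objective value $\sqrt{\alpha\beta} = M$ is exactly the claimed upper bound, completing the proof.

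There is no serious obstacle: the only non-routine step is writing down the canonical weights in a way that makes both the $\alpha$-constraints and the $\beta$-constraints tight simultaneously, which is precisely what the two identities $W_{e_L}W_{e_R}=1$ and $W_{e_L}-W_{e_R}=a-b$ guarantee.
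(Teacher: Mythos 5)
Your proof is correct and takes essentially the same approach as the paper: rescale the optimal subtree solutions so the $\alpha$- and $\beta$-bounds agree (the paper inlines the $\sqrt{\beta_L/\alpha_L}$ scaling where you invoke a WLOG normalization), then glue with the canonical weights on $e_L, e_R$ and exploit $1/W_{e_L} = W_{e_R}$ (your identity $W_{e_L}W_{e_R}=1$) and $W_{e_L}+\OPT_{\T_R} = W_{e_R}+\OPT_{\T_L}$ (your identity $W_{e_L}-W_{e_R}=a-b$) to obtain the feasible value $M$. Your presentation, making the two algebraic identities explicit and showing $\alpha=\beta=M$ directly rather than via a $\max$ that later collapses, is marginally cleaner but not a different argument.
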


    \begin{proof}
        Suppose we have weight assignments $W_L$ and $W_R$ on the subtrees $\T_L$ and $\T_R$, and positive parameters $\alpha_L, \alpha_R$ and $\beta_L, \beta_R$ such that they form feasible solutions to the optimization programs for the left and right subtrees $\T_L$ and $\T_R$, and such that they attain the optimal values $\sqrt{\alpha_L\beta_L} = \OPT_{\T_L}$ and $\sqrt{\alpha_R\beta_R} = \OPT_{\T_R}$. Now, let the weighting scheme $W$ on $\T$ be defined such that
        \begin{equation}\label{eqn: wl}
            W_{e} = \sqrt{\beta_L/\alpha_L}{(W_L)}_e
        \end{equation}
        for all edges $e$ in $\T_L$,
        \begin{equation}\label{eqn: wr}
            W_{e'} = \sqrt{\beta_R/\alpha_R}{(W_R)}_{e'}
        \end{equation}
        for all edges $e'$ in $\T_R$, and choose $W_{e_L}$ and $W_{e_R}$ as in Definition~\ref{def:weight-assignment}. Furthermore, let
        \begin{align}\label{eqn: settings of alpha beta}
            \alpha &:= \max\left\{\OPT_{\T_L} + W_{e_R}, \OPT_{\T_R} + W_{e_L}\right\}, \\
            \beta &:= \max\left\{\OPT_{\T_L} + \frac{1}{W_{e_L}}, \OPT_{\T_R} + \frac{1}{W_{e_R}}\right\}.
        \end{align}
        First observe that for a path $P \in P(\T)$ whose first edge is $e_L$ ($e_R$, respectively) must have all remaining edges in $\T_L$ ($\T_R$, respectively).
        For every path $P \in P(\T)$ containing $e_L$ (essentially the same calculation also shows the same upper bound for paths containing $e_R$), we have
        \begin{align*}
            \sum_{e \in P} \frac{1}{W_e} & = \frac{1}{W_{e_L}} + \sqrt{\frac{\alpha_L}{\beta_L}} \sum_{e \in P \setminus \{e_L\}} \frac{1}{(W_L)_e} \tag*{by Equation~\eqref{eqn: wl}}\\
            & \leq \frac{1}{W_{e_L}} + \sqrt{\alpha_L\beta_L}\\
            & = \frac{1}{W_{e_L}} + \OPT_{\T_L} \leq \beta,
        \end{align*}
        where the last inequality and equality follow since $(W_L, \alpha_L, \beta_L)$ is a feasible solution to Program~\ref{program: weights} for $\T_L$.
        For every path $P \in P(\T)$ containing $e_L$ (essentially the same calculation also shows the same upper bound for paths containing $e_R$), we have
        \begin{align*}
            \sum_{e \in \overline{P}} W_e & = W_{e_R}  + \sqrt{\frac{\beta_L}{\alpha_L}} \sum_{e \in \overline{P} \setminus \{e_R\}} (W_L)_e \tag*{by Equation~\eqref{eqn: wr}}\\
            & \leq  W_{e_R} + \sqrt{\alpha_L\beta_L}\\
            & = W_{e_R} + \OPT_{\T_L} \leq \alpha,
        \end{align*}
        where the last inequality and equality follow since $(W_L, \alpha_L, \beta_L)$ is a feasible solution to Program~\ref{program: weights} for $\T_L$.	Thus all constraints of Program~\ref{program: weights} for $\T$ are satisfied with these values of $\alpha, \beta$ and the weighting scheme $W$. Hence,
        \begin{align}\label{eqn: opt upper bound sqrt alpha beta}
            &\OPT_\T \leq \sqrt{\alpha\beta} \nonumber \\
            &\;\; = \sqrt{\max\left\{\OPT_{\T_L} + W_{e_R}, \OPT_{\T_R} + W_{e_L}\right\} \cdot \max\left\{\OPT_{\T_L} + \frac{1}{W_{e_L}}, \OPT_{\T_R} + \frac{1}{W_{e_R}}\right\}}.
        \end{align}
        Finally, observe from our choices of $W_{e_L}$ and $W_{e_R}$ from Definition~\ref{def:weight-assignment} that
        \begin{align*}
            \frac{1}{W_{e_L}} &= \frac{2}{\OPT_{\T_L} - \OPT_{\T_R} + \sqrt{(\OPT_{\T_L} - \OPT_{\T_R})^2 + 4}} \\
            &= \frac{2(\OPT_{\T_L} - \OPT_{\T_R} - \sqrt{(\OPT_{\T_L} - \OPT_{\T_R})^2 + 4})}{(\OPT_{\T_L} - \OPT_{\T_R})^2 - (\OPT_{\T_L} - \OPT_{\T_R})^2 - 4} \\
            &= \frac{\OPT_{\T_R} - \OPT_{\T_L} + \sqrt{(\OPT_{\T_L} - \OPT_{\T_R})^2 + 4}}{2} = W_{e_R}.
        \end{align*}
        Hence Equation~\eqref{eqn: opt upper bound sqrt alpha beta} implies
        \begin{align*}
            \OPT_{\T} &\leq \max\left\{\OPT_{\T_L} + W_{e_R}, \OPT_{\T_R} + W_{e_L}\right\} \\
            &= \frac{\OPT_{\T_L} + \OPT_{\T_R} + \sqrt{(\OPT_{\T_L} - \OPT_{\T_R})^2 + 4}}{2},
        \end{align*}
        where the last equality follows from our choices of $W_{e_L}$ and $W_{e_R}$ from Definition~\ref{def:weight-assignment}.
        This completes the proof.
    \end{proof}

    We now show that this weight assignment is optimal.

    \begin{lemma}
        \label{lem:complexity-lb}
        Let $\T$ be a non-trivial decision tree, and let $L$ and $R$ be the two children nodes of the root node. Let $\T_L$ and $\T_R$ be the subtrees of $\T$ rooted at $L$ and $R$, respectively. Then,
        \[\OPT_\T \geq \frac{\OPT_{\T_L} + \OPT_{\T_R} + \sqrt{(\OPT_{\T_L} - \OPT_{\T_R})^2 + 4}}{2}.\]
    \end{lemma}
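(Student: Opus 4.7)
The plan is to show that for every feasible solution $(W,\alpha,\beta)$ of Program~\ref{program: weights} for $\T$, one has $\sqrt{\alpha\beta}\ge \max(p+u,\,1/p+v)$, where $u=\OPT_{\T_L}$, $v=\OPT_{\T_R}$, and $p=\sqrt{W_{e_R}/W_{e_L}}$, and then to minimize the right-hand side over $p>0$. Setting $a=W_{e_L}$, $b=W_{e_R}$, I would consider the restriction $W|_{\T_L}$ together with the (within-$\T_L$) path maxima $\alpha_L=\max_{P'\in P(\T_L)}\sum_{e\in\overline{P'}}W_e$ and $\beta_L=\max_{P'\in P(\T_L)}\sum_{e\in P'}1/W_e$. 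Since $(W|_{\T_L},\alpha_L,\beta_L)$ is automatically feasible for Program~\ref{program: weights} for $\T_L$, one gets $\sqrt{\alpha_L\beta_L}\ge \OPT_{\T_L}=u$, hence $\alpha_L\beta_L\ge u^2$; the analogous bound $\alpha_R\beta_R\ge v^2$ holds through $\T_R$.

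The next step is to observe that every path $P\in P(\T)$ whose first edge is $e_L$ splits as $P=\{e_L\}\cup P'$ with $P'\in P(\T_L)$, and a quick verification shows that $\overline{P}=\{e_R\}\cup\overline{P'}$. Taking the max over such $P$ gives $\alpha\ge b+\alpha_L$ and $\beta\ge 1/a+\beta_L$, so
\[\alpha\beta \ge (b+\alpha_L)\left(\frac{1}{a}+\beta_L\right) = \frac{b}{a}+\left(b\beta_L+\frac{\alpha_L}{a}\right)+\alpha_L\beta_L.\]
Applying AM-GM to the middle bracket yields $b\beta_L+\alpha_L/a\ge 2\sqrt{(b/a)\alpha_L\beta_L}\ge 2pu$, and combining with $\alpha_L\beta_L\ge u^2$ gives $\alpha\beta\ge p^2+2pu+u^2=(p+u)^2$. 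A symmetric argument through $\T_R$ gives $\alpha\beta\ge (1/p+v)^2$, so $\sqrt{\alpha\beta}\ge \max(p+u,\,1/p+v)$ as planned.

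Since $(W,\alpha,\beta)$ was arbitrary, $\OPT_\T\ge \inf_{p>0}\max(p+u,\,1/p+v)$. To finish, I would carry out a one-variable minimization: the two terms cross when $p-1/p=v-u$, i.e., $p^2-(v-u)p-1=0$, which gives $p=\tfrac{(v-u)+\sqrt{(v-u)^2+4}}{2}$ and consequently $p+u=\tfrac{u+v+\sqrt{(u-v)^2+4}}{2}$; this is exactly the asserted lower bound. The main obstacle I expect is purely notational, namely confirming the decomposition $\overline{P}=\{e_R\}\cup\overline{P'}$ rigorously in terms of edge-deviation; once that is in hand, the AM-GM chaining $(b+\alpha_L)(1/a+\beta_L)\ge(p+u)^2$ is the analytical heart of the proof and cleanly mirrors the canonical weight relation $W_{e_L}W_{e_R}=1$ from Definition~\ref{def:weight-assignment}, matching the upper bound of Lemma~\ref{lem:complexity-ub}.
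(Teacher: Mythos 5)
Your proof is correct, and it takes a genuinely different (and arguably cleaner) route than the paper's. The paper also starts from an optimal feasible solution and the same decompositions $\alpha = \max\{\alpha_L + W_{e_R},\, \alpha_R + W_{e_L}\}$, $\beta = \max\{\beta_L + 1/W_{e_L},\, \beta_R + 1/W_{e_R}\}$ (so your worry about $\overline{P} = \{e_R\}\cup\overline{P'}$ is a non-issue; it is exactly the deviating-edge bookkeeping the paper uses), but it then writes $\OPT_\T = \frac12[\beta/\gamma + \gamma\alpha]$ with $\gamma = \sqrt{\beta/\alpha}$ and lower-bounds each maximum by a convex combination with a cleverly pre-chosen weight $\delta = \frac12 + \frac{\OPT_{\T_L}-\OPT_{\T_R}}{2\sqrt{(\OPT_{\T_L}-\OPT_{\T_R})^2+4}}$, followed by AM--GM and a page of algebra. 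You instead bound $\alpha\beta \geq (W_{e_R}+\alpha_L)(1/W_{e_L}+\beta_L) \geq (p+\sqrt{\alpha_L\beta_L})^2 \geq (p+\OPT_{\T_L})^2$ via Cauchy--Schwarz/AM--GM on each branch separately, and reduce everything to the transparent one-variable problem $\inf_{p>0}\max(p+u,\,1/p+v)$, whose solution at the crossing point $p-1/p = v-u$ directly produces the stated expression. Your approach makes it evident where the formula comes from (the paper's $\delta$ is essentially the reverse-engineered tight weight for the same convexity step), at the mild cost of having to justify that the infimum of the max of an increasing and a decreasing function is attained at their crossing; all steps, including the feasibility of $(W|_{\T_L},\alpha_L,\beta_L)$ for the subtree program and the final algebra $p^*+u = \frac{u+v+\sqrt{(u-v)^2+4}}{2}$, check out.
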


    \begin{proof}
        Suppose we have a weight assignment $W$ and variables $\alpha, \beta > 0$, such that $W$, $\alpha$ and $\beta$ form a feasible solution to the weight optimization program for $\T$, and attain optimality. We define
        \begin{align*}
            \alpha_L &= \max_{P \in P(\T_L)} \sum_{e \in \overline{P}} W_e, & \beta_L &= \max_{P \in P(\T_L)} \sum_{e \in P} \frac{1}{W_e}, \\
            \alpha_R &= \max_{P \in P(\T_R)} \sum_{e \in \overline{P}} W_e, & \beta_R &= \max_{P \in P(\T_R)} \sum_{e \in P} \frac{1}{W_e}.
        \end{align*}
        Observe that $W|_{\T_L}$, $\alpha_L$ and $\beta_L$ give a feasible solution to Program~\ref{program: weights} for $\T_L$. Similarly, $W|_{\T_R}$, $\alpha_R$ and $\beta_R$ give a feasible solution to Program~\ref{program: weights} for $\T_R$. This implies that
        \begin{equation}\label{eqn: optl optr alphalr betalr (payer. dutch joke hehehe)}
            \sqrt{\alpha_L\beta_L} \geq \OPT_{\T_L}, \qquad \sqrt{\alpha_R\beta_R} \geq \OPT_{\T_R}.
        \end{equation}
        Furthermore, we have
        \begin{align}
            \alpha &= \max_{P \in P(\T)} \sum_{e \in \overline{P}} W_e = \max\left\{\max_{P \in P(\T_L)} \sum_{e \in \overline{P}} W_e + W_{e_R}, \max_{P \in P(\T_R)} \sum_{e \in \overline{P}} W_e + W_{e_L}\right\}\nonumber \\
            &= \max\left\{\alpha_L + W_{e_R}, \alpha_R + W_{e_L}\right\}, \label{eqn: alpha opt quest}
        \end{align}
        and similarly
        \begin{align}
            \beta &= \max_{P \in P(\T)} \sum_{e \in P} \frac{1}{W_e} = \max\left\{\max_{P \in P(\T_L)} \sum_{e \in P} \frac{1}{W_e} + \frac{1}{W_{e_L}}, \max_{P \in P(\T_R)} \sum_{e \in P} \frac{1}{W_e} + \frac{1}{W_{e_R}}\right\}\nonumber \\
            &= \max\left\{\beta_L + \frac{1}{W_{e_L}}, \beta_R + \frac{1}{W_{e_R}}\right\}. \label{eqn: beta opt quest}
        \end{align}
        Finally, let $\gamma = \sqrt{\beta/\alpha}$, and observe that
        \begin{equation}\label{eqn: opt alpha beta gamma}
            \OPT_\T = \sqrt{\alpha\beta} = \frac{\beta}{\gamma} = \gamma\alpha = \frac12\left[\frac{\beta}{\gamma} + \gamma\alpha\right].
        \end{equation}
        Now, let
        \begin{equation}\label{eqn: delta value}
            \delta = \frac12 + \frac{\OPT_{\T_L} - \OPT_{\T_R}}{2\sqrt{(\OPT_{\T_L} - \OPT_{\T_R})^2 + 4}},
        \end{equation}
        and observe that $\delta \in [0,1]$. Thus,
        \begin{align*}
            \OPT_{\T} &= \frac12\left[\frac{\beta}{\gamma} + \gamma\alpha\right] \tag*{by Equation~\eqref{eqn: opt alpha beta gamma}}\\
            &= \frac12\left[\max\left\{\frac{\beta_L}{\gamma} + \frac{1}{\gamma W_{e_L}}, \frac{\beta_R}{\gamma} + \frac{1}{\gamma W_{e_R}}\right\} + \max\left\{\gamma\alpha_L + \gamma W_{e_R}, \gamma\alpha_R + \gamma W_{e_L}\right\}\right] \tag*{by Equations~\eqref{eqn: alpha opt quest} and~\eqref{eqn: beta opt quest}}\\
            &\geq \frac12\left[\delta\left(\frac{\beta_L}{\gamma} + \frac{1}{\gamma W_{e_L}}\right) + (1-\delta)\left(\frac{\beta_R}{\gamma} + \frac{1}{\gamma W_{e_R}}\right)\right. \\
            &\qquad\qquad \left.+ \delta\left(\gamma\alpha_L + \gamma W_{e_R}\right) + (1-\delta)\left(\gamma\alpha_R + \gamma W_{e_L}\right)\right] \tag*{since $\max\{a,b\} \geq \delta a + (1-\delta)b$ as $\delta \in [0, 1]$}\\
            &= \frac12\left[\delta\left(\frac{\beta_L}{\gamma} + \gamma\alpha_L\right) + (1-\delta)\left(\frac{\beta_R}{\gamma} + \gamma\alpha_R\right)\right. \\
            &\qquad\qquad \left.+ \left(\frac{\delta}{\gamma W_{e_L}} + (1-\delta)\gamma W_{e_L}\right) + \left(\delta\gamma W_{e_R} + \frac{1-\delta}{\gamma W_{e_R}}\right)\right] \tag*{rearranging terms}\\
            &\geq \delta\sqrt{\alpha_L\beta_L} + (1-\delta)\sqrt{\alpha_R\beta_R} + \sqrt{\delta(1-\delta)} + \sqrt{\delta(1-\delta)} \tag*{since $(a+b)/2 \geq \sqrt{ab}$ for all $a,b \geq 0$}\\
            &\geq \delta\OPT_{\T_L} + (1-\delta)\OPT_{\T_R} + 2\sqrt{\delta(1-\delta)} \tag*{by Equation~\eqref{eqn: optl optr alphalr betalr (payer. dutch joke hehehe)}} \\
            &= \frac{\OPT_{\T_L} + \OPT_{\T_R}}{2} + \frac{(\OPT_{\T_L} - \OPT_{\T_R})^2}{2\sqrt{(\OPT_{\T_L} - \OPT_{\T_R})^2 + 4}} \\
            &\qquad\qquad + 2\sqrt{\frac14 - \frac{(\OPT_{\T_L} - \OPT_{\T_R})^2}{4((\OPT_{\T_L} - \OPT_{\T_R})^2 + 4)}} \tag*{plugging the value of $\delta$ from Equation~\eqref{eqn: delta value}}\\
            &= \frac{\OPT_{\T_L} + \OPT_{\T_R}}{2} + \frac{(\OPT_{\T_L} - \OPT_{\T_R})^2}{2\sqrt{(\OPT_{\T_L} - \OPT_{\T_R})^2 + 4}} + \sqrt{\frac{4}{(\OPT_{\T_L} - \OPT_{\T_R})^2 + 4}}\\
            &= \frac{\OPT_{\T_L} + \OPT_{\T_R}}{2} + \frac{(\OPT_{\T_L} - \OPT_{\T_R})^2 + 4}{2\sqrt{(\OPT_{\T_L} - \OPT_{\T_R})^2 + 4}} \\
            &= \frac{\OPT_{\T_L} + \OPT_{\T_R} + \sqrt{(\OPT_{\T_L} - \OPT_{\T_R})^2 + 4}}{2},
        \end{align*}
        completing the proof.
    \end{proof}

    We can now combine both lemmas above into the following theorem, providing a recursive characterization of the optimal value of Program~\ref{program: weights} for all decision trees.
    \begin{theorem}
        \label{thm:weight-optimality}
        Let $\T$ be a non-trivial decision tree, and let $L$ and $R$ be the two children nodes of the root node. Let $\T_L$ and $\T_R$ be the subtrees of $\T$ rooted at $L$ and $R$, connected to the root node by edges $e_L$ and $e_R$, respectively. Then,
        \[
        \OPT_\T = \frac{\OPT_{\T_L} + \OPT_{\T_R} + \sqrt{(\OPT_{\T_L} - \OPT_{\T_R})^2 + 4}}{2}.
        \]
    \end{theorem}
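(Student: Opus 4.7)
The plan is simply to combine the matching bounds already established in Lemmas~\ref{lem:complexity-ub} and~\ref{lem:complexity-lb}. Lemma~\ref{lem:complexity-ub} shows that $\OPT_\T \leq \tfrac12\bigl(\OPT_{\T_L} + \OPT_{\T_R} + \sqrt{(\OPT_{\T_L} - \OPT_{\T_R})^2 + 4}\bigr)$ by exhibiting an explicit feasible solution to Program~\ref{program: weights}, namely the canonical weight assignment of Definition~\ref{def:weight-assignment} on the two root-incident edges together with rescaled optimal subtree solutions (the rescaling $\sqrt{\beta_L/\alpha_L}$ on $\T_L$, and likewise on $\T_R$, preserves $\sqrt{\alpha\beta}$ while equalizing $\alpha$ and $\beta$ within each subtree). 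Lemma~\ref{lem:complexity-lb} supplies the reverse inequality via a convexity argument: it takes any optimal solution, restricts it to each subtree to obtain feasible solutions with $\sqrt{\alpha_L\beta_L} \geq \OPT_{\T_L}$ and $\sqrt{\alpha_R\beta_R} \geq \OPT_{\T_R}$, replaces the two relevant $\max$ expressions by convex combinations with a carefully chosen parameter $\delta \in [0,1]$, and then applies AM-GM to the resulting paired terms. Since both inequalities go the same way, the equality in the theorem is immediate, and no further work is needed --- the whole technical content sits inside the two lemmas.

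There is no real obstacle in the step itself; the nontrivial content was the discovery of the right recursive formula, and in particular the observation that the two bounds actually match, which in turn forced the precise form of the canonical weights. Indeed, the specific expressions for $W_{e_L}$ and $W_{e_R}$ in Definition~\ref{def:weight-assignment} are exactly those that simultaneously satisfy $W_{e_L} W_{e_R} = 1$ and $\OPT_{\T_L} + W_{e_R} = \OPT_{\T_R} + W_{e_L}$, which is what is needed to make the two \emph{max} expressions in the upper-bound construction collapse to a single value; any deviation would leave slack in Lemma~\ref{lem:complexity-ub}, and the convexity argument in Lemma~\ref{lem:complexity-lb} is tight at exactly this choice.

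Finally, I would note two consequences worth recording alongside the theorem. First, together with the trivial base case $\OPT_\T = 0$ for a single-leaf tree, the recursion yields a bottom-up polynomial-time algorithm that computes $\OPT_\T$ exactly, along with an explicit optimal weight assignment (the canonical one applied at each internal node). Second, feeding this into Theorem~\ref{thm: weights in program give query upper bound} gives the quantum query upper bound $\Q(\leaff) = O(\OPT_\T)$ via both the span program and dual adversary routes developed in Section~\ref{sec: bt span program}, which is what is then used to derive the rank-depth and size bounds of Corollary~\ref{cor:ubs}.
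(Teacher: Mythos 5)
Your proof is correct and is exactly the paper's argument: Theorem~\ref{thm:weight-optimality} is obtained by combining the upper bound of Lemma~\ref{lem:complexity-ub} with the matching lower bound of Lemma~\ref{lem:complexity-lb}. Your added remarks (that the canonical weights are characterized by $W_{e_L}W_{e_R}=1$ together with $\OPT_{\T_L}+W_{e_R}=\OPT_{\T_R}+W_{e_L}$, and the algorithmic/query-complexity consequences) are accurate and consistent with how the paper uses the result.
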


    \begin{proof}
        The upper bound on follows from Lemma~\ref{lem:complexity-ub} and the lower bound follows from Lemma~\ref{lem:complexity-lb}.
    \end{proof}

    Observe that Definition~\ref{def:weight-assignment} can be used to recursively assign optimal weights to the edges of any given decision tree. We display this technique in two examples in Figure~\ref{fig:example-weight-assignment}. Note that the objective value of Program~\ref{program: weights} for a trivial decision tree (a single node) is 0, which provides the basis for our recursion.

    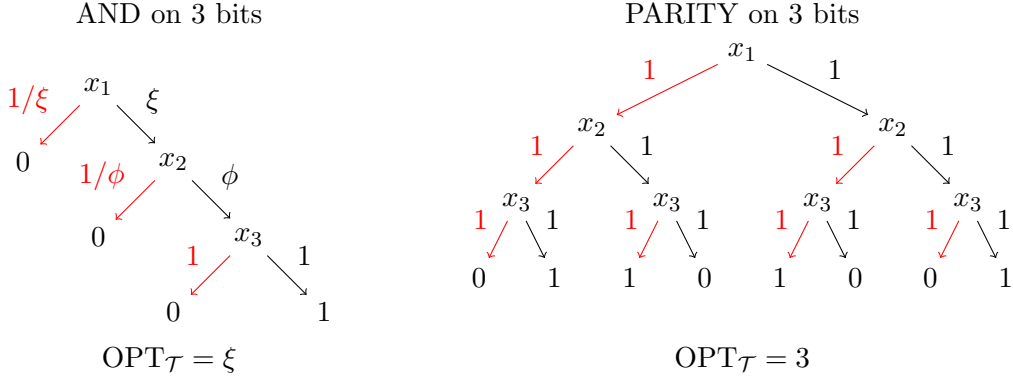
\begin{figure}[h!]
        \centering
        \begin{tabular}{ccc}
            AND on $3$ bits && PARITY on $3$ bits \\
            \begin{tikzpicture}[yscale=-1]
                \node (x1) at (0,0) {$x_1$};
                \node (x2) at (1,1) {$x_2$};
                \node (x3) at (2,2) {$x_3$};
                \node (r1) at (-1,1) {$0$};
                \node (r2) at (0,2) {$0$};
                \node (r3) at (1,3) {$0$};
                \node (a) at (3,3) {$1$};
                \draw[->] (x1) to node[above right] {$\xi$} (x2);
                \draw[->] (x2) to node[above right] {$\phi$} (x3);
                \draw[->,red] (x1) to node[above left] {$1/\xi$} (r1);
                \draw[->,red] (x2) to node[above left] {$1/\phi$} (r2);
                \draw[->,red] (x3) to node[above left] {$1$} (r3);
                \draw[->] (x3) to node[above right] {$1$} (a);
            \end{tikzpicture} & \hspace{2em} & \raisebox{1.2em}{\begin{tikzpicture}[yscale=-1]
                    \node (x1) at (0,0) {$x_1$};
                    \node (x2) at (2,1) {$x_2$};
                    \node (y2) at (-2,1) {$x_2$};
                    \node (a1) at (-3,2) {$x_3$};
                    \node (r1) at (-1,2) {$x_3$};
                    \node (r2) at (1,2) {$x_3$};
                    \node (a2) at (3,2) {$x_3$};

                    \node (b000) at (-3.5,3) {$0$};
                    \node (b001) at (-2.5,3) {$1$};
                    \node (b010) at (-1.5,3) {$1$};
                    \node (b011) at (-.5,3) {$0$};
                    \node (b100) at (.5,3) {$1$};
                    \node (b101) at (1.5,3) {$0$};
                    \node (b110) at (2.5,3) {$0$};
                    \node (b111) at (3.5,3) {$1$};
                    \draw[->] (x1) to node[above right] {$1$} (x2);
                    \draw[->,red] (x1) to node[above left] {$1$} (y2);
                    \draw[->] (x2) to node[above right] {$1$} (a2);
                    \draw[->,red] (x2) to node[above left] {$1$} (r2);
                    \draw[->] (y2) to node[above right] {$1$} (r1);
                    \draw[->,red] (y2) to node[above left] {$1$} (a1);

                    \draw[->,red] (a1) to node[above left] {$1$} (b000);
                    \draw[->] (a1) to node[above right] {$1$} (b001);
                    \draw[->,red] (r1) to node[above left] {$1$} (b010);
                    \draw[->] (r1) to node[above right] {$1$} (b011);
                    \draw[->,red] (r2) to node[above left] {$1$} (b100);
                    \draw[->] (r2) to node[above right] {$1$} (b101);
                    \draw[->,red] (a2) to node[above left] {$1$} (b110);
                    \draw[->] (a2) to node[above right] {$1$} (b111);

            \end{tikzpicture}} \\
            $\OPT_{\T} = \xi$ && $\OPT_{\T} = 3$
        \end{tabular}
        \caption{Examples of optimal weight assignments for two different decision trees. The red and black edges indicate the edges taken when the output of the query is $0$ and $1$, respectively, and the edge labels represent the weights. Left: Canonical weight assignment of the decision tree for the AND function on $3$ bits, where $\phi = \frac{1 + \sqrt{5}}{2}$, and $\xi = \frac{\phi+\sqrt{\phi+5}}{2}$. The objective value is $\xi$. Right: Canonical weight assignment of the decision tree for PARITY on $3$ bits, with optimal value $3$.}
        \label{fig:example-weight-assignment}
    \end{figure}

    In Corollary~\ref{cor:ubs} we exhibit two ways in which we can conveniently upper bound the optimum value of Program~\ref{program: weights} in terms of well-studied measures of the underlying decision tree.

    \begin{proof}[Proof of Corollary~\ref{cor:ubs}]
        Theorem~\ref{thm: weights in program give query upper bound} implies that it suffices to prove both of the required upper bounds on $\OPT_\T$ rather than $\Q(f)$.
        The rank-depth bound follows directly from the bound $\OPT_\T \leq 2\sqrt{G(\T) \cdot \textnormal{depth}(T)}$ derived in \cite[Theorem~2]{BT20}, and the equality $G(\T) = \rank(\T)$ from Claim~\ref{claim: gcoloring equals rank}.

        For proving the size bound, we use induction to show that $\OPT_\T \leq \sqrt{2\dtsize(\T)}$. First observe that it is true for the trivial decision tree. Next, suppose that it is true for the left and right subtrees $\T_L$ and $\T_R$ of $\T$. That is,
        \[
        \OPT_{\T_L} \leq \sqrt{2\dtsize(\T_L)}, \qquad \text{and} \qquad \OPT_{\T_R} \leq \sqrt{2\dtsize(\T_R)}.
        \]
        Then, by Theorem~\ref{thm:weight-optimality}, the square of the optimal value of Program~\ref{program: weights} for $\T$ equals
        \begin{align*}
            \OPT_\T^2 &= \frac{(\OPT_{\T_L} + \OPT_{\T_R})^2 + (\OPT_{\T_L} - \OPT_{\T_R})^2 + 4}{4} \\
            &\qquad+ \frac{2(\OPT_{\T_L} + \OPT_{\T_R})\sqrt{(\OPT_{\T_L} - \OPT_{\T_R})^2 + 4}}{4} \\
            &= \frac{2(\OPT_{\T_L}^2 + \OPT_{\T_R}^2) + 4}{4} + \frac{2(\OPT_{\T_L} + \OPT_{\T_R})\sqrt{(\OPT_{\T_L} - \OPT_{\T_R})^2 + 4}}{4}\\
            &\leq \frac{2(\OPT_{\T_L}^2 + \OPT_{\T_R}^2) + 4 + (\OPT_{\T_L} + \OPT_{\T_R})^2 + (\OPT_{\T_L} - \OPT_{\T_R})^2 + 4}{4} \tag*{since $2ab \leq a^2 + b^2$ for all $a,b \geq 0$}\\
            &= \OPT_{\T_L}^2 + \OPT_{\T_R}^2 + 2 \\
            &\leq 2\dtsize(\T_L) + 2\dtsize(\T_R) + 2 = 2\dtsize(\T).
        \end{align*}
        This completes the proof.
    \end{proof}

    Next, we note that the rank-depth and size bounds from Corollary~\ref{cor:ubs} are incomparable, as witnessed by the examples displayed in Figure~\ref{fig:bounds-separation}. In particular, our bounds are strictly stronger than those given by earlier works (Theorem~\ref{thm: linlin}) for the second tree in the figure.

    \begin{figure}[H]
        \centering
        \begin{tabular}{rlcrl}
            \multicolumn{2}{c}{Complete binary tree} & \hspace{4em} & \multicolumn{2}{c}{Balanced binary-AND tree} \\
            \multicolumn{2}{c}{\raisebox{1.6em}{\begin{tikzpicture}[yscale=-1]
                        \draw[pattern=north west lines, pattern color=gray] (0,0) to (1,1) to (-1,1) to cycle;
                        \draw[decorate, decoration={calligraphic brace,mirror,raise=3pt}] (-1,1) to node[below=4pt] {$n$ leaves} (1,1);
                        \draw[decorate, decoration={calligraphic brace,mirror,raise=3pt}] (-1,0) to node[left=4pt] {$\begin{array}{c}\text{depth}\\\log n\end{array}$} (-1,1);
            \end{tikzpicture}}} && \multicolumn{2}{c}{\begin{tikzpicture}[yscale=-1]
                    \draw (0,0) to (-.5,.5);
                    \draw (0,0) to (3.1,3.1);
                    \draw[pattern=north west lines, pattern color=gray] (-.5,.5) to (-1.5,1.5) to (.5,1.5) to cycle;
                    \draw[decorate, decoration={calligraphic brace,mirror,raise=5pt}] (-1.5,1.5) to node[below=4pt] {$n$ leaves} (.5,1.5);
                    \draw[decorate, decoration={calligraphic brace,raise=5pt}] (-1.5,1.5) to node[left=4pt] {$\begin{array}{c}\text{depth}\\\log n\end{array}$} (-1.5,.5);
                    \foreach \x in {0,.1,.2,...,3} {
                        \draw (\x,\x) to ({\x-.1},{\x+.1});
                    }
                    \draw[decorate, decoration={calligraphic brace,mirror,raise=7pt}] (.1,.1) to node[below left=-1em] {\rotatebox{-45}{$n$ leaves}} (3,3);
            \end{tikzpicture}} \\
            $\OPT_{\T}$\hspace{-.9em} & $= O(\log n)$ && $\OPT_{\T}$\hspace{-.9em} & $= O(\sqrt{n})$ \\
            $\sqrt{\rank(\T)\text{depth}(\T)}$\hspace{-.9em} & $= O(\log n)$ && $\sqrt{\rank(\T)\text{depth}(\T)}$\hspace{-.9em} & $= O(\sqrt{n\log n})$ \\
            $\sqrt{\dtsize(\T)}$\hspace{-.9em} & $= O(\sqrt{n})$ && $\sqrt{\dtsize(\T)}$\hspace{-.9em} & $= O(\sqrt{n})$
        \end{tabular}
        \caption{Examples showing separations between the two bounds derived in Corollary~\ref{cor:ubs}. The shaded regions represent complete binary trees. In the left example the rank-depth bound beats the size bound, whereas in the right example the opposite is true.}
        \label{fig:bounds-separation}
    \end{figure}
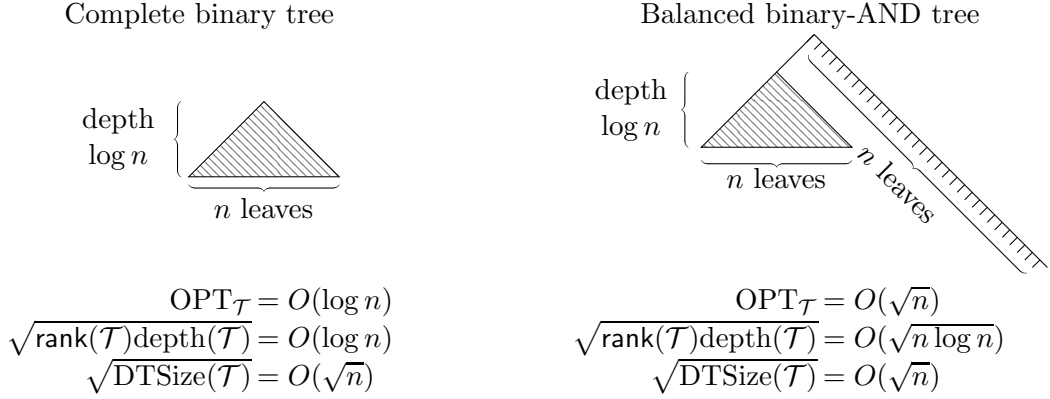

    Finally, we note that we can obtain analogous quantum query upper bounds to those in Corollary~\ref{cor:ubs} when the initial tree is a randomized one.

    \begin{corollary}\label{cor: quantum upper bound in terms of leaves}
        Let $f \subseteq \zone^n \times \calR$ be a relation. Then,
        \[
        \Q_{2/5}(f) = O(\sqrt{\rdtsize(f)}).
        \]
        Moreover, let $\T$ be a randomized decision tree computing $f$ with depth $T$ and randomized rank $G$. Then,
        \[
        \Q_{2/5}(f) = O\rbra{\sqrt{TG}}.
        \]
    \end{corollary}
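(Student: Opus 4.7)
The plan is to mirror the argument used in the proof of Theorem~\ref{thm: qq upper bound in terms of randomized rank}, but now invoking Corollary~\ref{cor:ubs} in place of Theorem~\ref{thm: qq upper bound in terms of rank}. Concretely, let $\T$ be a randomized decision tree of depth $T$ and randomized rank $G$ that computes $f$ with error at most $1/3$, with underlying distribution $\mu$ over deterministic decision trees $\T'$. The quantum query algorithm first samples $\T' \sim \mu$, and then runs the quantum query algorithm guaranteed by Corollary~\ref{cor:ubs} on the sampled deterministic tree $\T'$, boosted to success probability $9/10$. Finally, the algorithm outputs the label of the leaf reached.

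For the cost analysis, note that every $\T'$ in the support of $\T$ is deterministic, and hence by the definitions in Section~\ref{sec: prelims} we have $\rank(\T') \leq G$, $\textnormal{depth}(\T') \leq T$, and $\dtsize(\T') \leq \rdtsize(\T)$. Corollary~\ref{cor:ubs} applied to $\T'$ therefore yields a quantum query algorithm for $\leaff_{\T'}$ of cost $O\bigl(\sqrt{\rank(\T')\textnormal{depth}(\T')}\bigr) = O(\sqrt{TG})$, and also of cost $O\bigl(\sqrt{\dtsize(\T')}\bigr) = O\bigl(\sqrt{\rdtsize(\T)}\bigr)$. Since $\leaff_{\T'}$ is a (total) function, we may reduce the error of this algorithm to $1/10$ using a constant number of repetitions plus majority voting, preserving the asymptotic query complexity.

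For the correctness analysis, with probability at least $2/3$ over the choice of $\T' \sim \mu$, the sampled tree correctly computes $f$ on the input $x$ in question, meaning that the label of the leaf reached by $x$ in $\T'$ is a valid output. Conditioned on this event, the quantum subroutine correctly identifies this leaf with probability at least $9/10$. Thus the algorithm outputs a valid value for $f(x)$ with probability at least $(2/3)(9/10) = 3/5$, so the error is at most $2/5$, as required. Minimizing $\rdtsize(\T)$ over all randomized decision trees $\T$ computing $f$ to error $1/3$ gives the first bound; minimizing over trees of depth $T$ and randomized rank $G$ gives the second.

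The only subtle point is the standard quantum error-reduction step, which relies on $\leaff_{\T'}$ being a function rather than a relation; this is the sole reason we lose a constant in the success probability rather than being able to extract a full bounded-error algorithm directly. Beyond that, the argument is entirely a bookkeeping exercise combining Corollary~\ref{cor:ubs} with a single sample from $\mu$.
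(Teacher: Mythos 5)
Your proposal is correct and follows essentially the same route as the paper: sample a deterministic tree from the distribution underlying $\T$, invoke Corollary~\ref{cor:ubs} on the sampled tree, boost its success probability to $9/10$ via standard error reduction (valid precisely because $\leaff$ is a function), and multiply the two success probabilities to get $3/5$. The bookkeeping with $\rank(\T') \leq G$, $\textnormal{depth}(\T') \leq T$, and $\dtsize(\T') \leq \rdtsize(\T)$ is also exactly as in the paper.
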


    The proof of Corollary~\ref{cor: quantum upper bound in terms of leaves} follows along similar lines as the proof of Theorem~\ref{thm: qq upper bound in terms of randomized rank}, but we include it below for completeness.

    \begin{proof}
        Let $L = \rdtsize(f)$, and let $\T$ be a randomized decision tree witnessing this.
        For a decision tree $\T_\mu$ in the support of $\T$, define $f_\mu : \zone^n \to \calR$ be the function computed by $\T_\mu$.
        Corollary~\ref{cor:ubs} implies $\Q(\tilde{f_\mu}) = O\rbra{\sqrt{\dtsize(\T_\mu)}} = O\rbra{\sqrt{L}}$.
        By standard error-reduction techniques, $\Q_{9/10}(f_\mu) = O(\sqrt{L})$.

        A quantum query algorithm for $f$ is as follows: Sample $\T_\mu$ from the support of $\T$ according to its underlying distribution, and compute $f_\mu$ using the above-mentioned algorithm.
        The cost of this algorithm is $O(\sqrt{L})$ and the success probability is at least $(9/10)\cdot(2/3) = 3/5$ for all inputs $x \in D_f$, which proves the first query upper bound.

        A similar argument shows the second query upper bound as well.
    \end{proof}

    \paragraph*{Acknowledgements} We thank Ronald de Wolf and the anonymous FSTTCS reviewers for helpful comments.

    AC: This work was done while the author was at the Institute for Logic, Language, and Computation, University of Amsterdam and QuSoft. Supported by a Simons-CIQC postdoctoral fellowship through NSF QLCI Grant No.\ 2016245.

    NM: This work was done while the author was at QuSoft and CWI, Amsterdam, and supported by the Dutch Research Council (NWO/OCW), as part of the Quantum Software Consortium programme (project number 024.003.037), and through QuantERA ERA-NET Cofund project QuantAlgo (project number 680-91-034).

    SP: This work was done while the author was at QuSoft and CWI, Amsterdam, and supported by Robert Bosch Stiftung and by NWO Gravitation grants NETWORKS and QSC, and EU grant QuantAlgo.

    \bibliographystyle{alphaurl}
    \bibliography{bibo-final}

    \appendix

    \section{Decision-Tree Size and Formula Size}\label{app: sizes}
    We observe in this section that the formula size of a Boolean function $f : \zone^n \to \zone$ is at most a constant times the decision-tree size of $f$. While the proof is simple, we are unaware of such a statement appearing explicitly in the literature.

    \begin{definition}[Boolean Formulas]
        A Boolean formula is a rooted binary tree, where internal nodes have in-degree 2 and are labeled with $\AND$ or $\OR$, and leaf nodes are labeled by variables or their negations ($x_i$ or $\overline{x_i}$). A Boolean formula computes a Boolean function in the natural way.
    \end{definition}
    The size of a Boolean formula is the number of nodes in it.
    Let $L(f)$ denote the minimum size of a Boolean formula computing $f$.
    \begin{claim}\label{claim: formula size dtsize}
        Let $f : \zone^n \to \zone$ be a Boolean function. Then,
        \[
        L(f) \leq 5\dtsize(f).
        \]
    \end{claim}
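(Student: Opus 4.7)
The plan is to prove by induction on the number of nodes in a decision tree $\T$ that there exists a Boolean formula of size at most $5\cdot\dtsize(\T)$ computing the function $f_\T$ computed by $\T$. Applying this to an optimal decision tree for $f$ then yields $L(f) \le 5\cdot\dtsize(f)$.

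For the base case, $\T$ is a single leaf labeled by some $b \in \{0,1\}$, so $f_\T$ is a constant. Since the definition of a Boolean formula does not permit constants at leaves, we represent the constant function $0$ by the formula $x_1 \wedge \overline{x_1}$ and the constant function $1$ by $x_1 \vee \overline{x_1}$; each has exactly $3$ nodes (two literal leaves and one internal gate), and $3 \le 5 \cdot 1$.

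For the inductive step, suppose the root of $\T$ queries the variable $x_i$, with left and right subtrees $\T_0$ and $\T_1$ (reached on $x_i=0$ and $x_i=1$, respectively) of sizes $s_0$ and $s_1$, so that $\dtsize(\T) = s_0 + s_1 + 1$. By the inductive hypothesis, there exist formulas $F_0$ and $F_1$ computing $f_{\T_0}$ and $f_{\T_1}$ with $L(F_0) \le 5s_0$ and $L(F_1) \le 5s_1$. Now set
\[
F \;=\; \bigl(\overline{x_i}\wedge F_0\bigr)\;\vee\;\bigl(x_i\wedge F_1\bigr),
\]
which clearly computes $f_\T$. The construction adds exactly $5$ new nodes on top of $F_0$ and $F_1$: two literal leaves ($x_i$ and $\overline{x_i}$), two $\AND$ gates, and one $\OR$ gate. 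Hence
\[
L(F) \;=\; L(F_0)+L(F_1)+5 \;\le\; 5s_0+5s_1+5 \;=\; 5\cdot\dtsize(\T),
\]
completing the induction.

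There is essentially no obstacle here; the only mild subtlety is the base case, where the absence of constants at leaves forces us to pay $3$ nodes per constant subtree, but this is comfortably absorbed by the constant $5$ in the bound and propagates correctly through the recursion (indeed, any constant factor at least $3$ would suffice for the base case, and the inductive step dictates the final constant).
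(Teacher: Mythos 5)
Your proof is correct and takes essentially the same approach as the paper: the identical construction $(\overline{x_i}\wedge F_0)\vee(x_i\wedge F_1)$ adding $5$ nodes per internal decision-tree node, with the same recurrence. Your treatment of the base case is in fact slightly more careful than the paper's, which informally allows a ``size-1 formula $b$'' for a constant even though its formula definition only permits literals at leaves; your $x_1\wedge\overline{x_1}$ and $x_1\vee\overline{x_1}$ workaround costs $3\le 5$ nodes and is absorbed by the constant.
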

    \begin{proof}
        Let $\T$ be a decision tree computing $f$. If $\T$ is a single leaf, say $b$, then the size-1 formula $b$ computes $f$. Suppose $\T$ is of the form
        \begin{figure}[H]
            \centering
            \begin{tikzpicture}[yscale=-1]
                \node (T) at (-2,.5) {$\T =$};
                \node (0) at (0,0) {$x_i$};
                \node (L) at (-1,1) {$\T_L$};
                \node (R) at (1,1) {$\T_R$};
                \draw[->] (0) to node[above left] {$0$} (L);
                \draw[->] (0) to node[above right] {$1$} (R);
            \end{tikzpicture}
        \end{figure}
        The following formula can then easily be seen to compute $f$.
        \begin{figure}[H]
            \centering
            \begin{tikzpicture}[yscale=-1]
                \node (0) at (0,0) {$\vee$};
                \node (L) at (-1,1) {$\wedge$};
                \node (R) at (1,1) {$\wedge$};
                \node (LL) at (-1.5,2) {$\overline{x_i}$};
                \node (LR) at (-.5,2) {$\T_L$};
                \node (RL) at (.5,2) {$x_i$};
                \node (RR) at (1.5,2) {$\T_R$};
                \draw[->] (0) to node[above left] {} (L);
                \draw[->] (0) to node[above right] {} (R);
                \draw[->] (L) to node[above left] {} (LL);
                \draw[->] (L) to node[above right] {} (LR);
                \draw[->] (R) to node[above left] {} (RL);
                \draw[->] (R) to node[above right] {} (RR);
            \end{tikzpicture}
        \end{figure}
        Recursively constructing trees for $\T_L$ and $\T_R$, we obtain the recurrence
        \[
        L(f) \leq 5 + \dtsize(\T_L) + \dtsize(\T_R).
        \]
        Using induction on the size of $\T$, solving this recurrence yields the claim.
    \end{proof}

    \section{Another Weight Assignment}\label{app: another weight sqrt llogl}
    In this section we give a different weight assignment to the weights in Program~\ref{program: weights} to obtain an objective value of $O(\sqrt{\dtsize(\T) \log \dtsize(\T)})$ for a deterministic decision tree $\T$. While this bound is weaker than the second bound in Corollary~\ref{cor:ubs} by a logarithmic factor, we choose to include it since we think the weight assignment may be of independent interest. Moreover, these weights are much easier to compute than those in Section~\ref{sec: weight assignments} since the weights here have a neat closed-form expression, as opposed to the iterative construction in Section~\ref{sec: weight assignments}.

    We define the weight assignment $\cbra{W'_e : e~\textnormal{is an edge in}~\T}$ as follows. Consider any vertex $v$ of $\T$, and let $\T_v$, $\T_{v, L}$ and $\T_{v, R}$ denote the subtrees of $\T$ rooted at $v$, the left child of $v$ and the right child of $v$, respectively.
    For the two outgoing edges of $v$, define the corresponding weights as in the following figure:
    \begin{figure}[H]
        \centering
        \begin{tikzpicture}[yscale=-1]
            \node (0) at (0,0) {$v$};
            \node (L) at (-1,1) {$\T_{v, L}$};
            \node (R) at (1,1) {$\T_{v, R}$};
            \draw[->] (0) to node[above left] {$a$} (L);
            \draw[->] (0) to node[above right] {$b$} (R);
        \end{tikzpicture},
    \end{figure}
    where
    \begin{equation}\label{eqn: Prover Delayer inspired weights}
        a = \frac{1}{\log\frac{\dtsize(\T_v)}{\dtsize(\T_{v, L})}}, \qquad b = \frac{1}{\log\frac{\dtsize(\T_v)}{\dtsize(\T_{v, R})}}.
    \end{equation}

    For a leaf $\ell$ of $\T$, we have
    \begin{equation}\label{eqn: pdgame nonscaled positive witness size}
        \sum_{v \in P_\ell} \frac{1}{W'_{e(v, N(v))}} = \sum_{v \in P_\ell}\log\frac{\dtsize(\T_v)}{\dtsize(\T_{N(v)})} = \log\prod_{v \in P_\ell}\frac{\dtsize(\T_v)}{\dtsize(\T_{N(v)})}
    \end{equation}
    where the summation excludes the leaf $\ell$ and $N(v)$ denotes the child of $v$ that is on the path $P_\ell$.
    The product is telescopic and equals $\log \dtsize(\T)$.
    With the same notation and using $\overline{N(v)}$ to denote the child of $v$ that is not on the path $P_\ell$, we also have

    \begin{align}
        \label{eqn: pdgame nonscaled negative witness size}
        \sum_{e \in \overline{P_\ell}} W'_e & = \sum_{v \in P_\ell} \frac{1}{\log\frac{\dtsize(\T_v)}{\dtsize(\T_{\overline{N(v)}})}} = \sum_{v \in P_\ell} \frac{1}{\log\frac{1 + \dtsize(\T_{N(v)}) + \dtsize(\T_{\overline{N(v)}})}{\dtsize(\T_{\overline{N(v)}})}}\\
        & \leq \sum_{v \in P_\ell} \frac{1}{\log\rbra{1 + \frac{\dtsize(\T_{N(v)})}{\dtsize(\T_{\overline{N(v)}})}}}\nonumber\\
        & \leq \sum_{v \in P_\ell} \frac{1 + \frac{\dtsize(\T_{N(v)})}{\dtsize(\T_{\overline{N(v)}})}}{\frac{\dtsize(\T_{N(v)})}{\dtsize(\T_{\overline{N(v)}})}}\tag*{since $\frac{1}{\log(1 + x)} \leq \frac{1+x}{x}$ for all $x > 0$}\nonumber\\
        & \leq \sum_{v \in P_\ell}\frac{\dtsize(\T_v)}{\dtsize(\T_{N(v)})}\nonumber\\
        & = \sum_{v \in P_\ell} 1 + \frac{\dtsize(\T_v) - \dtsize(\T_{N(v)})}{\dtsize(\T_{N(v)})}\nonumber\\
        & \leq \sum_{v \in P_\ell} 1 + \dtsize(\T_v) - \dtsize(\T_{N(v)})\nonumber\\
        & = |P_\ell| + \dtsize(\T) - 1\nonumber\\
        & \leq 2\dtsize(\T)\nonumber.
    \end{align}

    Thus, this setting of weights, along with the settings
    \[
    \alpha = 2\dtsize(\T), \qquad \beta = \log \dtsize(\T),
    \]
    gives an objective value of $O(\sqrt{\dtsize(\T) \log \dtsize(\T)})$ Program~\ref{program: weights} for $\T$.
\end{document}